\newif\ifjournal
\tikzstyle{input} = [fill=black, isosceles triangle, rotate=90, scale=.3, draw]
\tikzstyle{buffer} = [fill=white, regular polygon, regular polygon sides=3, scale=.6, draw, thick]
\tikzstyle{operator} = [rectangle, draw, rounded corners=2pt, rotate=90, thick]
\tikzstyle{outcirc} = [rectangle, draw, rounded corners=2pt, thick]
\newtheorem{theorem}{Theorem}[section]
\newtheorem{corollary}[theorem]{Corollary}
\newtheorem{definition}[theorem]{Definition}
\newtheorem{lemma}[theorem]{Lemma}
\newtheorem{observation}[theorem]{Observation}
\newcommand{\NN}{\mathbb{N}}
\newcommand{\BO}{\mathcal{O}}
\newcommand{\BB}[1]{\mathbb{B}^{#1}}
\newcommand{\BBM}[1]{\mathbb{B}_\metas^{#1}}
\newcommand{\outt}{\mathrm{out}}
\newcommand{\rg}{\operatorname{rg}}
\newcommand{\ANDD}{\operatorname{AND}}
\newcommand{\NAND}{\operatorname{NAND}}
\newcommand{\NOR}{\operatorname{NOR}}
\newcommand{\ORR}{\operatorname{OR}}
\newcommand{\rgmax}{\operatorname{max^{\rg}}}
\newcommand{\rgmin}{\operatorname{min^{\rg}}}
\newcommand{\rgmaxM}{\operatorname{max^{\rg}_{\metas}}}
\newcommand{\rgminM}{\operatorname{min^{\rg}_{\metas}}}
\newcommand{\sel}{\operatorname{sel}}
\newcommand{\xmux}{\textsc{xmux}}
\newcommand{\validrg}[1]{\mathcal{S}^{#1}_{\rg}}
\newcommand{\metas}{\textnormal{\texttt{M}}}
\newcommand{\twosort}{\operatorname{2-sort}}
\newcommand{\foursort}{\operatorname{4-sort}}
\newcommand{\sevensort}{\operatorname{7-sort}}
\newcommand{\tensortc}{\operatorname{10-sort_\#}}
\newcommand{\tensortd}{\operatorname{10-sort_d}}
\newcommand{\ppc}{\operatorname{PPC}}
\newcommand{\parity}{\operatorname{par}}
\newcommand{\res}{\operatorname{res}}
\newcommand{\binary}{\operatorname{Bin-comp}}
\newcommand{\bigstarr}{\mathop{\raisebox{-.7pt}{\ensuremath{\mathlarger{\mathlarger{\mathlarger{*}}}}}}}
\DeclareMathOperator*{\bigdiamond}{\scalerel*{\diamond}{\textstyle\sum}}
\DeclareMathOperator*{\bigdMop}{~\bigdiamond{\!}\raisebox{-4pt}{\scriptsize
\metas}}
\newcommand{\bigdM}{\!\!\!\bigdMop}
\newcommand{\OP}{\oplus}
\newcommand{\inp}{d}
\newcommand{\outp}{\pi}
\newcommand{\repg}[1]{\langle #1 \rangle}
\def\dash---{\kern.16667em---\penalty\exhyphenpenalty\hskip.16667em\relax}
\begin{document}

\title{Optimal Metastability-Containing Sorting\\
via Parallel Prefix Computation$^*$\thanks{$^*$This
article generalizes and extends work presented at DATE 2018~\cite{date18}.}}

\author{Johannes~Bund, Christoph~Lenzen, Moti Medina%
\IEEEcompsocitemizethanks{\IEEEcompsocthanksitem Johannes Bund and Christoph
Lenzen are with the Max Planck Institute for Informatics, Saarland
Informatics Campus, 66123 Saarbr\"ucken, Germany. Email:~\texttt{\{jbund,clenzen\}@mpi-inf.mpg.de}%
\IEEEcompsocthanksitem Moti Medina is with the School of
Electrical \& Computer Engineering, Ben-Gurion~University~of~the~Negev, 8410501 Beer
Sheva, Israel. Email:~\texttt{medinamo@bgu.ac.il}}%
}

\IEEEtitleabstractindextext{%
\begin{abstract}
Friedrichs et al.\ (TC 2018) showed that metastability can be \emph{contained}
when sorting inputs arising from time-to-digital converters, i.e.,
measurement values can be correctly sorted \emph{without} resolving
metastability using synchronizers first. However, this work left open whether
this can be done by small circuits. We show that this is indeed possible, by
providing a circuit that sorts Gray code inputs (possibly containing a
metastable bit) and has asymptotically optimal depth and size.

Our solution utilizes the parallel prefix computation (PPC) framework (JACM
1980). We improve this construction by bounding its fan-out by an arbitrary
$f\geq 3$, without affecting depth and increasing circuit size by a small
constant factor only. Thus, we obtain the first PPC circuits with asymptotically
optimal size, constant fan-out, and optimal depth.

To show that applying the PPC framework to the sorting task is
feasible, we prove that the latter can, despite potential metastability, be
decomposed such that the core operation is associative. We obtain asymptotically
optimal metastability-containing sorting networks. We complement these results with
simulations, independently verifying the correctness as well as small size and
delay of our circuits.
\end{abstract}
%
}

\maketitle

\section{Introduction}\label{sec:intro}

Metastability is a fundamental obstacle when crossing clock domains, potentially
resulting in soft errors with critical consequences~\cite{ginosar11tutorial}. As
it has been shown that metastability cannot be avoided
deterministically~\cite{Mar81}, synchronizers~\cite{kinniment08} are employed to
reduce the error probability to tolerable levels. This approach trades precious
time for reliability: the more time is allocated for metastability resolution,
the smaller the probability of metastability-induced faults.

Recently, a different approach has been proposed, coined
\emph{metastability-containing} (MC) circuits~\cite{friedrichs18containing}. It
accepts a limited amount of metastability in the input to a digital circuit and
ensures limited metastability of its output, so that the result\\ is still
useful. In a series of works~\cite{async16,date17,date18}, we applied this
approach to a fundamental primitive: sorting. The circuit given in~\cite{date18}
is asymptotically optimal in depth and size.

\paragraph*{\textbf{Our Contribution}}
In this article, we present the machinery used to obtain the circuit
from~\cite{date18} in detail. We prove that CMOS implementations of basic gates
realize Kleene logic~(cf.~\cite[\S64]{kleene52meta}), justifying the
computational model introduced in~\cite{friedrichs18containing} and used in this
article.

The task of sorting an arbitrary number of inputs can be reduced to sorting two
inputs by using sorting networks~\cite{knuth1998art}. The $0$-$1$-principle
(cf.\ Section~\ref{sec:related}) shows that plugging an MC $\twosort(B)$
circuit (for $B$-bit inputs) into a sorting network (for $n$ values) readily
yields an MC circuit that is capable of sorting $n$ inputs. Hence, we need to
design a $\twosort(B)$ circuit sorting two inputs in an MC way.

As the choice of the encoding matters a lot for MC circuits, we characterize the
set of input strings we want to sort (``valid strings''). A valid string is
either a (standard) Gray code string or a string obtained from a Gray code
string by replacing the unique bit that would change on the up-count to the
``next'' codeword by $\metas$ for metastability (the third logic value in Kleene
logic). When using non-redundant codes, the use of Gray codes is mandatory: when
converting an analog value to a digital one, continuously changing the input can
force any circuit (that uses the value in a non-trivial way) into
metastability~\cite{Mar81}. Moreover, for combinational circuits in the
abstraction of Kleene logic, \emph{all} output bits that change when flipping a
given input bit must become unstable when the input bit is unstable,
cf.~\cite{friedrichs18containing}. For instance, encoding a value unknown to be
$11$ or $12$ in standard binary code would result in a string that, once
metastability has been resolved, may represent any number in the interval from
$8$ to $15$, cf.~Section~\ref{sec:basics}.

Valid strings arise naturally when stopping a Gray code counter
asynchronously~\cite{tdc16} or, more generally, whenever performing
analog-to-digital conversion; respective circuits may risk multiple metastable
bits to achieve better average-case precision, but for the best worst-case
precision one can stick to guaranteeing valid strings as output. Exploiting the
structure of Gray code and the restriction to valid strings, we show how to
reliably sort all inputs despite the uncertainty about the represented value
arising from metastability.

We formally specify the $\twosort(B)$ circuit and then prove that the task of
comparing two valid strings can be decomposed into first performing a
four-valued comparison on each prefix pair of the two valid input strings, and
then inferring the corresponding output bits. This reduces the design of
$\twosort(B)$ to a parallel prefix computation (PPC) problem, which for our
purposes can be phrased as follows.
\begin{definition}[$\ppc_{\OP}(B)$]\label{def:ppc}
For associative $\OP\colon D\times D\to D$ and $B\in \NN$, a
$\ppc_{\OP}(B)$ circuit is specified as follows.
\begin{compactitem}
  \item[\textbf{\emph{Input:}}] $d \in D^B$\:,
  \item[\textbf{\emph{Output:}}] $\pi \in D^B$\:,
  \item[\textbf{\emph{Functionality:}}] $\pi_i=\bigoplus_{j=1}^i d_j$ for all
  $i\in [1,B]$.
\end{compactitem}
\end{definition}
Fast PPC circuits that are simultaneously (asymptotically) optimal in depth and
size are known due to a celebrated result by Ladner and
Fischer~\cite{ladner1980parallel}. Going beyond~\cite{date18}, we present the
full range of solutions that can be derived using their framework, which allows
for a trade-off between depth and size of the $\twosort$ circuit. Most
prominently, optimizing for depth reduces the depth of the circuit by a factor
of $2$ compared to~\cite{date18} to optimal $\lceil \log B\rceil$, at the
expense of increasing the size by a factor of up to $2$.

However, relying on the construction from~\cite{ladner1980parallel} as-is
results in a very large fan-out. We present a modification reducing fan-out to
any number $f\geq 3$ without affecting depth, increasing the size by a factor of
only $1+\BO(1/f)$ (plus at most $3B/2$ buffers). In particular, our results
imply that the depth of an MC sorting circuit can match the delay of a
non-containing circuit, while maintaining constant fan-out and a constant-factor
size overhead. Due to the fact that PPC circuits lie at the heart of fast
adders~\cite{swartzlander15arithmetic}, we consider this result of independent interest.

We complement our theoretical findings by simulations confirming the correctness
and small size of the devised circuits. Post-layout area and delay of
the designed circuits compare favorably with a baseline provided by a
straightforward non-containing implementation.

\paragraph*{\textbf{Organization of this Article}}
We discuss related work in Section~\ref{sec:related}. Some preliminaries, the
computational model and its justification, as well as the problem specification
are given in Section~\ref{sec:basics}. Next, in Section~\ref{sec:operators}, we
break the task of designing a $\twosort(B)$ circuit down into comparing prefixes
and subsequently generating the output bits out of the computed comparison
values and the respective pair of input bits. The comparison can be further
decomposed into sequential application of an associative operator, which enables
application of the PPC framework to compute all prefixes efficiently in parallel
with (asymptotically) optimal depth. In order to keep this article
self-contained, we compactly review the PPC framework in Section~\ref{sec:ppc}.
The section then proceeds to showing how to modify the construction for bounded
fan-out and bounding the size of the resulting circuits. In
Section~\ref{sec:simulation}, we implement the base operators by subcircuits and
plug the pieces together to obtain complete circuits. We then simulate them up
to an input width of $B=16$ to independently verify their correctness, and
provide delay and area of the laid out circuits.
We compare to a non-containing version as baseline, demonstrating the controlled
increase in size of the circuit. We conclude the article in
Section~\ref{sec:conclusion}, where we also briefly discuss follow-up work that
generalizes our results, demonstrating that higher-level concepts of this work
like sorting networks and parallel prefix computation are applicable to further
MC circuits.%

\section{Related Work}\label{sec:related}

\paragraph*{\textbf{Sorting Networks}}
Sorting networks (see, e.g., \cite{knuth1998art}) sort $n$ inputs from a totally
ordered universe by feeding them into $n$ parallel wires that are connected by
$\twosort$ elements, i.e., subcircuits sorting two inputs; these can act in
parallel whenever they do not depend on each other's output. A correct sorting
network sorts all possible inputs, i.e., the wires are labeled $1$ to $n$ such
that the $i^{th}$ wire outputs the $i^{th}$ element of the sorted list of
inputs. The \emph{size} of a sorting network is its number of $\twosort$
elements and its \emph{depth} is the maximum number of $\twosort$ elements an
input may pass through until reaching the output.

The $0$-$1$-principle~\cite{knuth1998art} states that a sorting network\dash---
assuming the $\twosort$ circuits are correct\dash---is correct if and only if it
sorts $0$-$1$ inputs correctly. Thus, we obtain sorting networks for inputs that
may suffer from metastability by constructing $\twosort$ circuits (w.r.t.\ a
suitable order on such inputs) and plugging them into existing sorting networks.

Sorting networks have been extensively studied. Tight lower bounds of depth
$\Omega(\log n)$ (trivial) and size $\Omega(n \log n)$ (see, e.g.,
\cite{cormen09introduction}) are known and can be simultaneously asymptotically
matched~\cite{ajtai83}. More practically, for small values of $n$ optimal depth
and/or size networks are
known~\cite{bundala2014optimal,codish2014twenty,knuth1998art}. Accordingly, our
task boils down to finding optimal (or close to optimal) metastability-containing
$\twosort$ circuits. For $B$-bit inputs, our $\twosort$ circuits have depth and
size $\BO(\log B)$ and $\BO(B)$, respectively, which is (trivially) optimal up
to constants; as size and depth of our circuits are close to non-containing
$\twosort$ circuits (cf.\ Table~\ref{tab:sorting}), we conclude that our
approach yields MC sorting networks that are optimal up to small constant
factors in both depth and size.

\paragraph*{\textbf{Prior Work on MC Circuits}}
Recent work~\cite{friedrichs18containing} shows that for any Boolean function a
combinational MC circuit implementing its \emph{metastable closure} (see
Definition~\ref{def:closure}) exists. The metastable closure can be seen as a
best effort to contain metastability: when for an input with (some) metastable
bits the stable input bits already determine a given output bit of the original
Boolean function, the closure attains the respective value on this output bit;
otherwise it is metastable.

Unfortunately, the proof from~\cite{friedrichs18containing}, which uses a
construction dating back to Huffman~\cite{huffman57design}, yields circuits of
exponential size in the number of input bits $B$. The same is true for
speculative computing~\cite{tarawneh12hiding}. Unconditional lower bounds on MC
circuits~\cite{ikenmeyer18complexity} show that this cannot be avoided in
general, even if the implemented function admits a small non-containing circuit.
The same work provides, assuming that at most $k$ input bits can be metastable,
a construction with multiplicative $B^{\BO(k)}$ and additive $\BO(k\log B)$
overheads in size and depth, respectively. For the $\twosort$ element, $k=2$
(each Gray code string may contain one metastable bit), but the resulting
circuits are still far from optimal.

In~\cite{friedrichs18containing}, an alternative construction relying on
non-combinational logic is given, achieving (up to minor-order terms) factor
$2k+1$ increase in size and additive $\Theta(\log k)$ increase in depth of the
resulting circuit; for a $\twosort$ circuit, $k=2$, so these overheads are
constant. Rule-of-thumb calculations suggest that optimized versions of the
circuits presented here and derived by this method would have comparable
performance. A fair and detailed comparison would require fully-fledged designs
of both approaches, which is beyond the scope of this article. Note, however,
that our design has the advantage of being purely combinational.

\paragraph*{\textbf{Parallel Prefix Computation}}
Ladner and Fischer \cite{ladner1980parallel} studied the parallel application of
an associative operator to all prefixes of an input string of length $\ell$
(over an arbitrary alphabet). They give parallel prefix computation (PPC)
circuits of depth $\BO(\log \ell)$ and size $\BO(\ell)$ (where the circuit
implementing the operator is assumed to have size and depth $1$). However, when
requiring optimal depth of $\lceil \log \ell\rceil$, their corresponding
solution suffers from fan-out larger than $\ell/2$. An earlier construction by
Kogge and Stone~\cite{kogge73recurrence} simultaneously achieves optimal depth
and fan-out of $2$. This yields the fastest adder circuits to
date (cf.~\cite{swartzlander15arithmetic}), but at the expense of a large size
of $\ell (\lceil\log \ell \rceil - 1)+1$. A number of additional constructions
have been developed for adders, including special
cases (\cite{brent82adders,sklansky60addition}) of the one by Ladner and
Fischer, cf.~\cite{zimmermann97adder}. However, no other construction achieves
asymptotically optimal depth and size.

\section{Model and Problem}\label{sec:basics}

In this section, we discuss how to model metastability in a worst-case fashion
and formally specify the input/output behavior of our circuits. Our model is a
simplified version of the one from~\cite{friedrichs18containing} for
combinational circuits (cf.~\cite[Chap.~7]{friedrichs17diss}). This means to
represent metastable ``bits'' by $\metas$ and extend truth tables as in Kleene's
$3$-valued logic~\cite[\S64]{kleene52meta}.

\paragraph*{\textbf{Basic Notation}}
We set $[N]:= \{0,\ldots,N-1\}$ for $N\in \mathbb{N}$ and $[i,j] = \{i, i+1,
\ldots, j\}$ for $i,j\in \NN$, $i\leq j$. We denote $\BB{}\coloneqq\{0,1\}$ and
$\BBM{}\coloneqq\{0,1,\metas\}$. For a $B$-bit string $g \in
\BBM{B}$ and $i\in[1,B]$, denote by $g_i$ its $i$-th bit, i.e., $g=g_1g_2\ldots
g_B$. We use the shorthand $g_{i,j}:=g_i\ldots g_j$, where $i,j\in[1,B]$ and
$i\leq j$. Let $\parity(g)$ denote the parity of $g\in \BB{B}$, i.e, $\parity(g)
= \sum_{i=1}^{B}g_i\bmod 2$. For a function $f$ and a set $A$ we abbreviate
$f(A):=\{f(y)\,|\,y\in A\}$.

\subsection{Binary Reflected Gray Code}


A standard binary representation of inputs is unsuitable: uncertainty of the
input values may be arbitrarily amplified by the encoding. E.g.\ representing a
value unknown to be $11$ or $12$, which are encoded as $1011$ resp.\ $1100$,
would result in the bit string $1\metas\metas\metas$, i.e., a string that is
metastable in every position that differs for both strings. However,
$1\metas\metas\metas$ may represent any number in the interval from $8$ to $15$,
amplifying the initial uncertainty of being in the interval from $11$ to $12$.
An encoding that does not lose precision for consecutive values is Gray code.

We use $B$-bit binary reflected Gray code, $rg_B:[N]\to\BB{B}$, which is defined
recursively. For simplicity (and without loss of generality) we set $N := 2^B$.
A $1$-bit code is given by $\rg_1(0)=0$ and $\rg_1(1)=1$. For $B>1$, we start
with the first bit fixed to $0$ and counting with $\rg_{B-1}(\cdot)$ (for the
first $2^{B-1}$ codewords), then toggle the first bit to $1$, and finally
``count down'' $\rg_{B-1}(\cdot)$ while fixing the first bit again,
cf.~Table~\ref{table:graystruct}. Formally, this yields for $x\in[N]$
\begin{equation*}
\rg_B(x):=\begin{cases}
0\rg_{B-1}(x)& \mbox{if }x\in [2^{B-1}]\\
1\rg_{B-1}(2^B-1-x)& \mbox{if }x\in [2^B]\setminus [2^{B-1}]\:.
\end{cases}
\end{equation*}
As each $B$-bit string is a codeword, the code is a
bijection and the encoding function also defines the decoding function. Denote by
$\repg{\cdot}:\BB{B}\to[N]$ the decoding function of a Gray code string, i.e., for $x\in
[N]$, $\repg{\rg_B(x)}=x$.

\begin{table}
\begin{center}
\caption{4-bit binary reflected Gray code}
\label{table:graystruct}
\begin{tabular}{| c   c| c   c  || c   c  | c   c |}
\hline
 $\#$ & $g_1,g_{2,4}$ & $\#$ & $g_1,g_{2,4}$ & $\#$ & $g_1,g_{2,4}$ & $\#$ & $g_1,g_{2,4}$\tabularnewline
\hline
\hline
$0$           & $0\;000$ &$4$           & $0\;110$ & $8$           & $1\;100$ & $12$           & $1\;010$\tabularnewline
$1$           & $0\;001$ & $5$           & $0\;111$ & $9$           & $1\;101$ & $13$           & $1\;011$ \tabularnewline
$2$           & $0\;011$ & $6$           & $0\;101$ & $10$           & $1\;111$ & $14$           & $1\;001$ \tabularnewline
$3$           & $0\;010$ & $7$           & $0\;100$ & $11$           & $1\;110$ & $15$           & $1\;000$ \tabularnewline
\hline
\end{tabular}
\end{center}
\end{table}

For two binary reflected Gray code strings $g,h\in \BB{B}$, we define their
maximum and minimum as
\begin{align*}
\left(\rgmax\{g,h\},\rgmin\{g,h\}\right)&:=\begin{cases}
(g,h) & \mbox{if }\repg{g}\geq \repg{h}\\
(h,g) & \mbox{if }\repg{g}< \repg{h}\:.
\end{cases}
\end{align*}
For example:
\begin{itemize}
  \item$\rgmax\{0011,0100\}=\rgmax\{\rg_B(2),\rg_B(7)\}=0100$,
  \item$\rgmin\{0111,0101\}=\rgmin\{\rg_B(9),\rg_B(10)\}=0111$.
\end{itemize}

\subsection{Valid Strings}

The inputs to the sorting circuit may have some metastable bits, which means that
the respective signals behave out-of-spec from the perspective of Boolean logic.
Such inputs, referred to as \emph{valid strings}, are introduced with the help
of the following operator.
\begin{definition}[$*$ Operator]\label{def:star}
For $B\in \NN$, define the operator
$*: \BBM{B} \times \BBM{B} \rightarrow \BBM{B}$ by
\begin{equation*}
\forall i\in \{1,\ldots,B\}:(x * y)_i := \begin{cases}
x_i & \mbox{if }x_i=y_i\\
\metas & \mbox{else.}
\end{cases}
\end{equation*}
\end{definition}
\begin{observation}\label{obs:bigstarr}
The operator $*$ is associative and commutative. Hence, for a set
$S=\{x^{(1)},\ldots,x^{(k)}\}$ of $B$-bit strings, we can use the shorthand
$\bigstarr S := \bigstarr_{x\in S} x := x^{(1)}*x^{(2)}*\ldots*x^{(k)}.$
We call $\bigstarr S$ \emph{the superposition of the strings in $S$.}
\end{observation}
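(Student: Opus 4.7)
The plan is to reduce both claims to the one-bit case. Since the operator $*$ is defined coordinate-wise in Definition~\ref{def:star}, associativity and commutativity of $*\colon \BBM{B}\times \BBM{B}\to \BBM{B}$ follow from the corresponding properties of the induced ternary operation on $\{0,1,\metas\}$. So I would argue throughout about a fixed bit index $i$ and then lift the conclusion to strings componentwise.

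Commutativity at a single position is immediate: the condition ``$x_i=y_i$'' in Definition~\ref{def:star} is symmetric, and when it holds the selected value $x_i=y_i$ does not depend on the order of the two arguments. For associativity, I would first record a convenient reformulation of the definition: $(x*y)_i$ equals $v\in\BB{}$ when $x_i=y_i=v$, and equals $\metas$ in every other case. The slightly subtle subcase is $x_i=y_i=\metas$; since $\metas=\metas$ holds, the ``then'' branch of Definition~\ref{def:star} fires and selects $\metas$, which is consistent with the rule stated above. In words, the $i$-th bit of $x*y$ is stable iff both inputs agree on that position with a stable value.

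Applying this reformulation twice, I would then check that $((x*y)*z)_i = v \in \BB{}$ holds iff $x_i=y_i=z_i=v$, and otherwise evaluates to $\metas$. Since this characterization is symmetric in the three arguments, $(x*(y*z))_i$ satisfies the same characterization, so the two triple products agree at every bit position, establishing associativity. The shorthand $\bigstarr S$ is then well-defined: any parenthesization and any ordering of $k$ pairwise applications of $*$ reduces, at every position $i$, to the common stable value if all $x^{(j)}_i$ coincide on a value in $\BB{}$, and to $\metas$ otherwise.

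I do not foresee a real obstacle; the argument is essentially a finite case check at a single bit position. The only place where care is needed is the self-interaction $\metas*\metas=\metas$, which is easy to misread from the ``else $\metas$'' clause but is exactly what makes the rewritten characterization fully symmetric in all arguments, and hence is what makes the inductive extension to $\bigstarr S$ go through without case distinctions.
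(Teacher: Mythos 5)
Your argument is correct and is precisely the bitwise case check that the paper leaves implicit (the Observation is stated without proof). The coordinate-wise reduction, the symmetric characterization ``$(x*y)_i=v\in\BB{}$ iff $x_i=y_i=v$, else $\metas$,'' and the careful handling of $\metas*\metas=\metas$ are exactly what is needed, and they match how the paper later uses $\bigstarr S$ in Observations~\ref{obs:identity} and~\ref{obs:subset}.
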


Valid strings have at most one metastable bit. If this bit resolves to either
$0$ or $1$, the resulting string encodes either $x$ or $x+1$ for some $x$,
cf.~Table~\ref{table:validinputs}.
\begin{definition}[Valid Strings]\label{def:validinput}
Let $B\in \mathbb{N}$ and $N=2^B$. Then, the set of \emph{valid strings of
length $B$} is
\begin{equation*}
\validrg{B}:=\rg_B([N])\cup \bigcup_{x\in [N-1]}
\{\rg_B(x)*\rg_B(x+1)\}\:.
\end{equation*}
\end{definition}

\begin{table}
\begin{center}
\caption{$4$-bit valid inputs}
\label{table:validinputs}
\begin{tabular}{| c | c || c  | c || c  |  c  || c | c |}
\hline
 $g$ & $\repg{g}$ & $g$ & $\repg{g}$ & $g$ & $\repg{g}$ & $g$ & $\repg{g}$\tabularnewline
\hline
\hline
$0000$&$0$   & $0110$ &$4$   & $1100$ &$8$   & $1010$ &$12$\tabularnewline
$000\metas$&$-$ & $011\metas$ &$-$ & $110\metas$ &$-$ &$101\metas$&$-$\tabularnewline
$0001$&$1$   & $0111$ &$5$   & $1101$ &$9$   & $1011$ &$13$\tabularnewline
$00\metas1$&$-$ & $01\metas1$ &$-$ & $11\metas1$ &$-$ &$10\metas1$&$-$\tabularnewline
$0011$&$2$   & $0101$ &$6$   & $1111$ &$10$  & $1001$ &$14$\tabularnewline
$001\metas$&$-$ & $010\metas$ &$-$ & $111\metas$&$-$&$100\metas$&$-$\tabularnewline
$0010$&$3$   & $0100$ &$7$   & $1110$ &$11$  & $1000$ &$15$\tabularnewline
$0\metas10$&$-$ & $\metas100$ &$-$ & $1\metas10$ &$-$& $-$ &$-$\tabularnewline
\hline
\end{tabular}
\end{center}
\end{table}

\subsection{Resolution and Closure}

To extend the specification of $\rgmax$ and $\rgmin$ to valid strings, we
make use of the \emph{metastable closure}~\cite{friedrichs18containing}. The
metastable closure is defined over the possible \emph{resolutions} of
metastable bits.
\begin{definition}[Resolution~\cite{friedrichs18containing}]\label{def:resolution}
For $x\in \BBM{B}$, define the resolution $\res(x):\BBM{B}\rightarrow
\mathcal{P}\left(\BB{B}\right)$ as follows:
\begin{equation*}
\res(x):=\{y\in \BB{B}\,|\,\forall i\in \{1,\ldots,B\}\colon x_i\neq
\metas\Rightarrow y_i=x_i\}\:.
\end{equation*}

\end{definition}
Thus, $\res(x)$ is the set of all strings obtained by replacing all $\metas$s in
$x$ by either $0$ or $1$: $\metas$ acts as a ``wild card.''
For any $x$ and $y$, we have that $\res(xy)=\res(x)\res(y)$.

We note two observations for later use.
\begin{observation}\label{obs:identity} For any $x\in\BBM{B}$, $\bigstarr
\res(x)=x$.
\end{observation}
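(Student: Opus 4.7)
The plan is to argue bitwise, since both the operator $*$ and the resolution $\res(\cdot)$ act independently on each coordinate. Fix $x \in \BBM{B}$ and any bit position $i \in \{1,\ldots,B\}$. I would split into two cases based on whether $x_i$ is stable or metastable, and show that $(\bigstarr \res(x))_i = x_i$ in each case.

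In the stable case $x_i \in \{0,1\}$, Definition~\ref{def:resolution} forces $y_i = x_i$ for every $y \in \res(x)$. Therefore all operands entering the $*$-superposition agree in coordinate $i$, and an easy induction on $|\res(x)|$ using Definition~\ref{def:star} (together with associativity and commutativity from Observation~\ref{obs:bigstarr}) gives that the $i$-th bit of the superposition is $x_i$.

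In the metastable case $x_i = \metas$, the set $\res(x)$ contains both some $y$ with $y_i = 0$ and some $y'$ with $y'_i = 1$: one can take any resolution of the other metastable positions and independently assign $y_i = 0$ or $y_i = 1$. When these two strings are combined via $*$, the $i$-th coordinate becomes $\metas$ by definition, and since $\metas * b = \metas$ for every $b \in \{0,1,\metas\}$, subsequent applications of $*$ in the superposition preserve the value $\metas$ in position $i$. Hence $(\bigstarr \res(x))_i = \metas = x_i$.

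Putting the two cases together yields $\bigstarr \res(x) = x$. The argument is essentially routine given the definitions; the only mild subtlety is to justify the inductive combination cleanly using the associativity and commutativity of $*$ from Observation~\ref{obs:bigstarr}, so that the order in which resolutions are combined does not matter. I do not expect any real obstacle here.
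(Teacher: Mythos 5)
Your proof is correct and follows essentially the same route as the paper's: a bitwise case distinction on whether $x_i$ is stable (all resolutions agree, so $*$ preserves the value) or metastable (both $0$ and $1$ occur among resolutions at position $i$, forcing $\metas$ in the superposition). No gaps.
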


\begin{proof}
Let $x\in\BBM{B}$ and let $I$ be the set of indices where $x$ is stable, i.e.,
$i\in I$ iff $x_i\neq\metas$. From Definition~\ref{def:resolution}, we get that
\begin{equation*}
\forall i\in\{1\ldots B\}:i\in I \Leftrightarrow \{x_i\}=\res(x_i)\:.
\end{equation*}
By Definition~\ref{def:star} and Observation~\ref{obs:bigstarr},
\begin{equation*}
\forall i\in \{1,\ldots,B\}:\{(\bigstarr \res(x))_i\} = \begin{cases}
\res(x_i) & \mbox{if }i\in I\\
\{\metas\} & \mbox{else.}
\end{cases}
\end{equation*}
This entails the claim $\bigstarr \res(x)=x$.
\end{proof}

For example: $\bigstarr \res(0\metas10) = \bigstarr \{0010,0110\}
=0\metas10\,.$
\begin{observation}\label{obs:subset}
For $\emptyset\neq S\subseteq\BB{B}$, we have $S\subseteq \res(\bigstarr S)$.
\end{observation}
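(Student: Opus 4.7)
The plan is to fix an arbitrary $x \in S$ and show directly from Definition~\ref{def:resolution} that $x \in \res(\bigstarr S)$. By definition, this requires that at every position $i \in \{1,\ldots,B\}$ at which $(\bigstarr S)_i \neq \metas$, we have $x_i = (\bigstarr S)_i$. So the whole argument reduces to a case distinction on each coordinate of $\bigstarr S$.

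First I would use Observation~\ref{obs:bigstarr} together with Definition~\ref{def:star}, applied inductively (or directly, since $*$ is commutative and associative), to characterize $(\bigstarr S)_i$: it equals the common value $y_i$ if all $y \in S$ agree on position $i$, and it equals $\metas$ otherwise. With this characterization in hand, the two cases are essentially trivial: if $(\bigstarr S)_i \neq \metas$ then by the characterization all strings in $S$, and in particular our chosen $x$, have the same $i$-th bit as $\bigstarr S$; if $(\bigstarr S)_i = \metas$, the condition in Definition~\ref{def:resolution} is vacuous and imposes no constraint on $x_i \in \{0,1\}$.

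The only step needing a little care is the characterization of $(\bigstarr S)_i$ for arbitrary finite $S$, which is not stated in the excerpt as an explicit lemma but follows by a straightforward induction on $|S|$ from Definition~\ref{def:star}: two bits are combined to their common value or to $\metas$, and iterating preserves the invariant that the result at position $i$ is $\metas$ exactly when two strings disagree at $i$. Once this is in place, the inclusion $x \in \res(\bigstarr S)$ is immediate for every $x \in S$, and since $x$ was arbitrary we conclude $S \subseteq \res(\bigstarr S)$. I do not anticipate any real obstacle; the observation is essentially a direct consequence of the definitions of $*$ and $\res$ and can be dispatched in a few lines.
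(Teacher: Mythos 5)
Your proposal is correct and follows essentially the same route as the paper: both characterize $(\bigstarr S)_i$ coordinatewise (common value where all of $S$ agrees, $\metas$ otherwise) and then verify the membership condition of Definition~\ref{def:resolution} for an arbitrary element of $S$. Your direct check of the defining condition of $\res$ is, if anything, a slightly cleaner phrasing of the paper's final step, and the induction on $|S|$ you invoke for the characterization is unproblematic since $S$ is finite.
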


\begin{proof}
Let $\emptyset \neq S\subseteq\BB{B}$ and $s\in S$. Define $I$ as
the set of indices where $\bigstarr S$ is stable, i.e.,
$$\forall i\in\{1\ldots B\}:i\in I \Leftrightarrow (\bigstarr S)_i\neq\metas.$$
From Definition~\ref{def:star} and Observation~\ref{obs:bigstarr}, we conclude
for $i\in\{1\ldots B\}$:
$$(\bigstarr S)_i=\begin{cases}s_i&\mbox{if }i\in I\\
\metas&\mbox{else.}\end{cases}$$
Since by Definition~\ref{def:resolution} each combination of replacing $\metas$s
by $0$s and $1$s occurs in $\res(\bigstarr S)$, we conclude that
$$\exists x \in\res(\bigstarr S):\forall i\notin I:x_i=s_i.$$
Since $\forall i\in\{1\ldots B\}:i\in I \Leftrightarrow \{x_i\}=\res(x_i)$, $s\in\res(\bigstarr S)$.
This proves the claim $S\subseteq\res(\bigstarr S)$.
\end{proof}

We observe that in general the reverse direction does not hold, i.e., $\res(\bigstarr S) \nsubseteq S$.
For example, consider $S = \{01,10\}$ and thus $\bigstarr S = \metas\metas$ such
that $\res(\bigstarr S) = \{00,01,10,11\} = \BB{2}$. Hence, $S\subseteq \res(\bigstarr S)$ but not
$\res(\bigstarr S)\subseteq S$. In contrast, for $|\res(\bigstarr S)| \leq
2$, we can see that the reverse direction holds.
\begin{observation}\label{obs:reverseresolution}
For any subset of strings $S\subseteq \BB{B}$, if $|\res(\bigstarr S)| \leq
2$, then $\res(\bigstarr S) = S$.
\end{observation}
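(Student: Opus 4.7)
The plan is to combine Observation~\ref{obs:subset}, which already yields $S\subseteq\res(\bigstarr S)$, with a short case analysis on the value of $|\res(\bigstarr S)|$ to establish the reverse inclusion $\res(\bigstarr S)\subseteq S$. The guiding observation is that the number of $\metas$ positions in $\bigstarr S$ is exactly $\log_2|\res(\bigstarr S)|$, so the hypothesis $|\res(\bigstarr S)|\leq 2$ means $\bigstarr S$ contains at most one metastable bit. Thus the problem reduces to two easy cases.

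In the case $|\res(\bigstarr S)|=1$, the string $\bigstarr S$ is fully stable, so by Definition~\ref{def:star} and Observation~\ref{obs:bigstarr} all elements of $S$ must agree in every position. Hence $S=\{\bigstarr S\}=\res(\bigstarr S)$.

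In the case $|\res(\bigstarr S)|=2$, the string $\bigstarr S$ has exactly one metastable position, say index $i$, while being stable at all other positions. By Definition~\ref{def:star}, the value $\metas$ at position $i$ can only arise if there exist $s,t\in S$ with $s_i=0$ and $t_i=1$; at every other position $j\neq i$, both $s_j$ and $t_j$ must coincide with $(\bigstarr S)_j$, for otherwise $\bigstarr S$ would be metastable at $j$ as well. Therefore $\{s,t\}$ already exhausts the two strings in $\res(\bigstarr S)$, giving $\res(\bigstarr S)\subseteq S$. Together with Observation~\ref{obs:subset}, this establishes $\res(\bigstarr S)=S$.

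I do not anticipate a genuine obstacle here: the statement is really a structural consequence of how $*$ creates metastable bits, and the argument is entirely a clean case distinction. The only thing to be careful about is to cite the right definitions/observations and to note that the implicit hypothesis $S\neq\emptyset$ (needed so that $\bigstarr S$ is defined, cf.\ Observation~\ref{obs:subset}) indeed holds, since otherwise $\res(\bigstarr S)$ would be empty and the claim would be vacuous.
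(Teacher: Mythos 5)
Your proof is correct and follows essentially the same route as the paper's: both use Observation~\ref{obs:subset} for the inclusion $S\subseteq\res(\bigstarr S)$ and the fact that $\bigstarr S$ has at most one metastable bit to get the reverse inclusion. You simply spell out the case analysis that the paper dismisses as ``straightforward,'' which is a fair and complete elaboration.
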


\begin{proof}
Since $\bigstarr S$ can contain at most one $\metas$ bit, we know that $S$ can contain at most
two strings that differ in one position. It is then straightforward to show that every string in
$\res(\bigstarr S)$ is element of $S$. Together with Observation~\ref{obs:subset} this shows the equality.
\end{proof}

The metastable closure of an operator on binary inputs extends it to inputs that
may contain metastable bits. This is done by considering all resolutions of the
inputs, applying the operator, and taking the superposition of the results.
\begin{definition}[The $\metas$
Closure~\cite{friedrichs18containing}]\label{def:closure} Given an operator
$f\colon \BB{n} \to \BB{m}$, its \emph{metastable closure
$f_{\metas}\colon \BBM{n} \to \BBM{m}$} is defined by
$f_{\metas}(x):= \bigstarr \{f(x')|x'\in\res(x)\}$. Recalling the basic notation
we abbreviate this by $f_{\metas}(x)= \bigstarr f(\res(x))$.
\end{definition}
The closure is the best one can achieve w.r.t.\ containing
metastability with clocked logic using standard
registers~\cite{friedrichs18containing}, i.e., when $f_{\metas}(x)_i=\metas$, no
such implementation can guarantee that the $i^{th}$ output bit stabilizes in a
timely fashion.

\subsection{Output Specification}
We want to construct a circuit computing the maximum and minimum of two valid
strings, enabling us to build sorting networks for valid strings.
First, however, we need to answer the question what it means to ask for the
maximum or minimum of valid strings. To this end, suppose a valid string is
$\rg_B(x)*\rg_B(x+1)$ for some $x\in [N-1]$, i.e., the string contains a
metastable bit that makes it uncertain whether the represented value is $x$ or
$x+1$.
If we wait for metastability to resolve, the
string will stabilize to either $\rg_B(x)$ or $\rg_B(x+1)$. Accordingly, it
makes sense to consider $\rg_B(x)*\rg_B(x+1)$ ``in between'' $\rg_B(x)$ and
$\rg_B(x+1)$, resulting in the following total order on valid strings
(cf.~Table~\ref{table:validinputs}).
\begin{definition}[$\prec$]\label{def:order}
We define a total order $\prec$ on valid strings as follows. For $g,h\in
\BB{B}$, $g\prec h\Leftrightarrow \repg{g}<\repg{h}$. For each $x\in [N-1]$, we
define $\rg_B(x)\prec \rg_B(x)*\rg_B(x+1)\prec \rg_B(x+1)$. We extend the
resulting relation on $\validrg{B}\times\validrg{B}$ to a total order by taking
the transitive closure. Note that this also defines $\preceq$, via $g\preceq
h\Leftrightarrow (g=h\vee g\prec h)$.
\end{definition}
We intend to sort with respect to this order. It turns out that implementing a
$\twosort$ circuit w.r.t.\ this order amounts to implementing the metastable
closure of $\rgmax$ and $\rgmin$.

\begin{lemma}\label{lem:twosort_closure}
Let $g,h\in \validrg{B}$. Then
\begin{equation*}
g\preceq h\Leftrightarrow (\rgmaxM\{g,h\},\rgminM\{g,h\})=(h,g)\:.
\end{equation*}
\end{lemma}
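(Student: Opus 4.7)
The plan is to compute $\rgmaxM\{g,h\}$ and $\rgminM\{g,h\}$ explicitly by unrolling the closure definition, after enumerating $\res(g)$ and $\res(h)$ using the structure of valid strings. Definition~\ref{def:validinput} together with Observation~\ref{obs:reverseresolution} guarantees that for any $u\in \validrg{B}$ either $\res(u)=\{\rg_B(x)\}$ for some $x\in [N]$ (the stable case) or $\res(u)=\{\rg_B(x),\rg_B(x+1)\}$ for some $x\in[N-1]$ (the metastable case), and in the latter case $u=\rg_B(x)*\rg_B(x+1)$ by Observation~\ref{obs:identity}. This is the scaffolding I would establish first, because it lets me identify each valid string with an interval $\{x\}$ or $\{x,x+1\}$ of represented values and translate the order $\prec$ accordingly: writing $I(u)\subseteq[N]$ for this interval, $u\preceq v \Leftrightarrow \max I(u)\leq \min I(v)$ or $u=v$.

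Next I would do a case analysis on whether $g$ and $h$ are stable or metastable. By symmetry (swap $g$ and $h$) and since the $g=h$ case is trivial, it suffices to prove the forward direction: assume $g\prec h$ and show $(\rgmaxM\{g,h\},\rgminM\{g,h\})=(h,g)$. Unrolling Definition~\ref{def:closure},
\begin{equation*}
\rgmaxM\{g,h\}=\bigstarr_{g'\in\res(g),\,h'\in\res(h)}\rgmax\{g',h'\},
\end{equation*}
and similarly for $\rgminM$. In the stable/stable case this reduces immediately to the $\repg{\cdot}$ comparison used to define $\rgmax$ and $\rgmin$ and matches $\prec$ on $\BB{B}$. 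In the mixed case (say $g=\rg_B(x)*\rg_B(x+1)$ stable, $h=\rg_B(y)$; the other mixed case is symmetric), $g\prec h$ forces $y\geq x+1$, so for every $g'\in\res(g)$ we have $\repg{g'}\leq x+1\leq y$, giving $\rgmax\{g',h\}=h$ and $\rgmin\{g',h\}=g'$ in all cases. Taking the superposition over $g'\in\{\rg_B(x),\rg_B(x+1)\}$ yields $\rgmaxM=h*h=h$ and $\rgminM=\rg_B(x)*\rg_B(x+1)=g$, again using Observation~\ref{obs:identity}.

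The main case is metastable/metastable: $g=\rg_B(x)*\rg_B(x+1)$, $h=\rg_B(y)*\rg_B(y+1)$. Here $g\prec h$ translates to $x+1\leq y$, hence every resolution $g'\in\res(g)$ has $\repg{g'}\leq x+1\leq y\leq \repg{h'}$ for every $h'\in\res(h)$. Consequently $\rgmax\{g',h'\}=h'$ and $\rgmin\{g',h'\}=g'$ for all four combinations, so
\begin{equation*}
\rgmaxM\{g,h\}=\bigstarr\res(h)=h\quad\text{and}\quad\rgminM\{g,h\}=\bigstarr\res(g)=g
\end{equation*}
by Observation~\ref{obs:identity}. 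The potentially tricky subcase $x+1=y$ is handled uniformly by this argument because the inequalities remain non-strict; this is the place where Observation~\ref{obs:reverseresolution} is pulling its weight, since without the two-element characterization of $\res$ on valid strings the superposition could in principle produce spurious metastable bits. For the reverse direction, observe that $\prec$ is a total order, so $g\not\preceq h$ means $h\prec g$, and applying the forward direction to $(h,g)$ yields $(\rgmaxM\{g,h\},\rgminM\{g,h\})=(g,h)\neq (h,g)$ whenever $g\neq h$, completing the equivalence.
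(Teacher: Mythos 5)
Your proposal is correct and follows essentially the same route as the paper: reduce to the key fact that $g\prec h$ forces $\repg{g'}\leq\repg{h'}$ for all $g'\in\res(g)$, $h'\in\res(h)$, then unroll the closure and recover $h$ and $g$ via $\bigstarr\res(\cdot)=\mathrm{id}$ (Observation~\ref{obs:identity}), handling $h\prec g$ symmetrically and $g=h$ separately. The only difference is presentational: you justify the key fact by an explicit stable/metastable case split where the paper cites the definitions and Table~\ref{table:validinputs}, and you compress the $g=h$ case that the paper writes out.
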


\begin{proof}
If $g\prec h$, Definitions~\ref{def:validinput} and~\ref{def:order} imply for
all $g'\in \res(g)$ and all $h'\in \res(h)$ that $g'\preceq h'$
(cf.~Table~\ref{table:validinputs}).
Observation~\ref{obs:identity} shows that
$\bigstarr \res(g)=g$ for any $g\in \validrg{B}$. From
Definition~\ref{def:closure}, we can thus conclude that
$\rgmaxM\{g,h\}=\bigstarr \res(h)=h$ and $\rgminM\{g,h\}=\bigstarr \res(g)=g$.

If $h\prec g$, analogous reasoning shows that
\begin{equation*}
(\rgmaxM\{g,h\},\rgminM\{g,h\})=(g,h)\neq (h,g)\:.
\end{equation*}
The remaining case is that $g=h$. In the case where $g$ does not contain an
$\metas$ bit, we have $\rgmaxM\{g,h\}=\rgmax\{g,h\}=g=h$. Considering the second
case ($g=h=\rg_B(x)*\rg_B(x+1)$, for $x\in[2^B-1]$), we get that
$\rgmaxM\{g,h\}= \bigstarr\{\rg_B(x),\rg_B(x+1)\}=\rg_B(x)*\rg_B(x+1)=h$.
Likewise, $\rgminM\{g,h\}=h=g$.
\end{proof}

In other words, $\rgmaxM$ and $\rgminM$ are the $\max$ and $\min$ operators
w.r.t.\ the total order on valid strings shown in Table~\ref{table:validinputs},
e.g.,
\begin{compactitem}
  \item $\rgmaxM\{1001,1000\}=\rg_4(15)=1000$,
  \item $\rgmaxM\{0\metas10,0010\}=\rg_4(3)*\rg_4(4)=0\metas10$, and
  \item $\rgmaxM\{0\metas10,0110\}=\rg_4(4)=0110$.
\end{compactitem}
Hence, our task is to implement $\rgmaxM$ and $\rgminM$.
\begin{definition}[$\twosort(B)$]\label{def:twosort}
For $B\in \NN$, a $\twosort(B)$ circuit is specified as follows.
\begin{compactitem}
  \item[\textbf{\emph{Input:}}] $g,h \in \validrg{B}$\:,
  \item[\textbf{\emph{Output:}}] $g',h' \in \validrg{B}$\:,
  \item[\textbf{\emph{Functionality:}}] $g'=\rgmaxM\{g,h\}$, $h'=\rgminM\{g,h\}$.
\end{compactitem}
\end{definition}

\subsection{Computational Model and CMOS Logic}
We seek to use standard components and combinational logic only. We use the
model of \cite{friedrichs18containing}, which specifies the behavior of basic gates on
metastable inputs via the metastable closure of their behavior on binary inputs,
cf.~Table~\ref{tab:gates}. We use the standard notational convention that $a+b
= \ORR_{\metas}(a,b)$ and $ab = \ANDD_{\metas}(a,b)$.

\begin{table}
\centering
\caption{Extensions to metastable inputs of AND (left), OR (center), and an
inverter (right) according to Kleene logic.}\label{tab:gates}
\begin{tabular}{c|ccc}
  \diagbox[width=2.4em,height=2.4em]{b}{a} & 0 & 1 & \metas\\ \hline
  0 & 0 & 0 & 0\\
  1 & 0 & 1 & \metas\\
  \metas & 0 & \metas & \metas
\end{tabular}
\qquad
\begin{tabular}{c|ccc}
  \diagbox[width=2.4em,height=2.4em]{b}{a} & 0 & 1 & \metas\\ \hline
  0 & 0 & 1 & \metas\\
  1 & 1 & 1 & 1\\
  \metas & \metas & 1 & \metas
\end{tabular}
\qquad
\begin{tabular}{c|c}
a & $\bar{\text{a}}$\\ \hline
0 & 1\\
1 & 0\\
\metas & \metas
\end{tabular}
\end{table}

Note that in this logic, most familiar identities hold: $\ANDD$ and $\ORR$ are
associative, commutative, and distributive, and DeMorgan's laws hold. However,
naturally the law of the excluded middle becomes void. For instance, in general,
$\ORR(x,\bar{x})\neq 1$, as $\ORR(\metas,\metas) = \metas$.

We now argue that basic CMOS gates behave according to this logic, justifying
the model. For the sake of an intuitive notation, we apply some slightly unusual
conventions. In the following, let $R_1$ be a wildcard that can refer to any
resistance that is ``low'', i.e., close to being negligible, as e.g.\ that of a
transistor in its stable conducting state (i.e., any PMOS transistor subjected
to a low gate voltage or any NMOS transistor subjected to a high
gate voltage). Similar, denote by $R_0$ any resistance that is ``high'', i.e., large
compared to $R_1$, such as the resistance of a transistor in its stable
non-conducting state. Thus, with a stable input $b\in \BB{}$ (where we identify
$0$ with low and $1$ with high voltage), an NMOS transistor attains resistance
$R_b$, while a PMOS transistor attains resistance $R_{\bar{b}}$. We can extend
this to unstable inputs $\metas$ by making the conservative assumption that
$R_{\metas}$ is an arbitrary (possibly time-dependent) resistance.

With this notation, we can see that parallel and serial composition of
transistors implements $\ANDD$ and $\ORR$ in Kleene logic, respectively.
\begin{lemma}\label{lem:nmos}
For $k\in \NN$ sufficiently small so that $kR_1\ll R_0$, let $a_1,\ldots,a_k\in
\BBM{}$ be input signals fed to $k$ NMOS transistors interconnected (i) in
parallel or (ii) sequentially. Set $\sigma\coloneqq \sum_{i=1}^k a_i$ and $\pi
\coloneqq \prod_{i=1}^k a_i$, i.e., the $\ORR$ resp. $\ANDD$ over all inputs.
Then the resistance between input and output
of the resulting subcircuit is (roughly) (i) $R_{\sigma}$ resp.\ (ii) $R_{\pi}$.
\end{lemma}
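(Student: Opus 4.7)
The plan is to case-split on the Kleene-logic value of the output ($\sigma$ in the parallel case, $\pi$ in the serial case) and compute in each case the combined resistance of the transistor network using the standard rules for parallel ($1/R=\sum 1/R_i$) and serial ($R=\sum R_i$) composition. Since the definition of $\ORR_\metas$ and $\ANDD_\metas$ via the metastable closure partitions $\BBM{k}$ exactly by output value (cf.\ Table~\ref{tab:gates}), this case analysis is exhaustive and matches the three possible target resistances $R_0$, $R_1$, $R_\metas$ one-to-one.

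For the parallel case (i), I would first observe that if $\sigma=0$ then all $a_i=0$, every transistor sits at resistance $R_0$, and parallel composition gives $R_0/k$; by the hypothesis $kR_1\ll R_0$ this is still ``high'' in the sense of our model, hence $R_{\sigma}=R_0$. If $\sigma=1$, then at least one $a_i=1$ and no $a_i=\metas$ (by the $\ORR$ truth table), so one branch sits at $R_1$ and the parallel combination is dominated by it, yielding $R_1$. Finally, if $\sigma=\metas$, then at least one $a_i=\metas$ and none equals $1$; the metastable branch contributes an arbitrary resistance, the remaining branches contribute $R_0$, and the parallel combination is again an arbitrary resistance, consistent with $R_\metas$.

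The serial case (ii) is handled symmetrically with the roles of $R_0, R_1$ and of $\ORR, \ANDD$ swapped: if $\pi=1$ then every $a_i=1$, so the total is $kR_1\ll R_0$, hence $R_1$; if $\pi=0$ then some $a_i=0$ and no $a_i=\metas$ (by the $\ANDD$ truth table), so the series sum contains an $R_0$ term and is therefore ``high,'' matching $R_0$; if $\pi=\metas$ then some $a_i=\metas$ and no $a_i=0$, so the sum is $R_\metas+(k{-}1)R_1$, which is again arbitrary and consistent with $R_\metas$.

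The main obstacle I expect lies exactly in the metastable sub-case: one must argue that combining an ``arbitrary'' $R_\metas$ in series or parallel with stable resistances still behaves like $R_\metas$ rather than being forced into the $R_0$ or $R_1$ regime. This is where the informal ``roughly'' in the lemma statement and the conservative modelling convention that $R_\metas$ may assume any (possibly time-dependent) value do the real work: because no bounds have been placed on $R_\metas$, any parallel or serial combination containing it inherits the same freedom, which is precisely what the metastable output value of the corresponding Kleene gate encodes. The only quantitative input needed beyond this is the assumption $kR_1\ll R_0$, which I use exactly twice\dash---once to collapse $R_0/k$ back to $R_0$ in the parallel $\sigma=0$ case, and once to collapse $kR_1$ back to $R_1$ in the serial $\pi=1$ case.
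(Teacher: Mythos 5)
Your approach is essentially the paper's: a case analysis on the output value, combined with the standard parallel/serial resistance composition rules and the hypothesis $kR_1\ll R_0$ used exactly where you use it (to keep $R_0/k$ ``high'' and $kR_1$ ``low''). Two remarks. First, the paper's own proof of this lemma treats only the stable cases $\sigma,\pi\in\BB{}$ and silently leaves the $\metas$ cases to the wildcard semantics of $R_{\metas}$; that discussion surfaces only later, in the proof of Theorem~\ref{thm:gates_correct}. Your explicit handling of the metastable sub-case, including the observation that an unbounded $R_{\metas}$ in any series or parallel combination keeps the total unconstrained, is therefore a welcome addition rather than a deviation. Second, and this you should fix: your characterizations ``if $\sigma=1$ then no $a_i=\metas$'' and ``if $\pi=0$ then no $a_i=\metas$'' are false in Kleene logic, since $\ORR(1,\metas)=1$ and $\ANDD(0,\metas)=0$ (cf.~Table~\ref{tab:gates}); a stable output does not preclude metastable inputs. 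Fortunately these claims are not load-bearing: in the parallel $\sigma=1$ case the single $R_1$ branch bounds the combined resistance by $R_1$ from above regardless of what the other (possibly arbitrary, but nonnegative) branch resistances are, and dually in the serial $\pi=0$ case the single $R_0$ term bounds the sum from below. This is exactly how the paper argues ($1/R\geq 1/R_1$ resp.\ $R\geq R_0$), so simply drop the erroneous side-claims and your proof is sound.
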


\begin{proof}
Denote by $R$ the resistance between the input and output of the subcircuit.
Suppose first that $\sigma = 0$, i.e., $a_i=0$ for all $i$. Then, for parallel
composition, we get that
$1/R=\sum_{i=1}^k 1/R_{a_i}=k/R_0$,
yielding that $R\geq R_0/k$. On the other hand, if $\sigma = 1$, there is an
index $i$ such that $a_i=1$, yielding for parallel composition that
$1/R=\sum_{i=1}^k 1/R_{a_i}\geq 1/R_1$
and thus $R\leq R_1$. This shows (i).

Now consider sequential composition and suppose first that $\pi=1$, i.e.,
$a_i=1$ for all $i$. Thus,
$R = \sum_{i=1}^k R_{a_i}\leq k R_1$.
In case $\pi=0$, there is some index $i$ so that $a_i=0$, implying that
$R = \sum_{i=1}^k R_{a_i}\geq R_0$.
\end{proof}

The same arguments apply to PMOS transistors.
\begin{corollary}\label{cor:pmos}
For $k\in \NN$ sufficiently small so that $kR_1\ll R_0$, let $a_1,\ldots,a_k\in
\BBM{}$ be input signals fed to $k$ PMOS transistors interconnected (i) in
parallel or (ii) sequentially. Set $\sigma\coloneqq \sum_{i=1}^k \bar{a}_i$ and
$\pi \coloneqq \prod_{i=1}^k \bar{a}_i$, i.e., the $\ORR$ resp. $\ANDD$ over all inputs.
Then the resistance between input and
output of the resulting subcircuit is (roughly) (i) $R_{\sigma}$ resp.\ (ii)
$R_{\pi}$.
\end{corollary}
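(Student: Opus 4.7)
The plan is to derive Corollary~\ref{cor:pmos} from Lemma~\ref{lem:nmos} by exploiting the symmetry between PMOS and NMOS transistors highlighted in the paragraph preceding the lemma. There, an NMOS transistor with stable gate input $b\in\BB{}$ attains resistance $R_b$, while a PMOS transistor with the same input $b$ attains resistance $R_{\bar{b}}$. Consequently, as a two-terminal resistive element between source and drain, a PMOS transistor fed $a_i$ is indistinguishable from an NMOS transistor fed $\bar{a}_i$. My first step would be to state this correspondence precisely and reduce everything to the already-proved NMOS case.

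Second, I would verify that the correspondence extends verbatim to metastable inputs. Kleene negation satisfies $\bar{\metas}=\metas$, and by the modeling convention $R_\metas$ is an arbitrary, possibly time-dependent resistance that is defined in the same way for either transistor type. Therefore, for every $a_i\in\BBM{}$, the resistance contributed by a PMOS transistor on input $a_i$ equals $R_{\bar{a}_i}$, matching what Lemma~\ref{lem:nmos} would assign to an NMOS transistor on input $\bar{a}_i$.

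With this reduction in place, the third step is simply to invoke Lemma~\ref{lem:nmos} on the inputs $\bar{a}_1,\ldots,\bar{a}_k\in\BBM{}$ in place of $a_1,\ldots,a_k$. For parallel composition, the lemma yields resistance (roughly) $R_{\sum_{i=1}^{k}\bar{a}_i}=R_\sigma$; for sequential composition, it yields $R_{\prod_{i=1}^{k}\bar{a}_i}=R_\pi$. These are exactly the statements claimed in the corollary, so no additional resistance estimates are required beyond those already established in the lemma.

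There is no real obstacle here; the only point that needs a careful sentence is the metastable case, because the proof of Lemma~\ref{lem:nmos} explicitly analyses only the stable values $\sigma,\pi\in\BB{}$ via the parallel/series combinations of $R_0$ and $R_1$. I would remark that when one of the inputs is $\metas$, both the NMOS and PMOS case fall under the ``arbitrary resistance'' convention, so the same symbolic substitution $a_i\leftrightarrow\bar{a}_i$ remains valid and the conclusion in terms of $R_\sigma$ and $R_\pi$ is meaningful in $\BBM{}$ as well. With this remark, the proof reduces to one line invoking the lemma.
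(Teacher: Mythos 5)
Your proposal is correct and matches the paper's treatment: the paper simply remarks that ``the same arguments apply to PMOS transistors,'' i.e., it relies on exactly the symmetry you make explicit, namely that a PMOS transistor on input $a_i$ is, as a resistive element, an NMOS transistor on input $\bar{a}_i$ (including $\bar{\metas}=\metas$ with $R_\metas$ arbitrary), so Lemma~\ref{lem:nmos} applied to $\bar{a}_1,\ldots,\bar{a}_k$ yields the claim. Your version just spells out the reduction more carefully than the paper does.
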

We remark that the factor of $k$ reduction in the gap between $R_1$ and $R_0$
may imply that a gate's output signal needs to be regenerated using a buffer.
However, this is the same behavior as for logic that assumes stable signals
only, so standard CMOS design techniques account for this.

\begin{figure}
\centering
\resizebox{.2\columnwidth}{!}{
\begin{circuitikz}\draw
node[pmos, thick] (pmosA) {}
(pmosA.D) node[nmos, anchor=D, thick] (nmosA) {}
(pmosA.G) node[anchor=east] {A}
(nmosA.G) node[anchor=east] {A}
(pmosA.S) node[] (vdd) {}
(vdd) node[right] {$V_{DD}$}
node[right=of nmosA.D] (o) {}
(nmosA.S) node[ground, thick, label={[label distance=8pt]-10:$V_{SS}$}] (gnd) {}
(nmosA.D) node[circ]{} [thick] -- (o)
;
\end{circuitikz}}
\resizebox{.35\columnwidth}{!}{
\begin{circuitikz}\draw
node[pmos, thick] (pmosA) {}
node[pmos, right=of pmosA, thick] (pmosB) {}
(pmosA.D) node[nmos, anchor=north west, thick] (nmosA) {}
(nmosA.S) node[nmos, anchor=D, thick] (nmosB) {}
(pmosA.G) node[anchor=east] {A}
(pmosB.G) node[anchor=east] {B}
(nmosA.G) node[anchor=east] {A}
(nmosB.G) node[anchor=east] {B}
node[left = .9 of pmosB.S] (vddconn) {}
node[above = .3 of vddconn] (vdd) {}
(vdd) node[right] {$V_{DD}$}
node[right=of pmosB.D] (o) {}
(nmosB.S) node[ground, thick, label={[label distance=8pt]-10:$V_{SS}$}] (gnd) {}
(vdd) [thick] -- ++(down:.6)
(pmosA.S) -- (pmosB.S)
(pmosA.D) -- (o)
(vddconn) node[circ] {}
(pmosB.D) node[circ] {}
(nmosA.D) node[circ] {}
;
\end{circuitikz}}
\resizebox{.35\columnwidth}{!}{
\begin{circuitikz}\draw
node[pmos, thick] (pmosA) {}
(pmosA.D) node[pmos, anchor=S, thick] (pmosB) {}
(pmosB.D) node[nmos, anchor=north west, thick] (nmosB) {}
node[nmos, left=of nmosB, thick] (nmosA) {}
(pmosA.G) node[anchor=east] {A}
(pmosB.G) node[anchor=east] {B}
(nmosA.G) node[anchor=east] {A}
(nmosB.G) node[anchor=east] {B}
(pmosA.S) node[above=.3] (vdd) {}
(vdd) node[right] {$V_{DD}$}
node[right=of nmosB.D] (o) {}
node[left = .9 of nmosB.S] (gndconn) {}
node[circ] at (gndconn){}
(gndconn) node[ground, thick, label={[label distance=8pt]-10:$V_{SS}$}] (gnd) {}
(nmosA.S) [thick] -- (nmosB.S)
(nmosA.D) -- (o)
(pmosA.S) -- (vdd)
(pmosB.D) node[circ] {}
(nmosB.D) node[circ] {}
;
\end{circuitikz}}
\caption{Standard transistor-level implementations of inverter (left),
$\NAND$ (center), and $\NOR$ (right) gates in CMOS technology. The
latter can be turned into $\ANDD$ and $\ORR$, respectively, by appending an
inverter.}
\label{fig:transistorgates}
\end{figure}
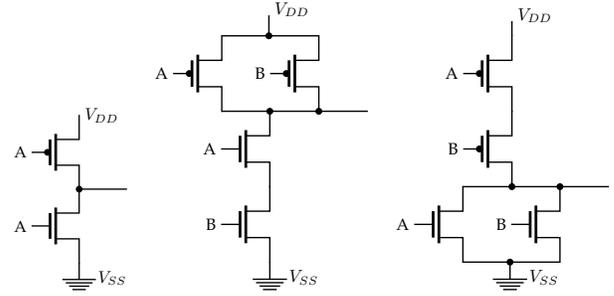

From the above observations, we can readily infer that standard CMOS gate
implementations behave according to Kleene logic in face of potentially
metastable signals, justifying the model from~\cite{friedrichs18containing}.
\begin{theorem}\label{thm:gates_correct}
The CMOS gates depicted in Figure~\ref{fig:transistorgates} implement the
truth tables given in Table~\ref{tab:gates}.
\end{theorem}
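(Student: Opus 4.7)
The plan is to handle each of the three gate types in Figure~\ref{fig:transistorgates}---inverter, $\NAND$, $\NOR$---in turn, applying Lemma~\ref{lem:nmos} and Corollary~\ref{cor:pmos} to the pull-up (PMOS) and pull-down (NMOS) networks of the gate and matching the resulting output voltage against the corresponding entry of Table~\ref{tab:gates}. Since the paper notes that $\ANDD$ and $\ORR$ are obtained from $\NAND$ and $\NOR$ by appending an inverter, I would establish the inverter first so that its truth table can be recycled for the composite gates.

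For the inverter, the analysis is direct: with a single PMOS above and a single NMOS below, both driven by $A$, the PMOS has resistance $R_{\bar A}$ and the NMOS has resistance $R_A$, so that for $A\in\BB{}$ exactly one of the two is $R_0$ and the other $R_1$, driving the output to the complementary rail, i.e., logic $\bar A$. For $A=\metas$, both transistors have the unspecified resistance $R_\metas$, so the voltage divider between $V_{DD}$ and $V_{SS}$ is undetermined and the output is modeled as $\metas$, matching the right-hand table.

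For $\NAND$, the pull-up is a parallel pair of PMOS and the pull-down a series pair of NMOS. Corollary~\ref{cor:pmos}(i) yields pull-up resistance $R_{\bar A+\bar B}$, and Lemma~\ref{lem:nmos}(ii) yields pull-down resistance $R_{AB}$. Going through the nine cases $(A,B)\in\BBM{2}$, one verifies two regimes: in the six cases where $(\bar A+\bar B,\,AB)$ evaluates in Kleene logic to $(1,0)$ or $(0,1)$, the output is firmly driven to the complementary rail and equals $\overline{AB}$; in the remaining three cases $(A,B)\in\{(1,\metas),(\metas,1),(\metas,\metas)\}$ both expressions evaluate to $\metas$, both networks have resistance $R_\metas$, and the output is $\metas$. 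Composing with the already-verified inverter reproduces the $\ANDD$ entries of Table~\ref{tab:gates}. The $\NOR$ case is symmetric, using series PMOS and parallel NMOS to obtain pull-up $R_{\bar A\bar B}$ and pull-down $R_{A+B}$ via Corollary~\ref{cor:pmos}(ii) and Lemma~\ref{lem:nmos}(i), together with DeMorgan's law in Kleene logic.

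The main obstacle is the metastable bookkeeping: whenever Kleene $\ANDD$ or $\ORR$ collapses a metastable input to a stable value---$0\cdot\metas=0$ or $1+\metas=1$---one must confirm that the corresponding transistor network actually contains a stably conducting branch, so that any $R_\metas$ contributions sit harmlessly in parallel with an $R_1$ or in series with an $R_0$ and do not spoil the output. A short case check rules out the potentially problematic combinations (pull-up $R_1$ with pull-down $R_\metas$, or pull-up $R_\metas$ with pull-down $R_0$) for both $\NAND$ and $\NOR$, so that the simple two-regime classification above is in fact exhaustive and the Kleene-logic identification of the network resistances lifts cleanly to the correct truth-table entries.
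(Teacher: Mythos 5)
Your proposal is correct and follows essentially the same route as the paper: read off the pull-up and pull-down resistances from Lemma~\ref{lem:nmos} and Corollary~\ref{cor:pmos}, conclude a stable output whenever one network is firmly conducting and the other firmly non-conducting, and conclude $\metas$ when the wildcard $R_{\metas}$ leaves the voltage divider undetermined. The ``main obstacle'' you flag (metastable inputs that Kleene logic collapses to a stable value) is already absorbed into the statements of Lemma~\ref{lem:nmos} and Corollary~\ref{cor:pmos}, since $\sigma$ and $\pi$ there are the Kleene $\ORR$/$\ANDD$ of the inputs, so your extra case check is sound but redundant.
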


\begin{proof}
The output of the gates is $1$ (high voltage) if the resistances from $V_{DD}$
and $V_{SS}$ to the output are low (i.e., roughly $R_1$) and high ($R_0$),
respectively. Similarly, it is $0$ if the roles are reversed. Thus,
Lemma~\ref{lem:nmos} and Corollary~\ref{cor:pmos} show the claim for stable
entries of the truth tables. For the unstable ones, setting $R_{\metas}$ (which
is a wildcard for an arbitrary resistance) to $R_0$ or $R_1$, respectively,
leads to different outcomes. Thus, the output voltage may attain almost any
value between $V_{DD}$ and $V_{SS}$, i.e., the output is $\metas$.
\end{proof}

Similar reasoning applies to many gates, e.g., NAND and NOR gates. We stress,
however, that the property of implementing the closure of the function computed
by the gate on stable values is not universal for CMOS logic. For instance,
standard transistor-level multiplexer implementations do not handle
metastability well, cf.~\cite{friedrichs2017efficient}.

\section{Decomposition of the Task}\label{sec:operators}

In this section, we show that computing $\rgmaxM\{g,h\}$ and $\rgminM\{g,h\}$
for valid strings $g,h\in \validrg{B}$ can be broken down into composing
simple operators in $\BBM{2}\times \BBM{2}\to \BBM{2}$.

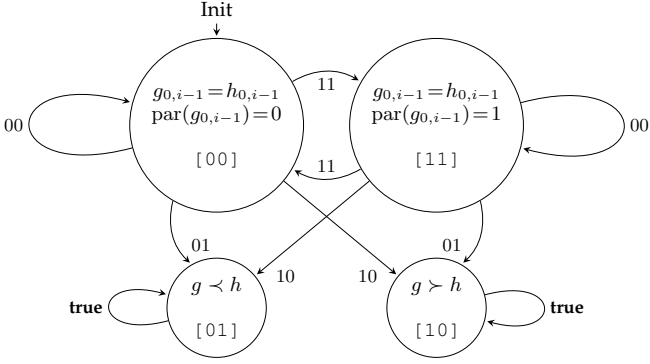
\begin{figure}[t!]
  \centering \small
\resizebox{\columnwidth}{!}{
\begin{tikzpicture}
[>=stealth,shorten >=1pt,auto,node distance=0.7cm, initial above,initial text=Init,scale=0.5]
   \node[state,initial,align=center] (00)   {$g_{0,i-1}\!=\!h_{0,i-1}$\\$\parity(g_{0,i-1})\!=\!0$\\\\\texttt{[00]}};
   \node[state] (11) [right=of 00,align=center] {$g_{0,i-1}\!=\!h_{0,i-1}$\\$\parity(g_{0,i-1})\!=\!1$\\\\\texttt{[11]}};
   \node[state] (01) [below =of 00,align=center] {$g\prec h$\\\\\texttt{[01]}};
   \node[state](10) [below =of 11,align=center] {$g\succ h$\\\\\texttt{[10]}};

   \begin{scope}[every node/.style={scale=.9}]
    \path[->]
    (00) edge [bend left,swap,align=center]node {$11$} (11)
          edge [loop left,align=center] node {$00$} ()
           edge  node [swap,very near end,align=center] {$10$}  (10)
             edge  [bend right,align=center,very near end] node {$01$} (01)
    (01) edge  [loop left] node  {$\textbf{true}$} ()
    (10) edge  [loop right] node  {$\textbf{true}$} ()
    (11) edge  [bend left,align=center] node [swap] {$11$} (00)
          edge  node [near start,align=center,very near end]  {$10$} (01)
           edge  [bend left,align=center,swap, very near end] node {$01$} (10)
           edge [loop right,align=center] node {$00$} ();
    \end{scope}
\end{tikzpicture}}
  \caption{Finite state machine determining which of two Gray code
  inputs $g,h\in \BB{B}$ is larger. In each step, it receives
  $g_ih_i$ as input. State encoding is given in square
  brackets.}\label{fig:fsm}
\end{figure}

\subsection{Comparing Stable Gray Codes via an FSM}\label{sec:stable}

Figure~\ref{fig:fsm} depicts a finite state machine performing a four-valued
comparison of two Gray code strings. In each step of processing inputs $g,h\in
\BB{B}$, it is fed the pair of $i^{th}$ input bits $g_ih_i$. In the following,
we denote by $s^{(i)}(g,h)$ the state of the machine after $i$ steps, where
$s^{(0)}(g,h)\coloneqq 00$ is the starting state. For ease of notation, we will
omit the arguments $g$ and $h$ of $s^{(i)}$ whenever they are clear from context.
Table~\ref{tab:exampleone} shows an example of a run of the finite state machine.
\begin{table}[t]
  \centering
  \caption{Run of the FSM on inputs $g=1001$ and $h=1000$}
  \label{tab:exampleone}
\begin{tabular}{c|c|c|c|c|c}
  $i$ & $0$ & $1$ & $2$ & $3$ & $4$ \\ \hline
  $g_ih_i$ & & $11$ & $00$ & $00$ & $10$ \\ \hline
  $s^{(i)}=s^{(i-1)}\diamond g_ih_i$ & $00$ & $11$ & $11$ & $11$ & $01$ \\ \hline
  $g'_i=\outt(s^{(i-1)},g_ih_i)_1$ & & $1$ & $0$ & $0$ & $0$\\\hline
  $h'_i=\outt(s^{(i-1)},g_ih_i)_2$ & & $1$ & $0$ & $0$ & $1$\\
\end{tabular}
\end{table}

Because the parity keeps track of whether the remaining bits are to be compared
w.r.t.\ the standard or ``reflected'' order, the state machine performs the
comparison correctly w.r.t.\ the meaning of the states indicated in
Figure~\ref{fig:fsm}.
\begin{lemma}\label{lem:comparison}
Let $g,h\in \BB{B}$ and $i\in [B+1]$. Then
\begin{itemize}
  \item $s^{(i)}=00$ is equivalent to $g_{1,i}=h_{1,i}$ and $g\prec h$ if and
  only if $g_{i+1,B}\prec h_{i+1,B}$,
  \item $s^{(i)}=11$ is equivalent to $g_{1,i}=h_{1,i}$ and $g\prec h$ if and
  only if $g_{i+1,B}\succ h_{i+1,B}$,
  \item $s^{(i)}=01$ is equivalent to $g\prec h$, and
  \item $s^{(i)}=10$ is equivalent to $g\succ h$.
\end{itemize}
\end{lemma}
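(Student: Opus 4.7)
The plan is to prove the lemma by induction on $i$, built around a single structural fact about the binary reflected Gray code. Specifically, I first establish (by a short auxiliary induction on $i$, unrolling the recursive definition of $\rg_B$) the following \emph{direction lemma}: for any $g,h\in\BB{B}$ and $i\in[B]$ with $g_{1,i}=h_{1,i}$,
\begin{equation*}
g\prec h \Longleftrightarrow
\begin{cases} g_{i+1,B}\prec h_{i+1,B} & \text{if } \parity(g_{1,i})=0,\\
g_{i+1,B}\succ h_{i+1,B} & \text{if } \parity(g_{1,i})=1.\end{cases}
\end{equation*}
The parity-$1$ case is exactly the ``1-prefix counts down'' clause of the recursive definition of $\rg_B$, and each extra matching pair of bits either preserves or flips the parity, so this reduces to a one-step verification. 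I also note the easy consequence that if $g_{i+1}=0$ and $h_{i+1}=1$ (resp.\ the reverse), then $g_{i+1,B}\prec h_{i+1,B}$ (resp.\ $\succ$) unconditionally, since the two suffixes lie in different halves of $\rg_{B-i}$.

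The base case $i=0$ is immediate: $s^{(0)}=00$ by definition, the empty prefix is trivially equal and has parity $0$, and the $00$-bullet reduces to $g\prec h\Leftrightarrow g_{1,B}\prec h_{1,B}$; the remaining three bullets hold vacuously. For the inductive step I assume all four bullets at step $i$ and verify them at step $i+1$, case-splitting on $s^{(i)}$. If $s^{(i)}\in\{01,10\}$, the FSM self-loops, so $s^{(i+1)}=s^{(i)}$, and the associated statement ``$g\prec h$'' or ``$g\succ h$'' is independent of $i$ and carries over unchanged.

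If $s^{(i)}=00$, the IH gives $g_{1,i}=h_{1,i}$ with parity $0$ and $g\prec h\Leftrightarrow g_{i+1,B}\prec h_{i+1,B}$. For the matching inputs $g_{i+1}h_{i+1}\in\{00,11\}$, the prefix stays equal and the parity either stays $0$ (going to state $00$) or flips to $1$ (going to state $11$); in both sub-cases, applying the direction lemma at position $i+1$ yields exactly the required bullet. For the differing inputs $g_{i+1}h_{i+1}\in\{01,10\}$, the consequence of the direction lemma pins down the order of $g_{i+1,B}$ and $h_{i+1,B}$; combining this with the IH biconditional decides between $g\prec h$ and $g\succ h$, matching the transitions to $01$ and $10$ respectively. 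The case $s^{(i)}=11$ is symmetric: the parity-$1$ clause of the direction lemma supplies the ``flipped'' IH, which then composes with the same sub-case analysis, explaining why the input $01$ (resp.\ $10$) now routes to state $10$ (resp.\ $01$).

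The main obstacle is purely bookkeeping across the $2\cdot 4=8$ non-trivial sub-cases of the inductive step: in each one, I need to reconcile (i) how the parity updates, (ii) which direction the IH dictates for the current state, and (iii) which direction the direction lemma dictates after the new bit is absorbed. Once the direction lemma is in place, every sub-case collapses to a one-line check against Figure~\ref{fig:fsm}.
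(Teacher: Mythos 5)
Your proof is correct and follows essentially the same route as the paper: induction on $i$ with a case split on the current state and the next input pair, grounded in the reflection property of the binary reflected Gray code. The only difference is organizational\dash---you factor the reflection property out as a standalone ``direction lemma'' with its own induction, whereas the paper invokes the recursive definition of $\rg_B$ inline at each step.
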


\begin{proof}
We show the claim by induction on $i$. It holds for $i=0$, as $s^{(i)}=00$,
$g_{1,0}=h_{1,0}$ is the empty string, and $g\prec h$ if and only if
$g_{1,B}=g\prec h=h_{1,B}$. For the step from $i-1\in [B]$ to $i$, we make a
case distinction based on $s^{(i-1)}$.
\begin{itemize}
  \item [$s^{(i-1)}=00$:] By the induction hypothesis, $g_{1,i-1}=h_{1,i-1}$ and
  $g\prec h$ if and only if $g_{i,B}\prec h_{i,B}$. Thus, if $g_ih_i=00$,
  $s^{(i)}=00$, $g_{1,i}=h_{1,i}$, and by the recursive definition of the code,
  $g_{i,B}\prec h_{i,B}\Leftrightarrow g_{i+1,B}\prec h_{i+1,B}$. Similarly, if
  $g_ih_i=11$, also $g_{1,i}=h_{1,i}$, but the code for the remaining bits is
  ``reflected,'' i.e., $g\prec h\Leftrightarrow g_{i+1,B}\succ h_{i+1,B}$. If
  $g_ih_i=01$, the definition implies that $g\prec h$ regardless of further
  bits, and if $g_ih_i=10$, $g\succ h$ regardless of further bits.
  \item [$s^{(i-1)}=11$:] Analogously to the previous case, noting that
  reflecting a second time results in the original order.
  \item [$s^{(i-1)}=01$:] By the induction hypothesis, $g\prec h$. As $01$ is an
  absorbing state, also $s^{(i)}=01$.
  \item [$s^{(i-1)}=10$:] By the induction hypothesis, $g\succ h$. As $10$ is an
  absorbing state, also $s^{(i)}=10$.\qedhere
\end{itemize}
\end{proof}

\begin{table}
\centering
\caption{Computing $\rgmax\{g,h\}_i$ and $\rgmin\{g,h\}_i$ from the current
state $s^{(i-1)}$ and inputs $g_i$ and $h_i$.}
\label{tab:output}
\begin{tabular}{|c||c |c |c |c |}
\hline \rule{0pt}{9pt}
$s^{(i-1)}$ & $\rgmax\{g,h\}_i$ & $\rgmin\{g,h\}_i$\\ \hline \hline
00 & $\max\{g_i,h_i\}$ & $\min\{g_i,h_i\}$\\
10 & $g_i$ & $h_i$\\
11 & $\min\{g_i,h_i\}$ & $\max\{g_i,h_i\}$\\
01 & $h_i$ & $g_i$\\
\hline
\end{tabular}
\end{table}
This lemma gives rise to a sequential implementation of $\twosort(B)$ based on
the given state machine, for input strings in $\BB{B}$.
Table~\ref{tab:output} lists the $i^{th}$ output bit as function of $s^{(i-1)}$
and the pair $g_ih_i$. Correctness of this computation follows immediately from
Lemma~\ref{lem:comparison}.
\begin{table}[t]
\centering
\caption{Operators for next state and output. The first operand is the current state,
the second is the next input.}
\label{tab:operators}
\begin{subtable}{.45\columnwidth}
\centering
\caption{The $\diamond$ operator.}\label{tab:diamond}
\begin{tabular}{|c||c |c |c |c |}
\hline
  $\diamond$ & 00 & 01 & 11 & 10\\ \hline \hline
    	  00 & 00 & 01 & 11 & 10 \\
	      01 & 01 & 01 & 01 & 01 \\
  		  11 & 11 & 10 & 00 & 01 \\
		  10 & 10 & 10 & 10 & 10 \\
  \hline
\end{tabular}
\end{subtable}
\begin{subtable}{.5\columnwidth}
\caption{The $\outt$ operator.}
\centering
\label{tab:outt}
\begin{tabular}{|c||c |c |c |c |}
\hline
  $\outt$ & 00 & 01 & 11 & 10\\ \hline \hline
       00 & 00 & 10 & 11 & 10 \\
	   01 & 00 & 10 & 11 & 01 \\
  	   11 & 00 & 01 & 11 & 01 \\
	   10 & 00 & 01 & 11 & 10 \\
  \hline
\end{tabular}
\end{subtable}
\end{table}

We can express the transition function of the state machine as an (as easily
verified) associative operator $\diamond$ taking the current state and input
$g_ih_i$ as argument and returning the new state. Then
$s^{(i)}=s^{(i-1)}\diamond g_ih_i$, where $\diamond$ is given in
Table~\ref{tab:diamond} and $s^{(0)}=00$. The $\outt$ operator is derived from
Table~\ref{tab:output} by evaluating $\rgmax\{g,h\}_i$ and $\rgmin\{g,h\}_i$ for
all possible values of $g_ih_i\in \BB{2}$. Noting that $s^{(0)}\diamond x =
00\diamond x=x$ for all $x\in \BB{2}$, we arrive at the following corollary.
\begin{corollary}\label{cor:comparison}
For all $i\in [1,B]$, we have that
\begin{equation*}
\rgmax\{g,h\}_i\rgmin\{g,h\}_i =
\outt\left(\bigdiamond_{j=1}^{i-1}g_jh_j,g_ih_i\right).
\end{equation*}
\end{corollary}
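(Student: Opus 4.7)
The plan is to splice together two ingredients already prepared: Lemma~\ref{lem:comparison}, which tells us what state $s^{(i-1)}$ means, and Table~\ref{tab:output}, which tells us how to turn the state together with the current input pair into the correct output bits. The task is then to identify $\bigdiamond_{j=1}^{i-1}g_jh_j$ with $s^{(i-1)}$ and to identify $\outt(\cdot,\cdot)$ with the function tabulated in Table~\ref{tab:output}.

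First I would argue that $\diamond$ is indeed associative, which is a $64$-entry sanity check on Table~\ref{tab:diamond}; equivalently, one can observe that the operator is obtained by folding the transition function of the FSM in Figure~\ref{fig:fsm}, so associativity amounts to the obvious fact that composing transitions is associative. Next, I would note that $00$ acts as a left identity, since Table~\ref{tab:diamond} gives $00\diamond x=x$ for every $x\in\BB{2}$; this matches the remark preceding the corollary and is exactly the reason we can omit $s^{(0)}$ from the $\diamond$-iteration. Combining these two facts with $s^{(i)}=s^{(i-1)}\diamond g_ih_i$ and $s^{(0)}=00$, a straightforward induction on $i$ yields
\begin{equation*}
s^{(i-1)}=\bigdiamond_{j=1}^{i-1}g_jh_j
\end{equation*}
for every $i\in[1,B]$, where the empty $\diamond$-product for $i=1$ is read as $s^{(0)}=00$.

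Second, I would invoke Lemma~\ref{lem:comparison} to attach the correct meaning to $s^{(i-1)}$: it indicates either that the prefixes $g_{1,i-1}$ and $h_{1,i-1}$ agree (with parity $0$ or $1$, i.e., states $00$ or $11$) or that the order between $g$ and $h$ has already been decided (states $01$ or $10$). Given this meaning, Table~\ref{tab:output} correctly specifies the $i^{th}$ bits of $\rgmax\{g,h\}$ and $\rgmin\{g,h\}$: in the ``equal-prefix'' states the recursive structure of reflected Gray code (possibly reflected once via the parity) determines which of $g_i,h_i$ is the larger bit of the maximum, while in the ``decided'' states one simply copies the corresponding input bit. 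Correctness of Table~\ref{tab:output} in these four cases is the content of Lemma~\ref{lem:comparison} combined with the recursive definition of $\rg_B$.

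Finally, it remains to check that the $\outt$ operator of Table~\ref{tab:outt} was defined in exactly this way, i.e., that for every state $s\in\{00,01,11,10\}$ and every input $g_ih_i\in\BB{2}$, the entry $\outt(s,g_ih_i)$ equals $\rgmax\{g,h\}_i\rgmin\{g,h\}_i$ as prescribed by Table~\ref{tab:output}. This is a routine $16$-entry verification and is the only remotely tedious step; there is no real obstacle beyond making sure that the concatenation of the two output bits matches Table~\ref{tab:outt} row by row. Putting the two identifications together gives
\begin{equation*}
\rgmax\{g,h\}_i\rgmin\{g,h\}_i=\outt\!\left(s^{(i-1)},g_ih_i\right)=\outt\!\left(\bigdiamond_{j=1}^{i-1}g_jh_j,\,g_ih_i\right),
\end{equation*}
which is the claim.
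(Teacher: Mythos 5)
Your argument matches the paper's: the corollary is obtained there exactly as you describe, by combining the facts that $s^{(0)}=00$ is a left identity for $\diamond$ (so $s^{(i-1)}=\bigdiamond_{j=1}^{i-1}g_jh_j$) with the correctness of Table~\ref{tab:output} (and hence of $\outt$), which follows from Lemma~\ref{lem:comparison}. One minor caveat: associativity of $\diamond$ does not follow from ``composing transitions is associative'' alone\dash---it is a special property of this FSM that the composed transition functions are again represented by elements of the state set\dash---but the direct check of Table~\ref{tab:diamond} that you also offer is what the paper relies on, and for this corollary associativity is only needed to make the unparenthesized $\bigdiamond$ notation well-defined.
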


Our goal in this section is to extend this approach to potentially metastable
inputs.

\subsection{Dealing with Metastable Inputs}\label{sec:optimal}

Our strategy is to replace all involved operators by their metastable closure:
for $i\in [1,B]$ (i) compute $s^{(i)}_{\metas}$, (ii) determine
$\rgmaxM\{g,h\}_i$ and $\rgminM\{g,h\}_i$ according to
Table~\ref{tab:output}, and finally (iii) exploit associativity of the operator
computing the state $s_\metas^{(i)}$ for usage in the PPC framework (\cite{ladner1980parallel},
see Section~\ref{sec:ppc}). Thus, we only need to implement $\diamond_{\metas}$
and the $\outt_\metas$ (both of constant size), plug them into the framework,
and immediately obtain an efficient circuit.

The reader may ask why we compute $s^{(i)}_{\metas}$ for all
$i\in [0,B-1]$ instead of computing only $s^{(B)}_{\metas}$ with a simple tree
of $\diamond_\metas$ elements, which would yield a smaller circuit. Since
$s^{(B)}_{\metas}$ is the result of the comparison of the entire strings, it
could be used to compute all outputs, i.e., we could compute the output by
$\outt_\metas(s^{(B)}_{\metas}, g_ih_i)$ instead of $\outt_\metas(s^{(i-1)}_{\metas},
g_ih_i)$. However, in case of metastability, this may lead to incorrect results.
This can be seen in the example run of the FSM given in
Table~\ref{tab:exampletwo}. We thus compute every intermediate state
$s^{(i)}_{\metas}$.

\begin{table}[t]
  \centering
  \caption{Run of the FSM on inputs $g=0\metas10$ and $h=0010$, showing
  that computing only the last state is insufficient. This yields
  $\outt_\metas(1\metas,\metas0)=\bigstarr\{00,01,10\}=\metas\metas$ as second
  output, but $\outt_\metas(00,\metas0)=\bigstarr\{00,10\}=\metas0$ is correct.}
  \label{tab:exampletwo}
\begin{tabular}{c|c|c|c|c|c}
  $i$ & $0$ & $1$ & $2$ & $3$ & $4$ \\ \hline
  $g_ih_i$ & & $00$ & $\metas0$ & $11$ & $00$ \\ \hline
  $s_\metas^{(i)}=s_\metas^{(i-1)}\diamond_\metas g_ih_i$ & $00$ & $00$ & $\metas0$ & $1\metas$ & $1\metas$ \\ \hline
  $\outt_\metas(s_\metas^{(4)},g_ih_i)$ & & $00$ & $\metas\metas$ & $11$ & $00$\\ \hline
  $\outt_\metas(s_\metas^{(i-1)},g_ih_i)$ & & $00$ & $\metas0$ & $11$ & $00$\\
\end{tabular}
\end{table}

Unfortunately, even with this modification it is not obvious that our approach
yields correct outputs. There are three hurdles to overcome:
\begin{enumerate}
\item [(P1)] Show that $\diamond_{\metas}$ is associative.
\item [(P2)] Show that repeated application of $\diamond_{\metas}$ computes
$s^{(i)}_{\metas}$.
\item [(P3)] Show that applying $\outt_\metas$ to $s^{(i-1)}_\metas$ and
$g_ih_i$ results for all valid strings in $\rgmaxM\{g,h\}_i\rgminM\{g,h\}_i$.
\end{enumerate}

Regarding the first point, we note the statement that $\diamond_{\metas}$ is
associative does not depend on $B$. In other words, it can be verified by
checking for all possible $x,y,z\in \BBM{2}$ whether
$(x\diamond_{\metas}y)\diamond_{\metas}z=x\diamond_{\metas}(y\diamond_{\metas}z)$.
While it is tractable to manually verify all $3^6=729$ cases (exploiting various
symmetries and other properties of the operator), it is tedious and prone to
errors. Instead, we verified that both evaluation orders result in the same
outcome by a short computer program.
\begin{theorem}\label{thm:assoc}
(P1) holds, i.e., $\diamond_{\metas}$ is associative.
\end{theorem}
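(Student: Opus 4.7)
The plan is to verify associativity by reducing the statement to a finite check over all $9^3=729$ triples $(x,y,z)\in\BBM{2}\times\BBM{2}\times\BBM{2}$. The essential subtlety I need to flag up front is that associativity of $\diamond$ on $\BB{2}$ does \emph{not} automatically transfer to $\diamond_\metas$: as noted right after Observation~\ref{obs:subset}, in general $\res(\bigstarr S)\supsetneq S$, so the intermediate superposition $x\diamond_\metas y$ may introduce ``phantom'' stable resolutions that were not attainable from $\res(x)\times\res(y)$. Hence I cannot pull the outer closure inside and cheaply rewrite both evaluations as the single superposition $\bigstarr\{a\diamond b\diamond c : a\in\res(x),\,b\in\res(y),\,c\in\res(z)\}$, so a direct case analysis is indeed necessary.

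First I would unfold Definition~\ref{def:closure} on both sides,
\[
(x\diamond_\metas y)\diamond_\metas z \;=\; \bigstarr\{u\diamond v : u\in\res(x\diamond_\metas y),\,v\in\res(z)\},
\]
and symmetrically for the right-hand grouping. Both expressions are functions of the three given inputs alone, so it suffices to tabulate $\diamond_\metas$ as a $9\times 9$ table (each entry being the superposition of at most four binary results read off Table~\ref{tab:diamond}) and then compose this table with itself in both orders, comparing the two resulting $9\times 9\times 9$ cubes entry by entry.

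Before invoking brute force, I would trim the case count with the evident symmetries of the operator. The states $01$ and $10$ are absorbing for $\diamond$, so whenever one of these appears stably as the left operand in either grouping, the result collapses immediately and both groupings agree trivially. Similarly, $00$ is a left identity ($00\diamond w=w$), which eliminates another large block. These reductions leave a modest set of genuinely metastable configurations that need to be inspected, and for each of these one must carefully compute $\bigstarr$ over the (at most four) binary resolutions.

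The main obstacle is precisely this residual block: the interaction of $\metas$ bits with the non-absorbing states $00$ and $11$ is where the ``phantom resolution'' phenomenon can bite, and doing the bookkeeping by hand over several metastable bits is tedious and error-prone. I would therefore follow the authors' stated approach and delegate the final check to a short program that enumerates all $729$ triples, evaluates both $(x\diamond_\metas y)\diamond_\metas z$ and $x\diamond_\metas(y\diamond_\metas z)$ from the hard-coded binary table via Definition~\ref{def:closure}, and verifies equality pointwise. Correctness then reduces to the trivial implementations of $\res$, $\bigstarr$, and the $\diamond$ lookup, all of which are direct from their definitions.
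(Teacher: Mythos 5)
Your proposal is correct and follows essentially the same route as the paper: the authors likewise observe that associativity of $\diamond_{\metas}$ does not follow from associativity of $\diamond$ (giving the mod-$4$ addition counterexample) and then verify all $3^6=729$ triples by a short computer program rather than by hand. Your additional remarks on the absorbing states and the left identity are a sensible way to prune the manual case analysis, but they do not change the substance of the argument.
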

Apart from being essential for our construction, this theorem simplifies
notation; in the following, we may write
\begin{equation*}
\left(\bigdM\right)_{i=1}^j g_ih_i \coloneqq
g_1h_1\diamond_{\metas}g_2h_2\diamond_{\metas}\ldots \diamond_{\metas}g_jh_j\,,
\end{equation*}
where the order of evaluation does not affect the result.

We stress that in general the closure of an associative operator needs not be
associative. A counter-example is given by binary addition modulo $4$:
\begin{align*}
(0\metas+_{\metas}01)+_{\metas}01 = \metas\metas \neq 1\metas = 0\metas
+_{\metas} (01+_{\metas}01).
\end{align*}

\begin{table}
\caption{The $\diamond_{\metas}$ operator. The first operand is the current
state, the second are the next input bits.}\label{tab:diamondM}
\centering
\begin{tabular}{|c||c |c |c |c |c |c |c |c |c |}
\hline
  $\diamond_{\metas}$ & 00 & 0\metas & 01 & \metas 1 & 11 & 1\metas & 10 & \metas0 & \metas \metas\\
  \hline \hline
  00 & 00 & 0\metas & 01 & \metas 1 & 11 & 1\metas & 10 & \metas0 & \metas \metas\\
  0\metas & 0\metas & 0\metas & 01 & \metas 1 & \metas1 & \metas\metas & \metas\metas & \metas\metas & \metas \metas\\
  01 & 01 & 01 & 01 & 01 & 01 & 01 & 01 & 01 & 01 \\
  \metas 1 & \metas 1 & \metas\metas & \metas\metas & \metas \metas & 0\metas & 0\metas & 01 & \metas1 & \metas \metas\\
  11 & 11 & 1\metas & 10 & \metas 0 & 00 & 0\metas & 01 & \metas1 & \metas \metas\\
  1\metas & 1\metas & 1\metas & 10 & \metas 0 & \metas0 & \metas\metas & \metas\metas & \metas\metas & \metas \metas\\
  10 & 10 & 10 & 10 & 10 & 10 & 10 & 10 & 10 & 10 \\
  \metas 0 & \metas0 & \metas\metas & \metas\metas & \metas \metas & 1\metas & 1\metas & 10 & \metas0 & \metas \metas\\
  \metas\metas & \metas\metas & \metas\metas & \metas\metas & \metas\metas & \metas\metas & \metas\metas & \metas\metas & \metas\metas & \metas\metas\\
  \hline
\end{tabular}
\end{table}

\subsection{Determining $s^{(i)}_{\metas}$}
For convenience of the reader, Table~\ref{tab:diamondM} gives the truth table of
$\diamond_{\metas}\colon\BBM{2}\times\BBM{2}\rightarrow\BBM{2}$. We need to show
that repeated application of this operator to the input pairs $g_jh_j$, $j\in
[1,i]$, actually results in $s^{(i)}_{\metas}$. This is closely related to the
key observation that if in a valid string there is a metastable bit at position
$m$, then the remaining $B-m$ following bits are the maximum codeword of a
$(B-m)$-bit code.

\begin{observation}\label{obs:maxencoding}
For $g\in\validrg{B}$, if there is an index $1\leq m<B$ such that $g_m=\metas$
then $g_{m+1,B}=10^{B-m-1}$.
\end{observation}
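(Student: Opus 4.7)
The plan is to reduce the statement to a purely combinatorial fact about the binary reflected Gray code and then establish that fact by induction on $B$ using the recursive definition of $\rg_B$.

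First I would unpack the definition of valid strings. Since $g \in \validrg{B}$ has a metastable bit at position $m$, it cannot lie in $\rg_B([N])$, so by Definition~\ref{def:validinput} there exists $x \in [N-1]$ with $g = \rg_B(x) * \rg_B(x+1)$. By Definition~\ref{def:star}, position $m$ must be a position where $\rg_B(x)$ and $\rg_B(x+1)$ disagree, and for every position $i \neq m$ we have $\rg_B(x)_i = \rg_B(x+1)_i = g_i$ (in particular $g_i$ is stable). Consequently $g_{m+1,B} = \rg_B(x)_{m+1,B}$, and it suffices to prove the following claim: whenever $\rg_B(x)$ and $\rg_B(x+1)$ differ at some position $m < B$, we have $\rg_B(x)_{m+1,B} = 10^{B-m-1}$.

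Next I would prove this claim by induction on $B$. The base case $B=1$ is vacuous, since the only consecutive pair $\rg_1(0) = 0$, $\rg_1(1) = 1$ differs only at $m=1=B$. For the inductive step from $B-1$ to $B$, I would split into three cases according to where $x$ and $x+1$ lie relative to the midpoint $2^{B-1}$:
\begin{compactitem}
\item[(a)] $x, x+1 \in [2^{B-1}]$: then $\rg_B(x) = 0\rg_{B-1}(x)$ and $\rg_B(x+1) = 0\rg_{B-1}(x+1)$, so the first bit agrees, and the disagreement at position $m$ corresponds to a disagreement at position $m-1$ of the consecutive pair $\rg_{B-1}(x), \rg_{B-1}(x+1)$; the induction hypothesis then yields $\rg_{B-1}(x)_{m,B-1} = 10^{(B-1)-(m-1)-1} = 10^{B-m-1}$, which is exactly $\rg_B(x)_{m+1,B}$.
\item[(b)] $x = 2^{B-1}-1$, $x+1 = 2^{B-1}$: by the recursion, $\rg_B(x) = 0\rg_{B-1}(2^{B-1}-1)$ and $\rg_B(x+1) = 1\rg_{B-1}(2^{B-1}-1)$, so the strings disagree only at $m=1$, and unfolding the recursion once more shows $\rg_{B-1}(2^{B-1}-1) = 1\rg_{B-2}(0) = 10^{B-2}$, matching $10^{B-m-1}$.
\item[(c)] $x, x+1 \in [2^B] \setminus [2^{B-1}]$: the first bit is $1$ on both sides, and the trailing $B-1$ bits are $\rg_{B-1}(2^B-1-x)$ and $\rg_{B-1}(2^B-2-x)$, which is again a consecutive pair (in reverse order); applying the induction hypothesis as in case (a) gives the desired suffix.
\end{compactitem}

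The proof is essentially bookkeeping; the main obstacle is just keeping the index shifts (position $m$ in the $B$-bit code versus position $m-1$ in the $B-1$-bit code) consistent across the three cases, and verifying in case (b) that $\rg_{B-1}(2^{B-1}-1) = 10^{B-2}$ by one additional unfolding of the recursion. No genuinely deep argument is required beyond the recursive structure of binary reflected Gray code.
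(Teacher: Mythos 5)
Your proof is correct in substance and, like the paper's, rests entirely on the recursive definition of $\rg_B$, but the decomposition is different. The paper argues by peeling off the first $m-1$ bits: the remaining positions cycle through the $(B-m+1)$-bit code alternately forward and reflected, and bit $m$ toggles between consecutive codewords only when the trailing $(B-m)$-bit code sits at its last codeword $10^{B-m-1}$; the conclusion is then immediate. You instead induct on the word length $B$, splitting on whether the consecutive pair $(x,x+1)$ lies in the lower half, straddles the midpoint, or lies in the upper half. Your version is more mechanical but also more fully spelled out; the paper's is shorter but leans harder on the reader's picture of the code as a counter. One point you should make explicit: both your initial reduction (``for every position $i\neq m$ we have $\rg_B(x)_i=\rg_B(x+1)_i$,'' hence $g_{m+1,B}=\rg_B(x)_{m+1,B}$) and your case (c) silently use the fact that consecutive reflected Gray codewords differ in \emph{exactly} one position. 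In case (c) the induction hypothesis, as you state it, gives the suffix of $\rg_{B-1}(2^B-2-x)$, i.e.\ of $\rg_B(x+1)$, whereas you need the suffix of $\rg_B(x)=1\rg_{B-1}(2^B-1-x)$; you must either invoke the single-change property to transfer the suffix to the other codeword, or strengthen the inductive claim to assert $\rg_B(x)_{m+1,B}=\rg_B(x+1)_{m+1,B}=10^{B-m-1}$ (the latter folds the single-change property above position $m$ into the same induction and is the cleaner fix). With that adjustment the argument is complete.
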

\begin{proof}
  List the codewords in order. By the recursive definition of the code, removing
  the first $m-1$ bits of the code leaves us with $2^{m-1}$ repetitions of
  $(B-m+1)$-bit code alternating between listing it in order and in reverse
  (``reflected'') order. Also by the recursive definition, the $m^{th}$ bit
  toggles only when the $(B-m)$-bit code resulting from removing it is at its last
  codeword, $10^{B-m-1}$.
\end{proof}

Our reasoning will be based on distinguishing two main cases: one is that
$s^{(i)}_{\metas}$ contains at most one metastable bit, the other that
$s^{(i)}_{\metas}=\metas \metas$. For each we need a technical statement.
\begin{observation}\label{obs:starkiller}
If $\left|\res\left(s^{(i)}_{\metas}\right)\right|\leq 2$ for any $i\in [B+1]$,
then
$\res(s^{(i)}_{\metas})= \bigdiamond_{j=1}^i \res(g_jh_j)$.
\end{observation}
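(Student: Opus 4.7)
The plan is induction on $i$, with the centerpiece being a direct application of Observation~\ref{obs:reverseresolution}. First I observe that the hypothesis $|\res(s^{(i)}_\metas)|\leq 2$ propagates to all $j\leq i$: Table~\ref{tab:diamondM} shows that the row indexed by $\metas\metas$ is identically $\metas\metas$, so $s^{(j-1)}_\metas=\metas\metas$ forces $s^{(j)}_\metas=\metas\metas$. Contrapositively, since for $s\in\BBM{2}$ we have $|\res(s)|\leq 2$ iff $s\neq\metas\metas$, the hypothesis yields $|\res(s^{(j)}_\metas)|\leq 2$ for every $j\leq i$, which is precisely what will enable me to invoke the induction hypothesis at $i-1$.

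Setting $S_i:=\bigdiamond_{j=1}^i\res(g_jh_j)$, the base case $i=0$ is immediate since the empty product equals $\{s^{(0)}\}=\{00\}=\res(s^{(0)}_\metas)$. In the induction step, I combine Definition~\ref{def:closure} with the induction hypothesis $\res(s^{(i-1)}_\metas)=S_{i-1}$ to write
\begin{align*}
s^{(i)}_\metas \;=\; s^{(i-1)}_\metas \diamond_\metas g_ih_i \;=\; \bigstarr\bigl(\res(s^{(i-1)}_\metas)\diamond\res(g_ih_i)\bigr) \;=\; \bigstarr\bigl(S_{i-1}\diamond\res(g_ih_i)\bigr) \;=\; \bigstarr S_i,
\end{align*}
where the last step uses that $\diamond$ is associative on $\BB{2}$ to flatten the iterated product of sets. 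Since by hypothesis $|\res(\bigstarr S_i)|=|\res(s^{(i)}_\metas)|\leq 2$, Observation~\ref{obs:reverseresolution} applied to $S_i$ yields $\res(s^{(i)}_\metas)=S_i$, closing the induction.

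The main obstacle I expect is justifying the use of the induction hypothesis at $i-1$, which is exactly the purpose of the initial backward-propagation argument; without it, the equation $\res(s^{(i-1)}_\metas)=S_{i-1}$ would simply be unavailable in the step case. Beyond this, the proof is a chain of three moves that fit together cleanly: unfolding the metastable closure of $\diamond$ per Definition~\ref{def:closure}, flattening via associativity of $\diamond$ on stable values, and finally upgrading the identity $s^{(i)}_\metas=\bigstarr S_i$ to the set equality $\res(s^{(i)}_\metas)=S_i$ using Observation~\ref{obs:reverseresolution}.
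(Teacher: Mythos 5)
Your endgame (identify $s^{(i)}_{\metas}$ with $\bigstarr S_i$ where $S_i=\bigdiamond_{j=1}^i\res(g_jh_j)$, then invoke Observation~\ref{obs:reverseresolution}) lands in the right place, and that final step coincides with the paper's. The problem is how you get to $s^{(i)}_{\metas}=\bigstarr S_i$: your argument is circular. The first equality of your induction step, $s^{(i)}_{\metas}=s^{(i-1)}_{\metas}\diamond_{\metas}g_ih_i$, is precisely property (P2), i.e., Theorem~\ref{thm:statemetasi}, and that theorem is proved later \emph{using} Observation~\ref{obs:starkiller}. The point is that $s^{(i)}_{\metas}$ is not defined by this recurrence; it is the metastable closure (Definition~\ref{def:closure}) of the stable-state function $(g,h)\mapsto s^{(i)}(g,h)$, and showing that iterated application of $\diamond_{\metas}$ reproduces this closure is the entire content of (P2) --- it is even listed as one of the three hurdles to overcome. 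Your backward-propagation step (``the $\metas\metas$ row of Table~\ref{tab:diamondM} forces $s^{(j)}_{\metas}=\metas\metas$'') leans on the same unproven recurrence; without it, forward absorption of $\metas\metas$ is only available through Lemma~\ref{lem:mm}, which is a separate argument restricted to valid strings.

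The non-circular route needs no induction at all. By Definition~\ref{def:closure}, $s^{(i)}_{\metas}=\bigstarr\{s^{(i)}(g',h')\mid g'\in\res(g),\,h'\in\res(h)\}$; since the bits of $g$ and $h$ resolve independently ($\res(xy)=\res(x)\res(y)$) and $s^{(i)}(g',h')=\bigdiamond_{j=1}^i g'_jh'_j$ on stable inputs, the set on the right is exactly $S_i$. Hence $s^{(i)}_{\metas}=\bigstarr S_i$ holds essentially by definition, so $\res(s^{(i)}_{\metas})=\res(\bigstarr S_i)$, and Observation~\ref{obs:reverseresolution} applied under the hypothesis $\left|\res(\bigstarr S_i)\right|\le 2$ yields $\res(\bigstarr S_i)=S_i$. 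If you want to reach the identity via the recurrence instead, you must first establish the recurrence for the closure, which is exactly what cannot be assumed at this point.
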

\begin{proof}
With $S\coloneqq \bigdiamond_{j=1}^i \res(g_jh_j)$, we have that
$\res\left(s^{(i)}_{\metas}\right)=\res(\bigstarr S)$. The claim thus follows
from Observation~\ref{obs:reverseresolution}.
\end{proof}
\begin{lemma}\label{lem:mm}
Suppose that for valid strings $g,h\in \validrg{B}$, it holds that
$s^{(i)}_{\metas}=\metas\metas$ for some $i\in [1,B]$. Then $g=h$ and
$s^{(j)}_{\metas}=\metas\metas$ for all $j\in [i,B]$.
\end{lemma}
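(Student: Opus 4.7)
The plan is to analyze the set $T_i := \{s^{(i)}(g',h') : g' \in \res(g),\, h' \in \res(h)\}$, whose superposition equals $s^{(i)}_\metas$ by Definition~\ref{def:closure}. Since $\bigstarr T_i = \metas\metas$ forces both bit positions of the superposition to be metastable, $T_i$ must contain two elements of $\BB{2}$ that disagree in both positions; thus $\{00,11\} \subseteq T_i$ or $\{01,10\} \subseteq T_i$. I handle these two sub-cases separately to derive $g=h$, and then address persistence from there.

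For the sub-case $\{01,10\} \subseteq T_i$, Lemma~\ref{lem:comparison} yields resolutions with $g'_1 \prec h'_1$ and $g'_2 \succ h'_2$. A brief argument rules out that either of $g,h$ is stable, since a stable Gray codeword cannot lie strictly between two consecutive ones. Hence both strings are metastable, and by Definition~\ref{def:validinput} their resolution sets represent values $\{y,y+1\}$ and $\{z,z+1\}$, respectively. The coexistence of both strict inequalities across the four pairs forces $y \leq z$ and $y \geq z$, so $y=z$ and thus $g = \rg_B(y)*\rg_B(y+1) = h$; moreover, the decision must have been reached by step $i$, giving $i \geq m$ where $m$ is the shared $\metas$-position.

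For the sub-case $\{00,11\} \subseteq T_i$, the transition table of $\diamond$ restricted to inputs $00$ and $11$ (the only ones possible while the prefix remains matched) shows that states $00$ and $11$ encode the parity of the common prefix. Hence there are resolutions with $g'_1|_i = h'_1|_i =: p_1$ of even parity and $g'_2|_i = h'_2|_i =: p_2$ of odd parity, so in particular $p_1 \neq p_2$. Since $g'_1|_i$ and $g'_2|_i$ can only differ at the $\metas$-position $m_g$ of $g$, we obtain $m_g \leq i$, and symmetrically $m_h \leq i$. A short case analysis rules out $m_g \neq m_h$: if WLOG $m_g < m_h$, then $h$'s bit at position $m_g$ is stable, forcing $p_1[m_g] = h_{m_g} = p_2[m_g]$ and contradicting the required parity flip. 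Hence $m_g = m_h =: m \leq i$; equality at the stable positions of $g|_i$ and $h|_i$ gives $g|_i = h|_i$, and Observation~\ref{obs:maxencoding} pins down $g_{m+1,B} = 10^{B-m-1} = h_{m+1,B}$, whence $g = h$.

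For persistence, both sub-cases above also establish $i \geq m$, where $m$ is the shared $\metas$-position of $g=h$. I then enumerate the four trajectories $(g^a,g^b)$ with $a,b \in \{0,1\}$ and $g^0,g^1$ the resolutions of $g=h$. For any $j \geq m$, the diagonal pairs $(g^a,g^a)$ remain in the equal-prefix region and land in $00$ or $11$ according to the parity of $g^a|_j$; since $g^0,g^1$ differ at $m \leq j$, these parities are opposite, so both $00$ and $11$ occur. The off-diagonal pairs $(g^0,g^1)$ and $(g^1,g^0)$ first disagree at position $m$, entering the absorbing states $01$ and $10$ at that step and remaining there. Hence $T_j = \{00,01,10,11\}$ and $s^{(j)}_\metas = \metas\metas$ for all $j \in [i,B]$. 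I expect the main obstacle to be the $\{00,11\}$ case: linking the state to the parity of the common prefix (which Lemma~\ref{lem:comparison} does not state explicitly) and then leveraging Observation~\ref{obs:maxencoding} to lift the prefix-equality $g|_i = h|_i$ to full-string equality.
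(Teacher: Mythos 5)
Your proposal is correct and follows essentially the same route as the paper's proof: split on whether $\{00,11\}$ or $\{01,10\}$ is contained in the set of reachable states, use Lemma~\ref{lem:comparison} (plus the parity reading of states $00$/$11$) to force a shared metastable position $m\leq i$ and hence $g=h$ via Observation~\ref{obs:maxencoding}, and then argue persistence. Your explicit enumeration of the four resolution trajectories for $j\geq m$ is a somewhat more detailed version of the paper's terse appeal to the $00$ and $11$ columns of the transition table, but it is the same argument.
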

\begin{proof}
  Because $R\coloneqq \bigdiamond_{k=1}^i \res(g_kh_k)\subseteq \BB{2}$ and
  $s^{(i)}=\bigstarr R$, it must hold that (i)~$\{00,11\}\subseteq R$, or
  (ii)~$\{01,10\}\subseteq R$. By Lemma~\ref{lem:comparison}, (i) implies that there
  are stabilizations $g',g''\in \res(g_{1,i})$ and $h',h''\in \res(h_{1,i})$ such
  that $g'=h'$, $\parity(g')=0$, $g''=h''$, and $\parity(g'')=1$, while (ii)
  implies such $g',g'',h',h''$ with $g'\prec h'$ and $g''\succ h''$. Checking
  Definition~\ref{def:order} (cf.~Table~\ref{table:validinputs}), we see that both
  options necessitate that $g_{1,i}=h_{1,i}$ with some metastable bit. Denoting by
  $m\in [1,i-1]$ the index such that $g_m=h_m=\metas$.
  Observation~\ref{obs:maxencoding} shows that $g_{m+1,B}=h_{m+1,B}=10^{B-m-1}$.
  In particular, $g=h$, showing (again by Lemma~\ref{lem:comparison} that (i) or
  (ii) (in fact both) also apply to $\bigdiamond_{k=1}^j \res(g_kh_k)$ for any
  $j\in [i,B]$ (cf.~the $00$ and $11$ columns in Table~\ref{tab:diamond}). We
  conclude that $s^{(j)}_{\metas}=\metas \metas$ for any such $j$.
\end{proof}

Equipped with these tools, we are ready to prove the second statement.
\begin{theorem}\label{thm:statemetasi}
(P2) holds, i.e., for all $g,h\in \validrg{B}$ and $i\in[1,B]$,
$s^{(i)}_{\metas} = \left(\bigdM\right)_{j=1}^i g_jh_j$.
\end{theorem}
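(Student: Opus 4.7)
The plan is induction on $i$, leveraging the associativity of $\diamond_{\metas}$ (Theorem~\ref{thm:assoc}) to pull off the last factor. For $i=1$, note that $00 \diamond x = x$ for every $x\in\BB{2}$, so $s^{(1)}_{\metas} = \bigstarr\{g'_1 h'_1 : g'_1\in\res(g_1), h'_1\in\res(h_1)\} = g_1 h_1$, matching the right-hand side by Observation~\ref{obs:identity}.

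For the inductive step, assume the claim for $i-1$. By associativity,
\begin{equation*}
\left(\bigdM\right)_{j=1}^i g_j h_j = \left(\left(\bigdM\right)_{j=1}^{i-1} g_j h_j\right) \diamond_{\metas} (g_i h_i) = s^{(i-1)}_{\metas} \diamond_{\metas} (g_i h_i)\:,
\end{equation*}
so it suffices to show $s^{(i-1)}_{\metas} \diamond_{\metas} (g_i h_i) = s^{(i)}_{\metas}$. I would split into two cases depending on whether $s^{(i-1)}_{\metas} = \metas\metas$.

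In the case $|\res(s^{(i-1)}_{\metas})| \leq 2$, Observation~\ref{obs:starkiller} yields $\res(s^{(i-1)}_{\metas}) = \bigdiamond_{j=1}^{i-1} \res(g_j h_j) = \{s^{(i-1)}(g', h') : g'\in\res(g),\, h'\in\res(h)\}$, since $s^{(i-1)}$ depends only on the first $i-1$ bits. Unfolding the definition of the closure,
\begin{align*}
s^{(i-1)}_{\metas} \diamond_{\metas} (g_i h_i) &= \bigstarr\{s \diamond x : s\in\res(s^{(i-1)}_{\metas}),\, x\in\res(g_i h_i)\}\\
&= \bigstarr\{s^{(i-1)}(g',h') \diamond g'_i h'_i : g'\in\res(g),\, h'\in\res(h)\}\\
&= \bigstarr\{s^{(i)}(g',h') : g'\in\res(g),\, h'\in\res(h)\} = s^{(i)}_{\metas}\:.
\end{align*}
In the case $s^{(i-1)}_{\metas} = \metas\metas$, Lemma~\ref{lem:mm} gives $s^{(i)}_{\metas} = \metas\metas$, and the $\metas\metas$ row of Table~\ref{tab:diamondM} shows $\metas\metas \diamond_{\metas} x = \metas\metas$ for every $x\in\BBM{2}$, closing this case.

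The main obstacle is that in general $\res(s^{(i-1)}_{\metas})$ can be strictly larger than the set of states actually attained by stable resolutions of $(g,h)$; blindly expanding $\diamond_{\metas}$ would then introduce spurious summands into the superposition and could over-approximate the metastability. Observation~\ref{obs:starkiller}, which relies critically on the valid-string assumption, rules this out whenever $s^{(i-1)}_{\metas}$ has at most one metastable bit, while Lemma~\ref{lem:mm} handles the remaining degenerate case where both coordinates are metastable by showing that the computation has already ``absorbed'' into $\metas\metas$ and will stay there, matching the identically absorbing behavior of $\metas\metas$ under $\diamond_{\metas}$.
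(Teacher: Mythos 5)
Your proposal is correct and follows essentially the same route as the paper's proof: induction on $i$, using associativity to reduce to $s^{(i-1)}_{\metas}\diamond_{\metas}g_ih_i$, then splitting on whether $\left|\res\left(s^{(i-1)}_{\metas}\right)\right|\leq 2$ (handled via Observation~\ref{obs:starkiller}) or $s^{(i-1)}_{\metas}=\metas\metas$ (handled via Lemma~\ref{lem:mm} and the absorbing row of Table~\ref{tab:diamondM}). Your closing remark correctly identifies the crux, namely that $\res(s^{(i-1)}_{\metas})$ could a priori over-approximate the set of reachable states, which is exactly what the two cited results rule out.
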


\begin{proof}
We show the claim by induction on $i$. Trivially, we have that
$s^{(0)}_{\metas}=s^{(0)}=00$ and thus for $i=1$ that
\begin{equation*}
s^{(1)}_{\metas}=s^{(0)}_{\metas}\diamond_{\metas}g_1h_1 =
00\diamond_{\metas}g_1h_1 = g_1h_1 = \left(\bigdM\right)_{j=1}^1 g_1h_1\,.
\end{equation*}

Hence, suppose that the claim has been established for $i-1\in [1,B-1]$ and
consider index $i$. If $\left|\res\left(s^{(i-1)}_{\metas}\right)\right|\leq 2$,
Observation~\ref{obs:starkiller} and the induction hypothesis yield that
\begin{align*}
\left(\bigdM\right)_{j=1}^i g_jh_j
&= \left(\left(\bigdM\right)_{j=1}^{i-1}
g_jh_j\right)\diamond_{\metas}g_ih_i\\
&= s_{\metas}^{(i-1)}\diamond_{\metas}g_ih_i
= \bigstarr\! \left(\!\res\!\left(s_{\metas}^{(i-1)}\right)\diamond
\res(g_ih_i)\!\right)\\
&= \bigstarr \bigdiamond_{j=1}^i\res(g_ih_i)= s^{(i)}_{\metas}\,.
\end{align*}

It remains to consider the case that $s^{(i-1)}_{\metas}=\metas \metas$.
By Lemma~\ref{lem:mm}, $s^{(i)}_{\metas}=\metas\metas$, too.
Thus,
\begin{equation*}
\left(\bigdM\right)_{j=1}^i g_jh_j
= s_{\metas}^{(i-1)}\diamond_{\metas}g_ih_i
= \metas\metas \diamond_{\metas}g_ih_i
=\metas \metas
=s^{(i)}_{\metas}.\qedhere
\end{equation*}
\end{proof}

\begin{table}[t]
\centering
\caption{The $\outt_{\metas}$ operator. The first operand is the current state, the second is the next input
bits.\label{tab:outtM}}
\begin{tabular}{|c||c |c |c |c |c |c |c |c |c |}
\hline
  $\outt_{\metas}$ & 00 & 0\metas & 01 & \metas1 & 11 & 1\metas & 10 & \metas0 &
  \metas\metas\\ \hline \hline 00 & 00 & \metas0 & 10 & 1\metas & 11 & 1\metas & 10 & \metas0 & \metas\metas \\
  0\metas & 00 & \metas0 & 10 & 1\metas & 11 & \metas\metas & \metas\metas & \metas\metas & \metas\metas \\
  01 & 00 & \metas0 & 10 & 1\metas & 11 & \metas 1 & 01 & 0\metas & \metas\metas \\
  \metas 1 & 00 & \metas\metas & \metas\metas & \metas\metas & 11 & \metas 1 & 01 & 0\metas & \metas\metas \\
  11 & 00 & 0\metas & 01 & \metas1 & 11 & \metas1 & 01 & 0\metas & \metas\metas \\
  1\metas & 00 & 0\metas & 01 & \metas1 & 11 & \metas\metas & \metas \metas & \metas\metas & \metas\metas \\
  10 & 00 & 0\metas & 01 & \metas1 & 11 & 1\metas & 10 & \metas0 & \metas\metas \\
  \metas 0 & 00 & \metas\metas & \metas\metas & \metas\metas & 11 & 1\metas & 10 & 0\metas & \metas\metas \\
  \metas\metas & 00 & \metas\metas & \metas\metas & \metas\metas & 11 & \metas\metas & \metas\metas & \metas\metas & \metas\metas \\
  \hline
\end{tabular}
\end{table}

\subsection{Obtaining the Outputs from $s^{(i)}_{\metas}$}\label{sec:outputdet}
Recall that $\outt\colon \BB{2}\times \BB{2}\to \BB{2}$ is the
operator given in Table~\ref{tab:output} computing $\rgmax\{g,h\}_i
\rgmin\{g,h\}_i$ out of $s^{(i-1)}$ and $g_ih_i$. For convenience of the reader,
we provide the truth table of $\outt_{\metas}\colon\BBM{2}\times\BBM{2}
\rightarrow\BBM{2}$ in Table~\ref{tab:outtM}. We derive the third property.
\begin{theorem}\label{thm:out}
(P3) holds, i.e., given valid inputs $g,h\in \validrg{B}$ and $i\in [1,B]$,
$\outt_{\metas}(s^{(i-1)}_{\metas},g_ih_i) = \rgmaxM\{g,h\}_i\rgminM\{g,h\}_i$.
\end{theorem}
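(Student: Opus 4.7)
The plan is to unfold both sides using Definition~\ref{def:closure} and Corollary~\ref{cor:comparison}, then reconcile them by a case analysis on $s^{(i-1)}_{\metas}$. Writing $S\coloneqq\{s^{(i-1)}(g',h')\mid g'\in\res(g),\,h'\in\res(h)\}$, Theorem~\ref{thm:statemetasi} (together with the factorization $\res(g_1h_1\ldots g_{i-1}h_{i-1})=\res(g_1h_1)\times\cdots\times\res(g_{i-1}h_{i-1})$) yields $s^{(i-1)}_{\metas}=\bigstarr S$, while by Corollary~\ref{cor:comparison},
\begin{equation*}
\rgmaxM\{g,h\}_i\rgminM\{g,h\}_i=\bigstarr\{\outt(s^{(i-1)}(g',h'),g'_ih'_i)\mid g'\in\res(g),\,h'\in\res(h)\}.
\end{equation*}
On the other hand, Definition~\ref{def:closure} gives $\outt_{\metas}(s^{(i-1)}_{\metas},g_ih_i)=\bigstarr\{\outt(s,xy)\mid s\in\res(s^{(i-1)}_{\metas}),\,xy\in\res(g_ih_i)\}$. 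The heart of the proof is to argue that the two index sets of the outer $\bigstarr$ produce the same superposition.

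In the first case, $|\res(s^{(i-1)}_{\metas})|\leq 2$. Then Observation~\ref{obs:starkiller} (applied with index $i-1$) yields $\res(s^{(i-1)}_{\metas})=S$. Since positions $1,\ldots,i-1$ and position $i$ of the input resolutions are independent, the set of pairs $(s,xy)$ ranging over $\res(s^{(i-1)}_{\metas})\times\res(g_ih_i)=S\times\res(g_ih_i)$ coincides exactly with $\{(s^{(i-1)}(g',h'),g'_ih'_i)\mid g'\in\res(g),\,h'\in\res(h)\}$. Substituting into the definition of $\outt_{\metas}$ and applying Corollary~\ref{cor:comparison} then matches the two expressions.

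The remaining case is $s^{(i-1)}_{\metas}=\metas\metas$. Here $\res(s^{(i-1)}_{\metas})=\BB{2}$ is strictly larger than $S$ in general, so the unfolding argument above would be too pessimistic. This is the main obstacle. We overcome it using Lemma~\ref{lem:mm}, which forces $g=h$; consequently $g_ih_i\in\{00,11,\metas\metas\}$. For any $g',g''\in\res(g)$, $\rgmax\{g',g''\},\rgmin\{g',g''\}\in\{g',g''\}\subseteq\res(g)$, and taking $g'=g''$ ranging over $\res(g)$ already reaches every element of $\res(g)$; hence Observation~\ref{obs:identity} gives $\rgmaxM\{g,g\}=\rgminM\{g,g\}=g$, so the right-hand side evaluates to $g_ih_i$. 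It then suffices to read off row $\metas\metas$ of Table~\ref{tab:outtM} at the three admissible column entries $00$, $11$, and $\metas\metas$, which return $00$, $11$, and $\metas\metas$ respectively. This matches $g_ih_i$ in every case, completing the proof.
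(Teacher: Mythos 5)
Your proof is correct and follows essentially the same route as the paper's: a case split on whether $\left|\res\left(s^{(i-1)}_{\metas}\right)\right|\leq 2$ (handled via Observation~\ref{obs:starkiller} and Corollary~\ref{cor:comparison}) or $s^{(i-1)}_{\metas}=\metas\metas$ (handled via Lemma~\ref{lem:mm}, which forces $g=h$, and reading off the $\metas\metas$ row of Table~\ref{tab:outtM}). Your version merely spells out some steps the paper leaves implicit, such as the identification of the two index sets and the derivation of $\rgmaxM\{g,g\}=g$.
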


\begin{proof}
Assume first that $\left|\res\left(s^{(i-1)}_{\metas}\right)\right|\leq 2$. Then
\begin{align*}
&\,\outt_{\metas}(s^{(i-1)}_{\metas}(g,h),g_ih_i)\\
=\,& \bigstarr
\outt\left(\res\left(s^{(i-1)}_{\metas}(g,h)\right),\res(g_ih_i)\right)\\
\stackrel{\mathrm{Obs.\,}\ref{obs:starkiller}}{=}\,& \bigstarr
\outt\left(\bigdiamond_{j=1}^{i-1}\res(g_jh_j),\res(g_ih_i)\right)\\
\stackrel{\mathrm{Cor.\,}\ref{cor:comparison}}{=}\,& \bigstarr\left(
\rgmax\{\res(g),\res(h)\}_i \rgmin\{\res(g),\res(h)\}_i\right)\\
=\,& \rgmaxM\{g,h\}_i\rgminM\{g,h\}_i\,.
\end{align*}
Otherwise, $s^{(i-1)}_{\metas}=\metas\metas$. Then, by Lemma~\ref{lem:mm},
$g=h$. In particular, $g_i=h_i$. Checking Table~\ref{tab:outtM}, we see that for
all $b\in \BBM{}$, it holds that $\outt_{\metas}(\metas\metas,bb)=bb$. Therefore,
\begin{equation*}
\outt_{\metas}(s^{(i-1)}_{\metas}(g,h),g_ih_i)=g_ih_i=\rgmaxM\{g,h\}_i\rgmin\{g,h\}_i
\end{equation*}
in this case as well.
\end{proof}

\section{The PPC Framework}\label{sec:ppc}

In order to derive a small circuit from the results of
Section~\ref{sec:operators}, a straightforward approach would be to unroll the
FSM. We could design a circuit implementing the transition function
$\diamond_\metas$ and apply it $B$ times to the starting state $s^{(0)}$ and
each input $g_ih_i$. However, computing the sequence of states step by step
yields a (non-optimal) linear depth of at least $B$.

Hence, we make use of the PPC framework by Ladner and
Fischer~\cite{ladner1980parallel}. They describe a generic method that is
applicable to \emph{any} finite state machine translating a sequence of $B$
input symbols to $B$ output symbols, to obtain circuits of size $\BO(B)$ and
depth $\BO(\log B)$. They observe that each input symbol defines a restricted
transition function. Compositions of these functions evaluated on the starting
state yield the state of the machine after receiving corresponding inputs. The
major advantage of the technique is that compositions of restricted transition
functions can be computed in parallel due to associativity, yielding a depth of
$\BO(\log B)$. This matches our needs, as we need to determine
$s^{(i)}_{\metas}$ for each $i\in [B]$. However, their generic construction
involves large constants. Fortunately, we have established that
$\diamond_{\metas}\colon \BBM{2}\times \BBM{2}\rightarrow\BBM{2}$ is an
associative operator,
permitting us to directly apply the circuit templates for associative operators
they provide for computing $s^{(i)}_{\metas}=\left(\bigdM\right)_{j=1}^i g_jh_j$
for all $i\in [B]$. Accordingly, we discuss these templates only. During
discussion of the basic construction we show a minor improvement on their
results.

Before proceeding, the reader may want to take a look at the example given in
Figure~\ref{fig:completeex}, which shows how a $\twosort(9)$ derived from our
construction processes an input pair.

\begin{figure}
\begin{center}
\input{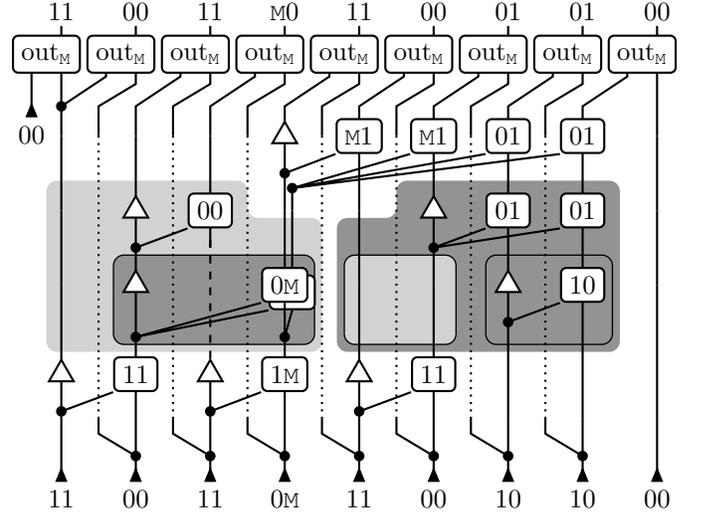}
\end{center}
\caption{An example for a computation of the $\twosort(9)$ circuit arising from
our construction for fan-out $f=3$. The inputs are $g=101010110$ and
$h=101\metas10000$; see Table~\ref{tab:exrun} for $s^{(i)}_{\metas}(g,h)$ and
the output. We labeled each $\diamond_\metas$ by its output. Buffers and
duplicated gates (here the one computing $0\metas$) reduce fan-out, but do not
affect the computation. Grey boxes indicate recursive steps of the PPC
construction; see also Figure~\ref{fig:treeex} for a larger PPC circuit using
the one here in its ``right'' top-level recursion. For better readability, wires
not taking part in a recursive step are dashed or dotted.}
\label{fig:completeex}
\end{figure}

\begin{table}
  \centering
  \caption{Example run of the FSM in Figure~\ref{fig:fsm} on inputs
  $g=101010110$ and $h=101\metas10000$. We drop $s_\metas^{(9)}$, as it is not
  needed to compute $g'_9h'_9$.}
  \label{tab:exrun}
  \begin{tabular}{c|c|c|c|c|c|c|c|c|c|c}
    $i$ & $0$ & $1$ & $2$ & $3$ & $4$ & $5$ & $6$ & $7$ & $8$ & $9$ \\ \hline
    $g_ih_i$ & & $11$ & $00$ & $11$ & $0\metas$ & $11$ & $00$ & $10$ & $10$ & $00$ \\ \hline
    $s_\metas^{(i)}$
    & $00$ & $11$ & $11$ & $00$ & $0\metas$ & $\metas1$ & $\metas1$ & $01$ & $01$ & \\ \hline
    $g'_ih'_i$ & & $11$ & $00$ & $11$ & $\metas0$ & $11$ & $00$ & $01$ & $01$ & $00$\\
  \end{tabular}
\end{table}


\subsection{The Basic Construction}
We revisit the templates for parallel computation of all prefixes, i.e., the
part of the framework relevant to our construction. To this end, recall
Definition~\ref{def:ppc}. In our case, $\OP=\diamond_{\metas}$ and $D=\BBM{2}$.
\cite{ladner1980parallel} provides a family of recursive constructions of
$\ppc_{\OP}$ circuits. They are obtained by combining two different recursive
patterns. The first pattern, which optimizes for size of the resulting circuits,
is depicted in Figure~\ref{subfig:pattern1}. We distinguish between even and odd
number of inputs. If $B$ is even, we discard the rightmost gray wire and set
$\bar{B} \coloneqq B$; if $B$ is odd, we set $\bar{B}\coloneqq B-1$ and include
the rightmost wire. In the following, denote by $|C|$ the size of a circuit $C$
and by $d(C)$ its depth.
\begin{lemma}\label{lem:Rsize}
Suppose that $C$ and $P$ are circuits implementing $\OP$ and
$\ppc_{\OP}(\lceil B/2\rceil)$ for some $B\in \NN$, respectively.
Then applying the recursive pattern given at the left of Figure~\ref{fig:tree}
yields a $\ppc_{\OP}(B)$ circuit. It has depth $2d(C)+d(P)$ and size at most
$(B-1)|C|+|P|$. Moreover, the last output is at depth at most $d(C)+d(P)$ of the
circuit.
\end{lemma}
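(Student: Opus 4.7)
The plan is to view the recursive pattern as three layers stacked between the $B$ inputs and the $B$ outputs. In the first layer, I pair adjacent inputs using one copy of $C$ per pair, producing intermediate values $e_k := d_{2k-1} \OP d_{2k}$ for $k=1,\ldots,\lfloor B/2\rfloor$; when $B$ is odd, the rightmost input is carried through unchanged as $e_{\lceil B/2\rceil} := d_B$ (the ``included gray wire''). In the second layer, these $\lceil B/2\rceil$ values are fed into the recursive block $P$, which by hypothesis produces $\pi'_k = \bigoplus_{j=1}^k e_j$ for $k=1,\ldots,\lceil B/2\rceil$. In the third layer, I assemble the outputs: $\pi_1 := d_1$ (a wire), $\pi_{2k} := \pi'_k$ (a wire), and $\pi_{2k+1} := \pi'_k \OP d_{2k+1}$ using one additional copy of $C$ per odd index $\geq 3$; when $B$ is odd, $\pi_B = \pi'_{\lceil B/2\rceil}$ is read off directly without an extra gate.

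Correctness reduces to unfolding definitions and invoking associativity of $\OP$. By construction $e_k$ aggregates inputs $d_{2k-1},d_{2k}$ (or just $d_B$ in the odd case), so $\pi'_k = \bigoplus_{j=1}^{2k} d_j$ matches the specification of $\pi_{2k}$, and $\pi'_k \OP d_{2k+1} = \bigoplus_{j=1}^{2k+1} d_j$ matches that of $\pi_{2k+1}$. The boundary outputs $\pi_1 = d_1$ and (for odd $B$) $\pi_B = \pi'_{\lceil B/2\rceil}$ are then immediate.

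For the size bound, the pairing layer contributes at most $\lfloor B/2\rfloor$ copies of $C$, and the final patch layer contributes one copy per odd output other than $\pi_1$ (and, when $B$ is odd, other than $\pi_B$), i.e., at most $\lceil B/2\rceil - 1$ copies. Summing yields at most $B-1$ external copies of $C$ in addition to $P$. For the depth bound, any signal traverses at most one $C$ in the pairing layer, then $P$ (adding $d(P)$), then at most one $C$ in the patch layer, for an overall depth of $2d(C)+d(P)$. The last output $\pi_B$ is always a $\pi'$ value read directly off $P$ and so never passes through the patch layer; its longest path therefore consists of a single pairing gate followed by all of $P$, giving depth at most $d(C)+d(P)$ as claimed.

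The only real obstacle is the bookkeeping around parity: one must check that the unpaired input $d_B$ in the odd case neither breaks correctness (which associativity guarantees) nor violates the stated size and depth bounds. In fact it saves one final $C$ gate and slightly shortens the path to $\pi_B$, so the quoted $(B-1)|C|+|P|$ and $d(C)+d(P)$ bounds remain valid in both parities. Everything else is a direct counting argument.
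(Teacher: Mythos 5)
Your proposal is correct and follows essentially the same argument as the paper: feed the pairwise combinations $d_{2k-1}\OP d_{2k}$ (plus $d_B$ when $B$ is odd) into $P$, read even outputs directly off $P$, patch odd outputs with one extra $\OP$, and count $\lfloor B/2\rfloor$ gates before $P$ plus one per odd index in $[1,B]\setminus\{1,B\}$ after it. The depth accounting, including the observation that $\pi_B$ skips the patch layer and so sits at depth $d(C)+d(P)$, matches the paper's proof as well.
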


\begin{proof}
Observe that $P$ receives as inputs $d_{2i-1}\OP d_{2i}$ for $i\in [1,\lfloor
B/2\rfloor]$, and in addition $d_B$ in case $B$ is odd. Thus, it outputs
$\pi_i'=\bigoplus_{j=1}^{2i} d_j$ for $i\in [1,\lfloor B/2\rfloor]$, and also
$\pi_{\lceil B/2\rceil}'=\bigoplus_{j=1}^B d_j$ if $B$ is odd. Hence, the
circuit outputs $\pi_i=\bigoplus_{j=1}^i d_j$ if $i\in [1,B]$ is even and
\begin{equation*}
\pi_i=\pi_{i-1}\oplus d_i=\left(\bigoplus_{j=1}^{2\lfloor i/2\rfloor}
d_j\right)\OP d_i = \bigoplus_{j=1}^i d_j
\end{equation*}
if $i\in [1,B]$ is odd, showing correctness. The depth of the circuit is
immediate from the construction, and the size follows from the fact that there
is exactly one instance of $C$ for each even $i\in [1,B]$ before $P$ and one for
each odd $i\in [1,B]\setminus \{1,B\}$ after $P$. Output $\pi_B$ has a depth
that is smaller by $d(C)$, as it is an output of $P$.
\end{proof}

\begin{figure*}
\begin{center}
\begin{subfigure}{.31\textwidth}
\begin{tikzpicture}[scale=0.995]
\draw[thick]
node[] (buff1) {}
++(right:.6) node[] (buff2) {}
++(right:.6) node[operator] (op1) {$\OP$}
++(right:.6) node[] (buff3) {}
++(right:.6) node[operator, draw=none, text=white] (op2) {$\OP$}
++(right:.6) node (buff4) {}
++(right:.6) node[operator] (op3) {$\OP$}
++(right:.6) node[] (buff5) {}
++(right:.6) node[] (buff6) {}

(buff1) ++(up:12pt) -- ++(down:75pt) node[circ] (in1) {} -- ++(down:12pt) node (l1) {}
(buff2) ++(up:12pt) -- ++(down:30pt) ++(down:5pt) node (in2) {}
(op1) -- ++(up:12pt) ++(down:30pt) node (dash1s) {} ++(down:27pt) node(dash1e) {} -- ++(down:18pt) node[circ] (in3) {} -- ++(down:12pt) node (l3) {}
(buff3) ++(up:12pt) -- ++(down:30pt) ++(down:5pt) node (in4) {}
(buff4) ++(up:12pt) ++(down:35pt) node (in6) {}
(op3) -- ++(up:12pt) ++(down:30pt) node (dash2s) {} ++(down:27pt) node(dash2e) {} -- ++(down:18pt) node[circ] (in7) {} -- ++(down:12pt) node (l7) {}
(buff5) ++(up:12pt) -- ++(down:30pt) ++(down:5pt) node (in8) {}
(buff6) ++(down:23pt) node (in9) {}

(buff1) -- (in1)
(buff2) -- (in2) ++(down:16pt) -- ++(down:12pt) node[operator, fill=white] (op4) {$\OP$} -- ++(down:24pt) node (l2) {}
(op1) -- (dash1s)
(in2) ++(up:10pt) node[circ] {} -- (op1.north west)
(buff3) -- (in4) ++(down:16pt) -- ++(down:12pt) node[operator, fill=white] (op5) {$\OP$} -- ++(down:24pt) node (l4) {}
(in4) ++(up:10pt) node[circ] {} -- (op2.north west)
(op3) -- (dash2s)
(in6) ++(up:10pt) -- (op3.north west)
(buff5) -- (in8) ++(down:16pt) -- ++(down:12pt) node[operator, fill=white] (op6) {$\OP$} -- ++(down:24pt) node (l8) {}

(in1) -- (op4.north west)
(in3) -- (op5.north west)
(in7) -- (op6.north west)

(in2) ++(up:4pt)++(left:4pt) node (Rcleft) {}
(in9) ++(down:16pt)++(right:4pt) node (Rcright) {}
(in4) ++(down:6pt)++(right:26pt) node (label) {$R_c$}
(Rcleft) rectangle (Rcright)
;

\draw[thick]
(buff3) ++(right:25pt) node {$\hdots$}
(op5) ++(right:25pt) node {$\hdots$}
;

\draw[dashed, thick]
(dash1s.north)--(dash1e.south)
(dash2s.north)--(dash2e.south)
;

\draw[thick]
(l1) node[fill=white] {$\inp^v_1$}
(l2) node[fill=white] {$\inp^v_2$}
(l3) node[fill=white] {$\inp^v_3$}
(l4) node[fill=white] {$\inp^v_4$}
(l8) node[fill=white] {$\inp^v_{\bar{B}}$}
(l7)++(left:3pt) node[fill=white] {$\inp^v_{\bar{B}-1}$}
(buff1) ++(up:17pt) node[] {$\outp^v_1$}
(buff2) ++(up:17pt) node[] {$\outp^v_2$}
(op1) ++(up:17pt) node[] {$\outp^v_3$}
(buff3) ++(up:17pt) node[] {$\outp^v_4$}
(buff5) ++(up:17pt) node[] {$\outp^v_{\bar{B}}$}
(op3) ++(up:17pt)++(left:3pt) node[] {$\outp^v_{\bar{B}-1}$}
;

\draw[thick, gray]
(buff6) ++(up:12pt) -- (in9) ++(down:16pt) -- ++(down:36pt) node (l9) {}
(buff6) ++(up:17pt)++(right:2pt) node[] {$\outp^v_{\bar{B}+1}$}
(l9)++(right:2pt) node[fill=white] {$\inp^v_{\bar{B}+1}$}
;

\end{tikzpicture}
\caption{Recursion pattern of left nodes.}
\label{subfig:pattern1}
\end{subfigure}
\begin{subfigure}{.33\textwidth}
\begin{tikzpicture}[scale=1.1]
\draw[thick, level/.style={sibling distance=80/#1, level distance=20pt}, every node/.style={circle, solid, draw}, dotted, align=center]
node[fill=black] (root) {}
  child[sibling distance=60pt]{ node[fill=gray!25, label={[label distance=-5pt]1:$R_{\ell}$}] (child1) {}
    child{ node[fill=black, label={[label distance=-5pt]1:$R_c$}] (child11) {}
      child{ node[fill=gray!25] (child111) {}
        child{ node (child1111) {}}
      }
      child{ node[fill=black] (child112) {}
        child[sibling distance=15pt]{ node (child1121) {}}
        child[sibling distance=15pt]{ node (child1122) {}}
      }
    }
  }
  child{ node[fill=black, label={[label distance=-5pt]1:$R_r$}] (child2) {}
    child[sibling distance=40pt]{ node[fill=gray!25] (child21) {}
      child{ node[fill=black] (child211) {}
        child[sibling distance=15pt]{ node (child2111) {}}
        child[sibling distance=15pt]{ node (child2112) {}}
      }
    }
    child[sibling distance=40pt]{ node[fill=black] (child22) {}
      child[sibling distance=25pt]{ node[fill=gray!25] (child221) {}
        child{ node (child2211) {}}
      }
      child[sibling distance=25pt]{ node[fill=black] (child222) {}
        child[sibling distance=15pt]{ node (child2221) {}}
        child[sibling distance=15pt]{ node (child2222) {}}
      }
    }
  }
;

\draw decorate[decoration={brace,mirror,raise=0pt, amplitude=5pt}]{
(child1111) ++(left:15pt)++(down:10pt) --node[right=3pt] (a1) {} ++(up:90pt)};
\draw decorate[decoration={brace,raise=0pt, amplitude=5pt}]{
(child2222) ++(right:15pt)++(down:10pt) --node[left=3pt] (a2) {} ++(up:90pt)};
\draw[thick, ->, shorten >=2pt] (a1) -- (child1);
\draw[thick, ->, shorten >=2pt] (a2) -- ++(-25pt,30pt) -- (root);
\node (spacing) at (1,-3.2) {};
\end{tikzpicture}
\caption{Recursion tree $T_4$.}
\label{subfig:rectree}
\end{subfigure}
\begin{subfigure}{.33\textwidth}
\begin{tikzpicture}[scale=1.28]
\draw[thick]
node (out1) {}
++(right:.6) node (out2) {}
++(right:.6) node (out3) {}
++(right:.6) node (out4) {}
++(right:.6) node[operator] (op1) {$\OP$}
++(right:.6) node[operator] (op2) {$\OP$}
++(right:.6) node (op3) {}
++(right:.6) node[operator] (op4) {$\OP$}

(op1) -- ++(up:12pt)
(op2) -- ++(up:12pt)
(op4) -- ++(up:12pt)

(out1) ++(up:12pt) -- ++(down:34pt) ++(down:20pt) -- ++(down:12pt) node (in1) {}
(out2) ++(up:12pt) -- ++(down:34pt) ++(down:20pt) -- ++(down:12pt) node (in2) {}
(out3) ++(down:48pt) node (in3) {}
(out4) ++(up:12pt) -- ++(down:34pt) ++(down:20pt) -- ++(down:12pt) node (in4) {}
(op1) -- ++(down:22pt) ++(down:20pt) -- ++(down:12pt) node (in5) {}
(op2) -- ++(down:22pt) ++(down:20pt) -- ++(down:12pt) node (in6) {}
(op3) ++(down:48pt) node (in7) {}
(op4) -- ++(down:22pt) ++(down:20pt) -- ++(down:12pt) node (in8) {}

(out4) ++(down:17pt) node[circ, fill=black, draw] {} -- (op1.north west)
(out4) ++(down:17pt) -- (op2.north west)
(out4) ++(down:17pt) -- (op4.north west)

(out1) ++(down:22pt)++(left:4pt) node (Rrleft) {}
(out4) ++(down:42pt)++(right:4pt) node (Rrright) {}
(out2) ++(down:32pt)++(right:10pt) node (label1) {$R_{\ell}$}
(Rrleft) rectangle (Rrright)

(op1) ++(down:22pt)++(left:4pt) node (Rlleft) {}
(op4) ++(down:42pt)++(right:4pt) node (Rlright) {}
(op3) ++(down:32pt)++(left:10pt) node (label2) {$R_r$}
(Rlleft) rectangle (Rlright)
;

\draw[thick]
(op3) node {$\hdots$}
(out3) node {$\hdots$}
(in3) node {$\hdots$}
(in7) node {$\hdots$}

(in1) node[fill=white] {$\inp^v_1$}
(in2) node[fill=white] {$\inp^v_2$}
(in4) node[fill=white] {$\inp^v_{\bar{B}}$}
(in6) node[fill=white, text=white] {$\inp^v_{\bar{B}+2}$}
(in5) node[fill=white] {$\inp^v_{\bar{B}+1}$}
(in6)++(right:4pt) node {$\inp^v_{\bar{B}+2}$}
(in8) node[fill=white] {$\inp^v_B$}

(out1) ++(up:16pt) node[] {$\outp^v_1$}
(out2) ++(up:16pt) node[] {$\outp^v_2$}
(out4) ++(up:16pt) node[] {$\outp^v_{\bar{B}}$}
(op1) ++(up:16pt) node[] {$\outp^v_{\bar{B}+1}$}
(op2) ++(up:16pt)++(right:4pt) node[] {$\outp^v_{\bar{B}+2}$}
(op4) ++(up:16pt) node[] {$\outp^v_{B}$}
;

\end{tikzpicture}
\caption{Recursion pattern of right nodes.}
\label{subfig:pattern2}
\end{subfigure}
\end{center}
\caption{The recursion tree $T_4$ (center). Right nodes are depicted black, left
nodes gray, and leaves white. The recursive patterns applied
at left and right nodes are shown on the left and right, respectively. At the
root and its left child, we have that $\bar{B}=B/2$; for other nodes, $\bar{B}$
gets halved for each step further down the tree (where the leaves simply wire
their single input to their single output). The left pattern comes in different
variants. The gray wire with
index $\bar{B}+1$ is present only if $B$ is odd; this never occurs in
$\ppc(C,T_b)$, but becomes relevant when initially applying the left pattern
exclusively for $k\in \NN$ steps (see Theorem~\ref{thm:ppc}), reducing the size
of the resulting circuit at the expense of increasing its depth by $k$.}
\label{fig:tree}
\end{figure*}

The second recursive pattern, shown in Figure~\ref{subfig:pattern2}, avoids to
increase the depth of the circuit beyond the necessary $d(C)$ for each level of
recursion. Assume for now that $B$ is a power of $2$. We represent the recursion
as a tree $T_b$, where $b\coloneqq\log B$, given in the center of
Figure~\ref{fig:tree}. It has depth $b$ with all leaves (filled in white) in this depth, and
there are two types of non-leaf nodes: \emph{right} nodes (filled in black) have
two children, a left and a right node, whereas \emph{left} nodes (filled in
gray) have a single child, which is a right node. $T_b$ is essentially a
Fibonacci tree in disguise.
\begin{definition}\label{def:fibonacci}
$T_0$ is a single leaf. $T_1$ consists of the (right) root and two attached
leaves. For $b\geq 2$, $T_b$ can be constructed from $T_{b-1}$ and $T_{b-2}$ by
taking a (right) root $r$, attaching the root of $T_{b-1}$ as its right child, a
new left node $\ell$ as the left child of $r$, and then attaching the root of
$T_{b-2}$ as (only) child of $\ell$.
\end{definition}

The recursive construction is now defined as follows. A right node applies the
pattern given in Figure~\ref{fig:tree} to the right. $R_{\ell}$ is the
circuit (recursively) defined by the subtree rooted at the left child and $R_r$ is
the circuit (recursively) defined by the subtree rooted at the right child.
Furthermore, $\bar{B}=2^{b-d-1}$, where $d\in [b]$ is the depth of the node.
A left child applies the pattern on the left. $R_c$ is (recursively) defined by
the subtree rooted at its child and $\bar{B}=2^{b-d}$, where
$d\in [b]$ is the depth of the node.

The base case for a single input and output
is simply a wire connecting the input to the output, for both patterns. As
$b=\log B$ and each recursive step cuts the number of inputs and outputs in
half, the base case applies if and only if the node is a leaf. Note that the
figure shows the recursive patterns at the root and its left child, where
$\bar{B}=2^{b-1}$ is always even (i.e., in this recursive pattern, the gray wire
with index $\bar{B}+1$ is never present); when applying the patterns to nodes
further down the tree, $\bar{B}$ and $B$ are scaled down by a factor of $2$ for
every step towards the leaves.

In the following, denote by $\ppc(C,T_b)$ the circuit that results from
applying the recursive construction described above to the base circuit $C$
implementing $\oplus$. Moreover, we refer to the $i^{th}$ input and output of
the subcircuit corresponding to node $v\in T_b$ as $d^v_i$ and $\pi^v_i$,
respectively.

\begin{lemma}\label{lem:rdepth}
If $C$ implements $\oplus$, $\ppc(C,T_b)$ is a $\ppc_{\oplus}(2^b)$
circuit.
\end{lemma}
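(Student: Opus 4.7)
The plan is to prove the statement by induction on $b$, invoking Lemma~\ref{lem:Rsize} at left nodes and verifying correctness of the right-node pattern directly from associativity. The base case $b=0$ is trivial: $T_0$ is a single leaf and $\ppc(C,T_0)$ is a wire, which is a $\ppc_{\oplus}(1)$ circuit.

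For the inductive step, suppose the claim holds for all $T_{b'}$ with $b'<b$. Before analyzing the construction, I would first confirm by the same induction that each $\ppc(C,T_{b'})$ has exactly $2^{b'}$ inputs and outputs: from Definition~\ref{def:fibonacci}, together with the fact that left nodes double the number of inputs relative to their child and right nodes sum the numbers from their two children, one gets the recurrence $n(T_b)=n(T_{b-1})+2n(T_{b-2})$ with $n(T_0)=1$, $n(T_1)=2$, whose solution is $2^b$. This also shows that both subtrees below the root of $T_b$ handle exactly $2^{b-1}$ inputs.

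Next, I would analyze the left subtree, whose root is a left node whose single child is the root of $T_{b-2}$. By the induction hypothesis, this child realizes a $\ppc_{\oplus}(2^{b-2})$ circuit. Lemma~\ref{lem:Rsize}, applied with $P$ being this child and $B=2^{b-1}$, then directly certifies that the pattern of Figure~\ref{subfig:pattern1} yields a $\ppc_{\oplus}(2^{b-1})$ circuit $R_\ell$ acting on the first half $d_1,\ldots,d_{2^{b-1}}$; in particular, $\pi^{\ell}_i=\bigoplus_{j=1}^i d_j$ for $i\in[1,2^{b-1}]$. For the right subtree, the induction hypothesis yields that $R_r\coloneqq\ppc(C,T_{b-1})$ correctly computes prefixes on $d_{2^{b-1}+1},\ldots,d_{2^b}$, i.e., $\pi^{r}_i=\bigoplus_{j=1}^{i} d_{2^{b-1}+j}$ for $i\in[1,2^{b-1}]$.

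It remains to verify the right pattern (Figure~\ref{subfig:pattern2}) at the root. Its first $2^{b-1}$ outputs are routed directly from $R_\ell$, hence equal $\bigoplus_{j=1}^i d_j$ as required. Its remaining $2^{b-1}$ outputs are obtained by applying $\oplus$ to $\pi^{\ell}_{2^{b-1}}$ and $\pi^{r}_{i-2^{b-1}}$, which by associativity of $\oplus$ equals
\begin{equation*}
\Bigl(\bigoplus_{j=1}^{2^{b-1}} d_j\Bigr)\oplus\Bigl(\bigoplus_{j=2^{b-1}+1}^{i} d_j\Bigr)=\bigoplus_{j=1}^{i} d_j.
\end{equation*}
Therefore $\ppc(C,T_b)$ is a $\ppc_{\oplus}(2^b)$ circuit. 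The only real obstacle is bookkeeping: keeping the indexing used in the figures consistent with the induction, and carefully identifying which subcircuits play the roles of $P$, $R_\ell$, and $R_r$; once the index alignment is settled, correctness reduces to a single associativity step plus one invocation of Lemma~\ref{lem:Rsize}.
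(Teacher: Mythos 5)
Your proposal is correct and follows essentially the same route as the paper: induction on $b$, applying the induction hypothesis to the $T_{b-2}$ subcircuit and Lemma~\ref{lem:Rsize} to certify $R_\ell$ as a $\ppc_{\oplus}(2^{b-1})$ circuit, the induction hypothesis for $T_{b-1}$ to handle $R_r$, and one associativity step for the right pattern's upper-half outputs. The only nit is that $b=1$ must be checked separately as a second base case (the paper does this), since $T_1$'s root has two leaf children rather than the left-node/right-node pair your inductive step assumes; the verification is trivial.
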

\begin{proof}
We show the claim by induction on $b$. For $b=0$, the circuit correctly wires
the input to the output, as we have only one leaf. For $b=1$, the first output
equals the first input and the
second output is the result of feeding both inputs into a copy of $C$.

For $b\geq 2$, by the induction hypothesis the circuit $R_c$ used in the
construction at the left child of the root is a $\ppc_{\oplus}(2^{b-2})$
circuit. By Lemma~\ref{lem:Rsize}, the circuit $R_{\ell}$ in the construction at
the root is thus a $\ppc_{\oplus}(2^{b-1})$ circuit, showing that it outputs
$\pi_i^{\ell} = \bigoplus_{j=1}^i d_j = \pi_i$ for all $i\in [1,2^{b-1}]$. From
the induction hypothesis for $b-1$, we get that the circuit $R_r$ used in the
construction at the root is a $\ppc_{\oplus}(2^{b-1})$ circuit, showing that it
outputs $\pi_i^r = \bigoplus_{j=2^{b-1}+1}^{2^{b-1}+i} d_j$ for all $i\in
[1,2^{b-1}]$. By construction of the right recursion pattern we conclude that
for $i\in [2^{b-1}+1,2^b]$, we get the outputs
$\pi_{2^{b-1}} \oplus \pi_{i-2^{b-1}}^r = \bigoplus_{j=1}^i d_i = \pi_i$.
\end{proof}

\begin{lemma}\label{lem:depth}
$\ppc(C,T_b)$ has depth $b\cdot d(C)$.
\end{lemma}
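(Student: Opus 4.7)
The plan is to prove the claim by induction on $b$, making essential use of the structure of $T_b$ (Definition~\ref{def:fibonacci}) and of the refined depth bound for the left pattern provided by Lemma~\ref{lem:Rsize}.

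For the base cases, observe that $T_0$ is a single leaf and $\ppc(C,T_0)$ is just a wire of depth $0=0\cdot d(C)$. The tree $T_1$ is a right root with two leaf children, so the right recursion pattern (Figure~\ref{subfig:pattern2}) reduces to wiring $\pi^v_1=d^v_1$ and computing $\pi^v_2$ with a single copy of $C$; hence $\ppc(C,T_1)$ has depth $d(C)=1\cdot d(C)$.

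For the inductive step with $b\geq 2$, decompose $T_b$ into its right subtree $T_{b-1}$ and a left child whose unique subtree is $T_{b-2}$. Let $R_r=\ppc(C,T_{b-1})$ and let $R_\ell$ be the circuit obtained from the left-node pattern applied with $R_c=\ppc(C,T_{b-2})$. The induction hypothesis yields $d(R_r)=(b-1)d(C)$ and $d(R_c)=(b-2)d(C)$. Applying Lemma~\ref{lem:Rsize} to $R_\ell$ gives $d(R_\ell)=2d(C)+d(R_c)=b\cdot d(C)$, and moreover tells us that the last output of $R_\ell$ (which is $\pi^{R_\ell}_{\bar B}$) is at depth only $d(C)+d(R_c)=(b-1)d(C)$.

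Now examine the right-node pattern at the root (Figure~\ref{subfig:pattern2}). The outputs $\pi^v_1,\ldots,\pi^v_{\bar B}$ are wired directly from $R_\ell$, so they have depth at most $d(R_\ell)=b\cdot d(C)$. The remaining outputs $\pi^v_{\bar B+1},\ldots,\pi^v_B$ are produced by feeding $\pi^{R_\ell}_{\bar B}$ together with an output of $R_r$ into a single copy of $C$; their depth is therefore at most
\begin{equation*}
\max\bigl\{d(C)+d(R_c),\,d(R_r)\bigr\}+d(C)=\max\{(b-1)d(C),(b-1)d(C)\}+d(C)=b\cdot d(C).
\end{equation*}
Taking the maximum over all outputs completes the induction and yields $d(\ppc(C,T_b))=b\cdot d(C)$.

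The only delicate point is that a naive bound $d(\ppc(C,T_b))\leq d(R_\ell)+d(C)$ at the right root would give $(b+1)d(C)$ and miss the claim by one layer; the saving hinges on the sharper observation of Lemma~\ref{lem:Rsize} that $R_\ell$'s last output is one application of $C$ shallower than the rest of $R_\ell$, so the extra $\oplus$ used to combine with $R_r$'s outputs exactly absorbs the depth slack rather than stacking on top of it. Apart from tracking this last-output depth through the recursion, the argument is entirely structural.
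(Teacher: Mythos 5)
Your proof is correct and follows essentially the same route as the paper's: induction on $b$, invoking Lemma~\ref{lem:Rsize} to get depth $b\cdot d(C)$ for $R_\ell$ while noting that its last output sits one layer of $C$ shallower, so the combining gates at the right root do not add an extra level. The "delicate point" you flag is precisely the observation the paper's proof relies on.
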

\begin{proof}
We prove the claim by induction on $b$; it trivially holds for $b=0$, as we have
only one leaf. For $b=1$, $T_b$ is a right node with two leaves. The
two leaves have depth $0$; clearly, applying the right
pattern from Figure~\ref{fig:tree} then results in depth $d(C)$. For $b\geq 2$,
the subcircuit $R_r$ at the root has depth $(b-1)\cdot d(C)$ by the induction
hypothesis. For the subcircuit $R_{\ell}$ at the root, consider its subcircuit
$R_c$. By the induction hypothesis it has depth $(b-2)\cdot
d(C)$. Hence, by Lemma~\ref{lem:Rsize}, $R_{\ell}$ has depth $b\cdot d(C)$, but
its rightmost output $\pi_{2^{b-1}}^{\ell}$ has depth only $(b-1)\cdot d(C)$.
Thus, by construction the root's circuit has depth $b\cdot d(C)$.
\end{proof}

It remains to bound the size of the circuit. Denote by $F_i$, $i\in \NN$, the
$i^{th}$ Fibonacci number, i.e., $F_1=F_2=1$ and $F_{i+1}=F_i+F_{i-1}$ for all
$2\leq i\in \NN$.
\begin{lemma}\label{lem:size}
$\ppc(C,T_b)$ has size $(2^{b+2}-F_{b+5}+1)|C|$.
\end{lemma}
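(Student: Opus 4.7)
My plan is to extract a two-term recurrence for $S(b) := |\ppc(C, T_b)|/|C|$ from the Fibonacci-like structure of $T_b$ (Definition~\ref{def:fibonacci}) and then verify the closed form $S(b) = 2^{b+2} - F_{b+5} + 1$ by induction on $b$, using the Fibonacci identity $F_{b+5} = F_{b+4} + F_{b+3}$.

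First I would dispose of the base cases by direct inspection: $T_0$ is a single wire, so $S(0) = 0$, matching $2^2 - F_5 + 1 = 0$; $T_1$ applies the right pattern with $\bar B = 1$ to two leaves and thus contributes a single $\OP$-gate, so $S(1) = 1$, matching $2^3 - F_6 + 1 = 1$.

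For $b \geq 2$, I would read off the contribution of each piece of $T_b$ from Definition~\ref{def:fibonacci} and the two patterns of Figure~\ref{fig:tree}. The root is a right node with $\bar B = 2^{b-1}$, whose right child heads the subtree $T_{b-1}$ (contributing $S(b-1)|C|$) and whose left child is a left node with $\bar B = 2^{b-1}$ inputs applied to the subcircuit $R_c = \ppc(C, T_{b-2})$ (contributing $S(b-2)|C|$). Counting the new $\OP$-gates introduced by the two patterns: the right pattern at the root adds one copy of $C$ for each of $\outp^v_{\bar B+1}, \ldots, \outp^v_B$, i.e., $2^{b-1}$ gates; for the left pattern, since every $\bar B$ arising in the recursion is a power of two and hence even, the optional gray wire of Figure~\ref{subfig:pattern1} is never present, and Lemma~\ref{lem:Rsize} yields exactly $\bar B - 1 = 2^{b-1}-1$ additional gates on top of $R_c$. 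Summing the four contributions gives the recurrence $S(b) = 2^b - 1 + S(b-1) + S(b-2)$.

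Closing the form is then a one-line induction: assuming $S(b-1) = 2^{b+1} - F_{b+4} + 1$ and $S(b-2) = 2^b - F_{b+3} + 1$, plugging into the recurrence, collecting the powers of two into $2^{b+2} - 1$, and collapsing $F_{b+4} + F_{b+3}$ to $F_{b+5}$ immediately yields $S(b) = 2^{b+2} - F_{b+5} + 1$. The main obstacle is the first part: making sure the gate counts at each recursion pattern are exactly right, not merely an upper bound. The key observation that rescues exactness is that every $\bar B$ encountered in the construction of $\ppc(C, T_b)$ is a power of two, so the odd-$B$ case of Lemma~\ref{lem:Rsize} never arises and its bound is attained with equality.
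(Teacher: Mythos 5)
Your proof is correct and follows essentially the same route as the paper's: both derive the recurrence $s_b = s_{b-1} + s_{b-2} + 2^b - 1$ from the gate counts of the two recursion patterns ($2^{b-1}$ at the root's right pattern, $2^{b-1}-1$ at its left child) and close the form by induction using $F_{b+4}+F_{b+3}=F_{b+5}$. Your explicit remark that every $\bar{B}$ in $\ppc(C,T_b)$ is a power of two, so the odd-input case of Lemma~\ref{lem:Rsize} never arises and the count is exact, is a point the paper leaves implicit.
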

\begin{proof}
  Denote by $s_b$ the number of copies of $|C|$ in the circuit $\ppc(C,T_b)$. We
  show by induction that $s_b= 2^{b+2}-F_{b+5}+1$ for all $b\in \NN_0$. We have that
  $s_0=0=2^2-F_5+1$ and that $s_1=1=2^3-F_6+1$. For $b\geq 2$, we have that
  $s_b=s_{b-1}+s_{b-2}+s_r+s_{\ell}$, where $s_r$ and $s_{\ell}$ denote the size
  contribution of the recursive steps at the root and its left child,
  respectively. Checking the recursive patterns in Figure~\ref{fig:tree}, we see
  that $s_r=B-\bar{B}=2^{b-1}$ and $s_{\ell}=\bar{B}-1=2^{b-1}-1$. Thus, $s_b=
  s_{b-1}+s_{b-2}+2^b-1$, which the induction hypothesis yields
  \begin{align*}
  s_b &= 2^{b+1}-F_{b+4}+1+2^b-F_{b+3}+1+2^b-1\\
  &=2^{b+2}-F_{b+5}+1\,.\qedhere
  \end{align*}
\end{proof}

Asymptotically, the subtractive term of $F_{b+5}$ is negligible, as
$F_{b+5}\in (1/\sqrt{5}+o(1))((1+\sqrt{5})/2)^{b+5}\subseteq \BO(1.62^b)$;
however, unless $B$ is large, the difference is substantial. We also get a
simple upper bound for arbitrary values of $B$. To this end, we ``split'' in the
recursion such that the left branch is ``complete'' (i.e.\ the number of inputs
is a power of $2$), while applying the same
splitting strategy on the right. This is where our construction differs from and
improves on~\cite{ladner1980parallel}. They perform a balanced split and obtain
an upper bound of $4B$ on the circuit size.
\begin{corollary}\label{cor:rdepth}
For $B\in \NN$ and circuit $C$ implementing $\oplus$, set $b\coloneqq \lceil\log
B\rceil$. Then a $\ppc_{\oplus}(B)$ of depth $\lceil \log B\rceil d(C)$ and size
smaller than $(5B-2^b-F_{b+3})|C|\leq(4B-F_{b+3})$ exists.
\end{corollary}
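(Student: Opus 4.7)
The plan is to prove the corollary by extending the power-of-two construction of Lemmas~\ref{lem:rdepth}--\ref{lem:size} to arbitrary $B \in \NN$. I would build the circuit recursively: if $B = 2^b$, use $\ppc(C, T_b)$ directly. Otherwise, with $b = \lceil \log B \rceil$, split the inputs into a left block of $\bar{B} := 2^{b-1}$ inputs (handled by the complete circuit $\ppc(C, T_{b-1})$) and a right block of $B - \bar{B} < \bar{B}$ inputs (handled recursively by the same construction); the two subcircuits are composed via the right recursion pattern from Figure~\ref{subfig:pattern2}, which introduces $B - \bar{B}$ additional copies of $C$, one per output of the right subcircuit. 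This is exactly the ``left-complete split'' described in the paragraph preceding the corollary.

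For correctness, I would invoke Lemma~\ref{lem:rdepth} together with induction on $B$: the left subcircuit produces $\pi^\ell_i = \bigoplus_{j=1}^i d_j$ for $i \in [1, \bar{B}]$, the right subcircuit produces $\pi^r_i = \bigoplus_{j=\bar{B}+1}^{\bar{B}+i} d_j$ for $i \in [1, B-\bar{B}]$, and the right pattern combines them via $\pi_i = \pi^\ell_{\bar{B}} \oplus \pi^r_{i-\bar{B}}$ for $i > \bar{B}$. The depth bound then follows from Lemma~\ref{lem:depth}: the complete left subtree has depth $(b-1) d(C)$, and by the induction hypothesis the right subcircuit has depth at most $\lceil \log(B - \bar{B}) \rceil d(C) \leq (b-1) d(C)$; the additional $\oplus$ layer of the right pattern therefore raises the overall depth to exactly $b \cdot d(C) = \lceil \log B \rceil d(C)$.

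The main obstacle is the size bound. Writing $s(B)$ for the size of the constructed circuit (in copies of $C$), the construction yields the recurrence $s(B) = s_{b-1} + s(B - 2^{b-1}) + (B - 2^{b-1})$ for $B$ not a power of $2$, with $s(1) = 0$ and $s(2^k) = s_k = 2^{k+2} - F_{k+5} + 1$ from Lemma~\ref{lem:size}. I would prove $s(B) < 5B - 2^b - F_{b+3}$ by induction on $B$, unrolling the recurrence and collapsing intermediate Fibonacci terms via the identity $F_{n+2} = F_{n+1} + F_n$. The outer inequality $5B - 2^b - F_{b+3} \leq 4B - F_{b+3}$ is then immediate from $B \leq 2^b$. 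The delicate step is the Fibonacci bookkeeping: one must show that the correction term $F_{b'+3}$ from the recursive right subcircuit, with $b' = \lceil \log(B - 2^{b-1}) \rceil$ potentially strictly less than $b - 1$, combines with the $F_{b+4}$ contribution of $s_{b-1}$ to yield precisely $F_{b+3}$ at the top level, which is what makes the bound tight in the power-of-two case yet still sufficient when $B$ lies just above a power of $2$.
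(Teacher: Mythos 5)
Your construction is not the one the corollary is about, and it does not satisfy the claimed size bound. The ``left-complete split'' means that the \emph{range} of the left branch is a power of two; the left branch is still realized by the left recursion pattern $R_\ell$ of Lemma~\ref{lem:Rsize}, i.e., a layer of $2^{b-1}-1$ extra gates wrapped around $\ppc(C,T_{b-2})$, whose \emph{last} output sits at depth only $(b-1)d(C)$ while its other outputs may be deeper. You instead instantiate the left half as the full $\ppc(C,T_{b-1})$, which computes \emph{all} $2^{b-1}$ prefixes at depth $(b-1)d(C)$ and therefore costs $s_{b-1}=2^{b+1}-F_{b+4}+1$ gates, exceeding $|R_\ell|=s_{b-2}+2^{b-1}-1=3\cdot 2^{b-1}-F_{b+3}$ by $2^{b-1}+1-F_{b+2}\approx B/2$. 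Concretely, for $B=17$ your recurrence gives $s_4+0+1=32$ gates while the bound demands strictly fewer than $5\cdot 17-32-F_8=32$; for $B=33$ it gives $s_5+0+1=75$ against a bound of $5\cdot 33-64-F_9=67$. So the Fibonacci bookkeeping you flag as the delicate step cannot be made to work: the inequality you would need at the top level reduces to $2^{b-1}<F_{b+2}-1$, which is false for every $b\geq 2$ and fails badly as $b$ grows.

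The correct recurrence, and the one the paper uses, is
\begin{equation*}
s_B \;<\; s_{b-2} \;+\; s_{B-2^{b-1}} \;+\; (B-1)\,,
\end{equation*}
where $s_{b-2}$ accounts for the subcircuit $R_c$ at the root's left child's child (Lemma~\ref{lem:size}), $2^{b-1}-1$ gates come from the left pattern applied at the root's left child, and $B-2^{b-1}$ gates from the right pattern at the root; the right subtree is replaced by the recursively constructed $\ppc_\oplus(B-2^{b-1})$ circuit, bounded via the induction hypothesis by $4(B-2^{b-1})$. Your correctness and depth arguments are fine as far as they go\dash---your circuit is a valid $\ppc_\oplus(B)$ circuit of depth $\lceil\log B\rceil d(C)$\dash---but the decomposition must be repaired before the size analysis can be attempted.
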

\begin{proof}
  If $B$ is a power of $2$, the claim follows from
  Lemmas~\ref{lem:rdepth},~\ref{lem:depth}, and~\ref{lem:size}. In particular, for
  $B=1$ and $B=2$, respectively, $\ppc_{\oplus}(C,T_0)$ and $\ppc_{\oplus}(C,T_1)$
  meet the requirements. For $B>2$ that is not a power of $2$, set $b\coloneqq
  \lceil \log B\rceil$ and perform the same construction as for $\ppc(C,T_b)$, but
  replace $R_r$ at the root by the (recursively given) $\ppc_{\oplus}(B-2^{b-1})$
  circuit.

  Correctness is immediate from the recursive construction and
  Lemma~\ref{lem:rdepth}. Similarly, the depth bound follows from
  Lemma~\ref{lem:depth} and the recursive construction. Regarding size, we show by
  induction that $s_B$, the number of copies of $C$ required for a circuit for
  $B$ inputs, satisfies the claimed bound. This is already established for the
  base cases of $B=1$ and $B=2$. For $B>3$, we apply Lemma~\ref{lem:Rsize} to
  $R_{\ell}$ in the root's circuit and Lemma~\ref{lem:size} to its subcircuit
  $R_c$, while applying the induction hypothesis to the subcircuit $R_r$ in the
  root's circuit. We get that
  \begin{align*}
  s_B&< s_{2^{b-2}} + |\ppc_{\oplus}(C,T_{b-1})|+ (B-1)\\
  & < \left(2^b-F_{b+3}+1+4\left(B-2^{b-1}\right)+B-1\right)\\
  & = \left(5B-2^b-F_{b+3}\right)\,.\qedhere
  \end{align*}
\end{proof}

We remark that one can give more precise bounds by making case distinctions
regarding the right recursion, which for the sake of brevity we omit here.
Instead, we computed the exact numbers for $B\leq 70$, see
Figure~\ref{fig:improv}.

The construction derived from iterative application of Lemma~\ref{lem:Rsize} can
be combined with $\ppc(C,T_b)$, achieving the following trade-off; note that if
$B=2^b$ for $b\in \NN$, then $F_{\lceil \log B\rceil -k+3}$ can be replaced by
$F_{b-k +5}$.
\begin{theorem}[improving on~\cite{ladner1980parallel}]\label{thm:ppc}
Suppose $C$ implements $\OP$. For all $k\in [0, \lceil \log B\rceil]$ and
$B\in \NN$, there is a $\ppc_{\OP}(B)$ circuit of depth $(\lceil \log
B\rceil+k)d(C)$ and size at most
\begin{equation*}
\left(\left(2+\frac{1}{2^{k-1}}\right)B-F_{\lceil \log B\rceil -k
+3}\right)|C|\,.
\end{equation*}
\end{theorem}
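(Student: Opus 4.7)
The plan is to apply Lemma~\ref{lem:Rsize} iteratively as an outer wrapping $k$ times to a base $\ppc_{\OP}(\lceil B/2^k\rceil)$ circuit obtained from Corollary~\ref{cor:rdepth}. The two extremes are informative sanity checks: for $k=0$ the construction degenerates to Corollary~\ref{cor:rdepth} and the claimed coefficient $2+1/2^{-1}=4$ matches, while for $k=\lceil \log B\rceil$ the recursion drives the inner circuit down to a single wire, so only the outer applications contribute. Correctness is immediate by iterating Lemma~\ref{lem:Rsize} on top of the correct base circuit produced by Corollary~\ref{cor:rdepth}.

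For the depth, Corollary~\ref{cor:rdepth} yields an innermost $\ppc_{\OP}(\lceil B/2^k\rceil)$ circuit of depth $(\lceil \log B\rceil-k)\,d(C)$, and each of the $k$ outer invocations of Lemma~\ref{lem:Rsize} adds exactly $2d(C)$. Summing gives $(\lceil \log B\rceil - k)\,d(C) + 2k\,d(C) = (\lceil \log B\rceil+k)\,d(C)$, as required.

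For the size, I would proceed by induction on $k$. The base case $k=0$ is Corollary~\ref{cor:rdepth}, which gives at most $(4B-F_{\lceil \log B\rceil+3})|C|$, matching the claim since $2+1/2^{-1}=4$. For the induction step, Lemma~\ref{lem:Rsize} wraps a $\ppc_{\OP}(\lceil B/2\rceil)$ subcircuit and contributes an additional $(B-1)|C|$. The induction hypothesis with parameter $k-1$ applied to this subcircuit yields at most $((2+1/2^{k-2})\lceil B/2\rceil - F_{\lceil \log\lceil B/2\rceil\rceil - (k-1)+3})|C|$; using $\lceil \log\lceil B/2\rceil\rceil = \lceil \log B\rceil-1$ for $B\geq 2$ aligns the Fibonacci index with the claim. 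The algebraic step that drives the induction is
\[
(B-1) + \Bigl(2+\tfrac{1}{2^{k-2}}\Bigr)\tfrac{B}{2} \;=\; \Bigl(2+\tfrac{1}{2^{k-1}}\Bigr)B - 1,
\]
which for even $B$ gives the claim with slack.

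The main technical obstacle is the ceiling $\lceil B/2\rceil$ when $B$ is odd: substituting $\lceil B/2\rceil\leq (B+1)/2$ introduces a bounded additive excess of at most $1+1/2^{k-1}$ into the leading term, which must be absorbed. I expect two viable routes. One is to strengthen the base step by invoking the sharper bound $(5B-2^{\lceil \log B\rceil}-F_{\lceil \log B\rceil+3})|C|$ from the proof of Corollary~\ref{cor:rdepth}, exploiting the fact that $2^{\lceil \log B\rceil}\geq B$ strictly unless $B$ is a power of two. The other is a short parity case distinction in the induction step, invoking a slightly different upper bound on the inner size when $B$ is odd. The hint supplied in the theorem statement, namely that $F_{b-k+5}$ replaces $F_{\lceil \log B\rceil-k+3}$ when $B=2^b$, reflects exactly this extra slack: the Fibonacci subtractive term is looser for general $B$, and that looseness is precisely what absorbs the ceiling overheads.
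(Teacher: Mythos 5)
Your construction, depth computation, and size telescoping are exactly the paper's: it also obtains the circuit by applying Lemma~\ref{lem:Rsize} for $k$ steps around the Corollary~\ref{cor:rdepth} base circuit, gets depth $(2k+\lceil\log B_k\rceil)d(C)$, and sums the per-level contributions $(B_i-1)|C|$ plus the base size. The only loose end you flag---the ceiling excess for odd $B$---is closed in the paper by the cumulative estimate $B_i \le B/2^i + \sum_{j=1}^{i}2^{-j} < B/2^i + 1$, so that $B_i - 1 < B/2^i$ and the ``$-1$'' from each application of Lemma~\ref{lem:Rsize} absorbs the rounding at that level; this is essentially your second proposed route carried out once for all levels, so your argument is sound once that bound is inserted.
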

\begin{proof}
  For $k$ steps, we apply Lemma~\ref{lem:Rsize}, where in the final recursive step
  we use the circuit from Corollary~\ref{cor:rdepth}. Correctness is immediate
  from Lemma~\ref{lem:Rsize} and~Corollary~\ref{cor:rdepth}.

  Denote by $B_i$, $i\in [k+1]$, the number of in- and outputs of the (sub)circuit
  at depth $i$ of the recursion, i.e., $B_0=B$ and $B_{i+1}=\lceil B_i/2\rceil$
  for all $i\in [k]$. We have that $B_i\leq B/2^i+\sum_{j=1}^i
  2^{-j}<B/2^i+1$, which follows inductively via
  \begin{align*}
  B_{i+1}=\left\lceil \frac{B_i}{2}\right\rceil
  &\leq \left\lceil \frac{B_0/2^i+\sum_{j=1}^i 2^{-j}}{2}\right\rceil\\
  &\leq \frac{B_0}{2^{i+1}}+\left(\sum_{j=1}^i 2^{-j-1}\right)+\frac{1}{2}\\
  &=\frac{B}{2^{i+1}}+\sum_{j=1}^{i+1} 2^{-j}\,.
  \end{align*}
  Observe that $\lceil \log B_{i+1}\rceil=\lceil \log
  B_i\rceil-1$ for all $i\in [k]$. By Lemma~\ref{lem:Rsize}
  and~Corollary~\ref{cor:rdepth}, the size of the resulting circuit is thus
  bounded by
  \begin{align*}
  &\left(\frac{4B}{2^k}-F_{\lceil \log B\rceil -k +3}
  +\sum_{i=0}^{k-1}(B_i-1)\right)|C|\\
  <&\left(\frac{4B}{2^k}-F_{\lceil \log B\rceil -k
  +3}+\sum_{i=0}^{k-1}\frac{B}{2^i}\right)|C|\\
  =&\left(\left(\frac{2}{2^{k-1}}+2-\frac{1}{2^{k-1}}\right)B-F_{\lceil \log
  B\rceil -k +3}\right)|C|\\
  =&\;\left(\left(2+\frac{1}{2^{k-1}}\right)B-F_{\lceil \log B\rceil -k
  +3}\right)|C|\,.
  \end{align*}
  Finally, Lemmas~\ref{lem:Rsize} and~\ref{lem:depth} show that the circuit has
  depth
  \begin{equation*}
  (2k+\lceil \log B_k\rceil)d(C) = (\lceil \log B\rceil + k)d(C)\,.\qedhere
  \end{equation*}
\end{proof}

\begin{figure}
\centering
\includegraphics[width=\linewidth, clip, trim=4 29 9 35]{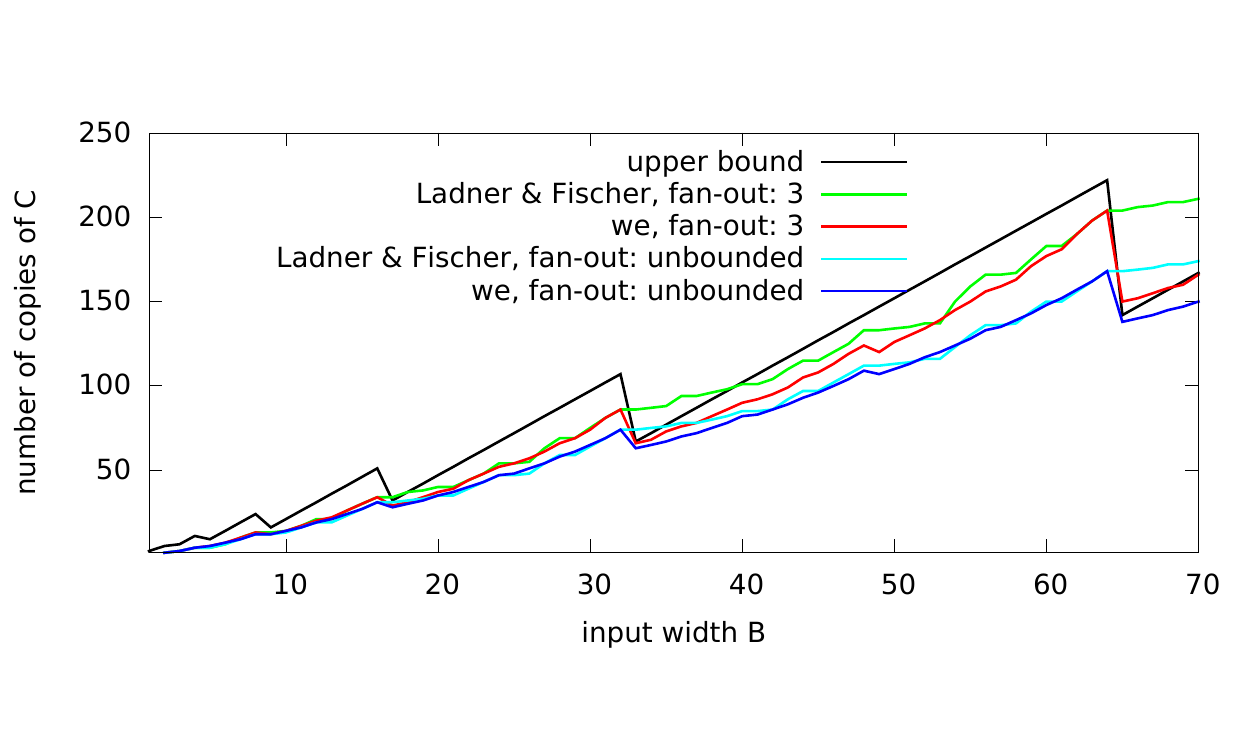}
\caption{Comparison of the balanced recursion from~\cite{ladner1980parallel} and
ours. The curves for unbounded fan-out are the exact sizes obtained, whereas
``upper bound'' refers to the bound from Corollary~\ref{cor:rdepth}; the fan-out
$3$ curves show that the unbalanced strategy performs better also for the
construction from Theorem~\ref{thm:ppc_fanout} (for $f=3$ and $k=0$) we derive
next.}
\label{fig:improv}
\end{figure}

\subsection{Constant Fan-out at Optimal Depth}\label{sec:fanout}

\begin{figure*}
\centering
\input{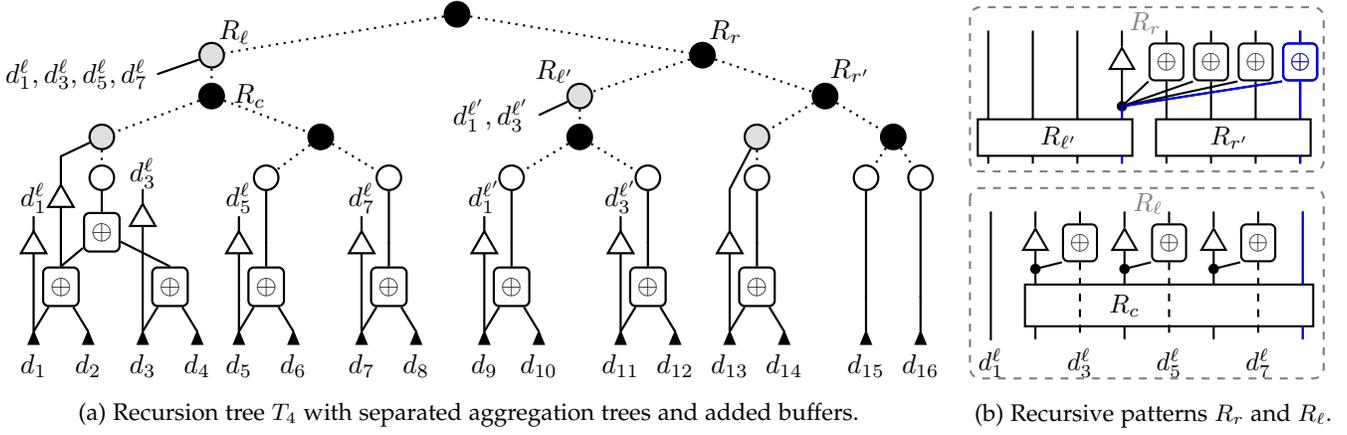}
\caption{Construction of $\ppc(C,T_4)'$. On the left, we see the recursion tree,
with the aggregation trees separated and shown at the bottom. Inputs are
depicted as black triangles. On the right, the application of the recursive
patterns at the children of the root is shown. Parts marked blue will be
duplicated in the second step of the construction that achieves constant
fan-out; this will also necessitate to duplicate some gates in the aggregation
trees.}
\label{fig:rectree}
\end{figure*}

The optimal depth construction incurs an excessively large fan-out of
$\Theta(B)$, as the last output of left recursive calls needs to drive all the
copies of $C$ that combine it with each of the corresponding right call's
outputs. This entails that, despite its lower depth, it will not result in
circuits of smaller physical delay than simply recursively applying the
construction from Figure~\ref{subfig:pattern1}. Naturally, one can insert buffer
trees to ensure a constant fan-out (and thus constantly bounded ratio between
delay and depth), but this increases the depth to $\Theta(\log^2 B + d(C)\log
B)$.

We now modify the recursive construction to ensure a constant fan-out, at the
expense of a limited increase in size of the circuit. The result is the first
construction that has size $\BO(B)$, optimal depth, and constant fan-out.

In the following, we denote by $f\geq 3$ the maximum fan-out we are trying to
achieve, where we assume that gates or memory cells providing the input to the
circuit do not need to drive any other components. For simplicity, we consider
$C$ to be a single gate, i.e., a gate driving two $C$ components has exactly
fan-out $2$.

We proceed in two steps. First, we insert $2B$ buffers into the circuit,
ensuring that the fan-out is bounded by $2$ everywhere except at the gate
providing the last output of each subcircuit corresponding to a left node.
In the second step, we will resolve this by duplicating these gates sufficiently
often, recursively propagating the changes down the tree. Neither of these
changes will affect the output (i.e.\ the correctness) of the circuit or its
depth, so the main challenges are to show our claim on the fan-out and bounding
the size of the final circuit.

\subsubsection{Step~1: Almost Bounding Fan-out by $2$}

Before proceeding to the construction in detail, we need some structural
insight on the circuit.
\begin{definition}\label{def:range}
For node $v\in T_b$, define its \emph{range $R_v$} and \emph{left-count
$\alpha_v$} recursively as follows.
\begin{itemize}
  \item If $v$ is the root, then $R_v=[1,2^b]$ and $\alpha_v=0$.
  \item If $v$ is the left child of $p$ with $R_p=[i,i+j]$, then
  $R_v=[i,i+(j+1)/2]$ and $\alpha_v=\alpha_p$.
  \item If $v$ is the right child of right node $p$ with $R_p=[i,i+j]$, then
  $R_v=[i+(j+1)/2+1,i+j]$ and $\alpha_v=\alpha_p$.
  \item If $v$ is the right child of left node $p$, then $R_v=R_p$ and
  $\alpha_v=\alpha_p+1$.
\end{itemize}
\end{definition}
Hence, the left-count $\alpha_v$ tells us for every node $v\in T_b$ the number
of left recursion steps preceding $v$, whereas $R_v$ gives us information
about the range of inputs used at node $v$. We observe that each recursion halves the
number of inputs and that the range is only cut in half if $\alpha_v$ does not
increase. Combining these observations with structural insights on the recursion
patterns in Figures~\ref{subfig:pattern1} and~\ref{subfig:pattern2}, we state
the following four properties of $\ppc(C,T_b)$.
\begin{lemma}\label{lem:structure}
Suppose the subcircuit of $\ppc(C,T_b)$ represented by node $v\in
T_b$ in depth $d\in [b+1]$ has range $R_v=[i,i+j]$. Then
\begin{enumerate}[(i)]
  \item it has $2^{b-d}$ inputs,
  \item $j=2^{b-d+\alpha_v}-1$,
  \item if $v$ is a right node, all its inputs are outputs of its childrens'
  subcircuits, and
  \item if $v$ is a left node or leaf, only its even inputs are provided by
  its child (if it has one) and for odd $k\in [1,2^{b-d}]$, we have that
  $d_k^v=\bigoplus_{k'=i+(k-1)2^{\alpha_v}}^{i+k2^{\alpha_v}-1}d_{k'}$.
\end{enumerate}
\end{lemma}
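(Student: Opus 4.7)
The plan is to verify (i)--(iv) simultaneously by induction on the depth $d\in[0,b]$ of $v$ in $T_b$. For the base case $d=0$, the node $v$ is the root, which by Definition~\ref{def:range} is a right node with $R_v=[1,2^b]$ and $\alpha_v=0$; Lemma~\ref{lem:rdepth} then gives $2^b$ inputs, making (i) and (ii) immediate, (iii) vacuous (the root's inputs are externally supplied and can be regarded as outputs of trivial input-supplying sub-subcircuits), and (iv) inapplicable since the root is a right node.

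For the inductive step at depth $d\geq 1$, I would case-split on how $v$ arises from its parent $p$ via the recursion patterns of Figure~\ref{fig:tree}: (A) $v$ is the left child of a right parent, inheriting $\alpha_v=\alpha_p$ with $d^v_k=d^p_k$ for $k\in[1,2^{b-d}]$ (the first half of $p$'s inputs); (B) $v$ is the right child of a right parent, inheriting $\alpha_v=\alpha_p$ with $d^v_k=d^p_{2^{b-d}+k}$ (the second half of $p$'s inputs); (C) $v$ is the unique child of a left parent, so $R_v=R_p$ and $\alpha_v=\alpha_p+1$ with $d^v_k=d^p_{2k-1}\OP d^p_{2k}$ (the pair-sum handed down by Figure~\ref{subfig:pattern1}). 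In all three cases, (i) and (ii) reduce to elementary arithmetic on the range lengths, combining the induction hypothesis at $p$ with Definition~\ref{def:range}.

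For (iii), whenever $v$ is a right node I would invoke $v$'s own right-recursion pattern (Figure~\ref{subfig:pattern2}): the $\OP$ gates placed at $v$'s recursion level receive their operands exclusively from the outputs of the subcircuits of $v$'s two children, establishing the claim directly at $v$ rather than via $p$. For (iv), the cleanest approach is to strengthen the inductive invariant to the assertion that $d^v_k=\bigoplus_{k'=i_v+(k-1)2^{\alpha_v}}^{i_v+k\cdot 2^{\alpha_v}-1}d_{k'}$ holds at \emph{every} node and for \emph{every} $k\in[1,2^{b-d}]$, not only at odd $k$ for left nodes or leaves. This strengthened invariant trivially holds at the root; it is preserved under Cases (A) and (B) because the block size stays at $2^{\alpha_p}$ while the range halves; and it is preserved under Case (C) because the two blocks of $p$ corresponding to $d^p_{2k-1}$ and $d^p_{2k}$ are contiguous inside $R_p=R_v$ and merge into a single block of size $2^{\alpha_p+1}=2^{\alpha_v}$.

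I expect the main obstacle to be the index bookkeeping in Case (C): one must verify that, when the pair-sum $d^p_{2k-1}\OP d^p_{2k}$ is unfolded via the strengthened formula at $p$, it produces exactly the block of $2^{\alpha_v}$ consecutive raw inputs starting at $i_v+(k-1)2^{\alpha_v}$, which comes down to checking that $(2k-2)\cdot 2^{\alpha_p}=(k-1)\cdot 2^{\alpha_v}$ and $2k\cdot 2^{\alpha_p}=k\cdot 2^{\alpha_v}$. Once the formula is in hand for all $k$ at every node, the remaining clause ``only its even inputs are provided by its child'' in (iv) reduces to a direct structural observation about the left-recursion pattern at $v$: every $d^v_k$ feeds a pair-sum gate into $R_c$, but only the odd-indexed $d^v_k$ additionally drive the $\OP$ gates emitting the odd outputs $\pi^v_k$, so the even-indexed inputs are precisely those whose sole consumer is the child subcircuit.
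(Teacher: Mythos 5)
Your proof is correct, but it organizes the induction differently from the paper. The paper dispatches (i)--(iii) as immediate observations (halving of the input count, inspection of Definition~\ref{def:range}, inspection of Figure~\ref{subfig:pattern2}) and proves only (iv) by induction\dash---on $b$ rather than on node depth\dash---exploiting the Fibonacci-style decomposition of Definition~\ref{def:fibonacci}: the claim is verified directly for the root's left child (where $\alpha_v=0$ and all inputs are raw inputs), inherited from the induction hypothesis for $T_{b-1}$ on the right subtree, and obtained for the subtree below the left child by composing the pair-sums $d_{2k-1}\oplus d_{2k}$ with the induction hypothesis for $T_{b-2}$. You instead run a single induction on the depth of $v$ with a case split on the parent's type, and\dash---crucially\dash---strengthen the invariant so that the block formula $d_k^v=\bigoplus_{k'=i+(k-1)2^{\alpha_v}}^{i+k2^{\alpha_v}-1}d_{k'}$ is asserted at \emph{every} node and \emph{every} index $k$. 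This strengthening is what makes your Case~(C) close: the pair-sum $d^p_{2k-1}\OP d^p_{2k}$ needs the formula at $p$ for an even index as well, and your explicit check $(2k-2)2^{\alpha_p}=(k-1)2^{\alpha_v}$, $2k\cdot 2^{\alpha_p}=k\cdot 2^{\alpha_v}$ makes precise the index bookkeeping that the paper compresses into ``combining this with the induction hypothesis for $b-2$.'' Your version is arguably more transparent on that point; the paper's version has the advantage of reusing the recursive structure already set up for Lemmas~\ref{lem:rdepth}--\ref{lem:size}. Two minor remarks: your base-case aside that (iii) is ``vacuous'' at the root is off (the root \emph{is} a right node; (iii) holds there by the right recursion pattern, as everywhere else), and your final reading of the clause ``only its even inputs are provided by its child'' as a statement about which inputs are consumed solely by $R_c$ is a plausible gloss on an admittedly loosely phrased structural property, which the paper likewise settles by pointing at Figure~\ref{subfig:pattern1}. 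Neither affects the substance of the argument.
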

\begin{proof}
  Property (i) is immediate from the fact that with each step of recursion, the
  number of input and output wires is cut in half. Checking the above definition,
  we see that the range stays the same whenever $\alpha_v$ increases, and otherwise
  is halved on each recursive step; Property (ii) follows. Property (iii) can be
  readily verified from Figure~\ref{subfig:pattern2}.

  The final property is shown by induction on $b$. It is immediate for $b=0$ and
  $b=1$. For $b\geq 2$, the subcircuit of the left child $\ell$ of the root has
  $2^{b-1}$ inputs, the odd ones of which are inputs to the overall circuit
  (cf.~Figures~\ref{subfig:pattern1} and~\ref{subfig:pattern2}). As we have
  $\alpha_v=0$, we get that
  $d_k^{\ell}=d_{i+k}=\bigoplus_{k'=i+(k-1)2^{\alpha_v}}^{i+k2^{\alpha_v}-1}d_{k'}$
  and the node satisfies the claim. For the subcircuit $R_r$ corresponding to the
  subtree rooted at the right child of the root, the claim holds by the induction
  hypothesis applied to $b-1$. For the subcircuit $R_{\ell}$ of the left child, we
  see from Figure~\ref{subfig:pattern1} that the subcircuit $R_c$ corresponding to the
  subtree rooted at its child $c$ receives inputs $d_k^c=d_{2k-1}\oplus d_{2k}$,
  $k\in [1,2^{b-1}]$. Combining this with the induction hypothesis for $b-2$, the
  induction step is completed also in this case.
\end{proof}

Lemma~\ref{lem:structure} leads to an alternative representation of the circuit
$\ppc(C,T_b)$, see Figure~\ref{fig:rectree}, in which we separate gates in the
recursive pattern from Figure~\ref{subfig:pattern1} that occur before the
subcircuit $R_c$. Adding the buffers we need in our construction, this results
in the modified patterns given in Figure~\ref{subfig:recpatterns}. The separated
gates appear at the bottom of Figure~\ref{subfig:rectree2}: for each leaf $v$ of
$T_b$, there is a tree of depth $\alpha_v$ aggregating all of the circuit's
inputs from its range. Each non-root node in an aggregation tree provides its
output to its parent. In addition, one of the two children of an inner node in
the tree must provide its output as an input to one of the subcircuits
corresponding to a node of $T_b$, cf.~Property~(iv) of Lemma~\ref{lem:structure}.

From this representation, we will derive that the following modifications of
$\ppc(C,T_b)$ result in a $\ppc_{\oplus}(2^b)$ circuit $\ppc(C,T_b)'$, for which
a fan-out larger than $2$ exclusively occurs on the last outputs of subcircuits
corresponding to nodes of $T_b$.
\begin{compactenum}
  \item Add a buffer on each wire connecting a non-root node of any of the
  aggregation trees to its corresponding subcircuit (see
  Figure~\ref{subfig:rectree2}).
  \item For the subcircuit corresponding to left node $\ell$ with range
  $R_{\ell}=[i,i+j]$, add for each even $k\leq j$ (i.e., each even $k$ but the
  maximum of $j+1$) a buffer before output $\pi^{\ell}_k$ (see bottom of
  Figure~\ref{subfig:recpatterns}).
  \item For each right node $r$ with range $[i,i+j]$, add a buffer before output
  $\pi^r_{(j+1)/2}$ (see top of Figure~\ref{subfig:recpatterns}).
\end{compactenum}
\begin{lemma}\label{lem:fanout_almost}
With the exception of gates providing the last output of subcircuits
corresponding to nodes of $T_b$ (blue in Figure~\ref{subfig:recpatterns}),
fan-out of $\ppc(C,T_b)'$ is $2$. Buffers or gates driving an output of the
circuit drive nothing else.
\end{lemma}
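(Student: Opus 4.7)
The plan is to enumerate every gate and buffer of $\ppc(C,T_b)'$ by location in the construction and, for each, identify its set of successors, verifying that fan-out exceeds $2$ only at the exempted positions. Since the circuit decomposes into the aggregation trees together with the recursive patterns of Figure~\ref{subfig:recpatterns} attached at each node $v\in T_b$, it suffices to inspect these components separately. Throughout I will use Property~(iv) of Lemma~\ref{lem:structure} to identify which aggregated prefix values must be extracted to feed a subcircuit input, since this is the precise mechanism by which hidden high fan-out would otherwise arise in the original $\ppc(C,T_b)$.

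For the aggregation trees, every non-root gate drives at most its parent in the tree and the buffer inserted by modification~1 on the wire carrying its value to a subcircuit input $d^v_k$; the root gate drives only such a buffer. Inside the left recursion pattern $R_\ell$, the top C-gates each drive a single input to $R_c$, the bottom C-gates each produce only the one odd output $\pi^\ell_{2i+1}$, and the sole position where fan-out could exceed two is the gate realising $\pi^c_i=\pi^\ell_{2i}$, which normally has to feed both the outgoing wire and the C-gate producing $\pi^\ell_{2i+1}$. Modification~2 introduces a buffer on the outgoing wire for all even $k\leq j$, reducing this gate's fan-out to two; the only exception is the maximal even index $\bar{B}$, whose producing gate is precisely the exempted ``last output'' of $R_\ell$. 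Inside the right pattern $R_r$, the upper half of the outputs are pass-throughs from $R_\ell$ handled by the previous case, while the lower half are driven by dedicated C-gates of fan-out one; the residual high-fan-out point is the gate supplying $\pi^r_{(j+1)/2}$, which coincides with the last output of $R_\ell$ and is thus exempted. Modification~3 places a buffer on its outgoing wire so that the second clause of the lemma still holds at this position.

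The second clause then follows by inspection once the first is settled: every inserted buffer feeds exactly one successor, and because modifications~2 and~3 interpose a buffer on every subcircuit output wire, the stage directly driving any circuit-level output is a buffer of fan-out one. The main obstacle I anticipate is the bookkeeping under recursive nesting: the ``last output'' of a deeply nested subcircuit is simultaneously an internal wire of the enclosing patterns and an output at the enclosing level, so I will need to carefully attribute each wire to the unique gate that actually drives it and to verify, via Property~(iv) of Lemma~\ref{lem:structure}, that modifications~1--3 jointly capture every previously unbounded fan-out in $\ppc(C,T_b)$ exactly once.
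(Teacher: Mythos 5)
Your decomposition into aggregation trees, the left pattern, and the right pattern, and your matching of modifications 1--3 to the respective fan-out hazards, is exactly the structure of the paper's argument. However, the one issue you defer to the end as ``the main obstacle I anticipate''\dash---attributing each wire to the unique gate that actually drives it under recursive nesting\dash---is not a side concern; it is the substantive content of the proof, and your proposal leaves it unresolved. The paper closes it with an explicit invariant, proved by taking a minimal violating subcircuit and checking the three node types: \emph{if every input to the subcircuit at a node of $T_b$ is driven by a gate or buffer driving nothing else, then the same holds for every output of that subcircuit.} With this invariant in hand, each recursive pattern only ever sees inputs and sub-outputs that are ``fresh'' drivers, so the local fan-out counts you perform are legitimate; without it, a statement like ``the gate realising $\pi^c_i$ has fan-out two after buffering'' is not justified, because that gate might simultaneously be the driver of an output several levels up.

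A concrete symptom of the missing attribution step is your claim that the root of an aggregation tree ``drives only such a buffer.'' Modification~1 buffers only the connections of \emph{non-root} aggregation-tree nodes, so the root drives the leaf's (sole) input directly; since a leaf's subcircuit is a bare wire, that root is the actual driver of the leaf's last (and only) output, and at the enclosing right node this wire fans out to a buffer plus up to $(j+1)/2$ copies of $C$. The aggregation-tree roots are therefore among the \emph{exempted} high-fan-out gates (this is precisely why Lemma~\ref{lem:fanout_Tb} later duplicates them), not fan-out-one gates as your first paragraph asserts. Your left-pattern paragraph does correctly exempt ``the producing gate of the last output,'' so the slip is an internal inconsistency rather than a counterexample to your strategy\dash---but it illustrates why the per-component enumeration must be anchored by the invariant (or an equivalent induction over $T_b$) before the local counts can be trusted.
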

\begin{proof}
First, we prove the following invariant: If each input to a subcircuit of
$\ppc(C,T_b)'$ corresponding to a node of $T_b$ is driven by a gate or
buffer driving no other wires, the same holds true for the outputs of
the subcircuit.
Suppose for contradiction that this invariant is violated and
  consider a minimal subcircuit doing so. There are three cases.
  \begin{itemize}
    \item The subcircuit corresponds to a leaf. This is a contradiction, as then
    the subcircuit simply is a wire connecting its sole input to its output.
    \item The subcircuit corresponds to a right node $r$
    (cf.~top of Figure~\ref{subfig:recpatterns}) with range $[i,i+j]$. As the invariant
    applies to the subcircuit corresponding to its left child, its outputs
    $\pi^r_1,\ldots,\pi^r_{(j-1)/2}$ satisfy the invariant. Its output
    $\pi^r_{(j+1)/2}$ satisfies the invariant due to the inserted buffer. As the
    remaining outputs are driven by gates that drive nothing else, this case also
    leads to a contradiction.
    \item The subcircuit corresponds to a left node $\ell$
    (cf.~bottom of Figure~\ref{subfig:recpatterns}) with range $[i,i+j]$. As
    $d^{\ell}_1$ is simply wired to $\pi^{\ell}_1$ (and nothing else), it
    satisfies the invariant. The last output $\pi^{\ell}_{j+1}$ satisfies the
    invariant, because the recursively used subcircuit does. The remaining outputs
    are driven by gates or buffers driving nothing else, again resulting in a
    contradiction.
  \end{itemize}
As all cases result in a contradiction, the invariant holds.

Next, observe that, by construction, the aggregation trees have fan-out $2$
after buffer insertion. Each buffer or gate from this part of the circuit
drives exactly one wire connecting it to the remainder of the circuit. Thus, the
above invariant shows that all subcircuits corresponding to nodes of $T_b$
satisfy that each of their outputs is driven by gate or a buffer driving nothing
else. Checking Figure~\ref{subfig:recpatterns}, we can thus conclude that indeed
no gate or buffer drives more than two others, unless it provides the last
output to one of the recursively used subcircuits in the construction at a right
node; gates or buffers driving an output of $\ppc(C,T_b)'$ drive only this
output.
\end{proof}
It remains to count the inserted buffers. We do so by computing a closed
form expression from the linear recurrence that describes the number of nodes of
a given type (left, right, leaf) in a given depth as function of the previous
one.
The following helper statement will be
useful for this, but also later on.
\begin{lemma}\label{lem:leaves}
Denote by $L_b\subseteq T_b$ the set of leaves of $T_b$. Then
$|L_b|=F_{b+2}$ and $\sum_{v\in L_b}2^{\alpha_v}=2^b$.
\end{lemma}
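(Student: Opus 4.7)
The plan is to prove both statements by a single induction on $b$, exploiting the Fibonacci-style recursive definition of $T_b$ (Definition~\ref{def:fibonacci}). Base cases are immediate: $T_0$ is a single leaf with $\alpha = 0$, giving $|L_0| = 1 = F_2$ and $\sum_{v \in L_0} 2^{\alpha_v} = 1 = 2^0$; $T_1$ has a right root with two leaves, each of which is the right or left child of a right node, so both have $\alpha = 0$, giving $|L_1| = 2 = F_3$ and $\sum_{v \in L_1} 2^{\alpha_v} = 2 = 2^1$.

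For $b \geq 2$, I would decompose $L_b$ as the disjoint union of the leaves inherited from the $T_{b-1}$ subtree rooted at the right child of the root, and the leaves inherited from the $T_{b-2}$ subtree rooted at the grandchild (the right child of the left child) of the root. The leaf count part is then immediate: $|L_b| = |L_{b-1}| + |L_{b-2}| = F_{b+1} + F_b = F_{b+2}$, by the induction hypothesis and the recurrence defining the Fibonacci numbers.

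The main step is tracking $\alpha_v$ across this decomposition. By Definition~\ref{def:range}, the right child of the root is a right child of a right node, so its left-count equals $0$, matching the left-count of the root of $T_{b-1}$ viewed as a standalone tree; hence leaves in this subtree contribute exactly $\sum_{v \in L_{b-1}} 2^{\alpha_v} = 2^{b-1}$ by induction. The grandchild that roots the left $T_{b-2}$ subtree, in contrast, is the right child of a left node, so its left-count is $1$ (one greater than that of the root of $T_{b-2}$ viewed standalone); every leaf below it therefore has its $\alpha$ increased by $1$ compared to the standalone setting, contributing $2 \cdot \sum_{v \in L_{b-2}} 2^{\alpha_v} = 2 \cdot 2^{b-2} = 2^{b-1}$ by induction. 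Summing yields $\sum_{v \in L_b} 2^{\alpha_v} = 2^{b-1} + 2^{b-1} = 2^b$, closing the induction.

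The only real obstacle is keeping the bookkeeping on $\alpha_v$ straight: one must not confuse the left node itself (which inherits $\alpha$ from its right-node parent) with its right child, which is where the $+1$ increment actually takes effect and which serves as the root of the embedded $T_{b-2}$. Once this is made explicit, the recurrence $a_b = a_{b-1} + 2 a_{b-2}$ with $a_0 = 1$, $a_1 = 2$ manifestly has closed form $a_b = 2^b$, which is all that is needed.
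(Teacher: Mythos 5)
Your proof is correct and follows essentially the same route as the paper: induction on $b$ via the Fibonacci-style decomposition of $T_b$, yielding $|L_b|=|L_{b-1}|+|L_{b-2}|$ for the leaf count and the weighted recurrence $a_b=a_{b-1}+2a_{b-2}$ (solution $2^b$) for the second claim. Your explicit tracking of $\alpha_v$ through Definition~\ref{def:range} just spells out what the paper compresses into the remark that the factor of $2$ accounts for the subtree rooted at the child of the root's left child.
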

\begin{proof}
We have that $|L_0|=1=F_2$, $|L_1|=2=F_3$, and, by
Observation~\ref{def:fibonacci}, for $b\geq 2$ that $|L_b|=|L_{b-1}|+|L_{b-2}|$.
This recurrence has solution $|L_b|=F_{b+2}$.

Next, consider the recurrence given by $L_0'=1$, $L_1'=2$, and
$L_b'=L_{b-1}'+2L_{b-2}'$ for $b\geq 2$; the factor of $2$ assigns twice the
weight to the subtree rooted at the child of the root's left child, thereby
ensuring that each leaf is accounted with weight $2^{\alpha_v}$. This
recurrence has solution $2^b$.
\end{proof}

\begin{lemma}\label{lem:buffers}
Denote by $s$ the size of a buffer. Then
\begin{equation*}
|\ppc(C,T_b)'|= |\ppc(C,T_b)|+\left(2^b+2^{b-1}-F_{b+3}\right)s\,.
\end{equation*}
\end{lemma}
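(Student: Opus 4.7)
The plan is to give a direct counting argument: the three modifications are disjoint, so $|\ppc(C,T_b)'|-|\ppc(C,T_b)|$ equals the sum of buffers added in Steps~1, 2, and~3, and I will evaluate each of these sums using the recursive structure of $T_b$ together with Lemma~\ref{lem:leaves}.

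I would first set up closed-form expressions for each step. For Step~3, exactly one buffer is added per right node, so the contribution depends only on the number of right nodes; an easy recursion (mirroring Definition~\ref{def:fibonacci}) shows that $T_b$ contains $F_{b+2}-1$ right nodes. For Step~2, each left node $\ell$ at depth $d_\ell$ contributes a count depending on $\bar B_\ell=2^{b-d_\ell}$ and $\alpha_\ell$ (whether the maximum $j+1$ lies inside the output range); summing over all left nodes gives an expression in terms of $P_b\coloneqq \sum_\ell 2^{b-d_\ell}$, which satisfies $P_b=2^{b-1}+P_{b-1}+P_{b-2}$ by the same tree decomposition. For Step~1, I would use Lemma~\ref{lem:structure}(iv) to identify exactly which non-root aggregation-tree nodes carry wires into subcircuits, and combine this with Lemma~\ref{lem:leaves} (in particular $\sum_{v\in L_b}2^{\alpha_v}=2^b$ and $|L_b|=F_{b+2}$) to express the Step~1 count in closed form.

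Rather than adding up these three closed forms directly (which is somewhat painful), I would then switch to induction on $b$ to finish. Setting $N_b$ to be the target buffer count, the base cases $b\in\{0,1,2\}$ are verified by inspection against $2^b+2^{b-1}-F_{b+3}$. For the inductive step, I use the decomposition $T_b=\text{root}\cup T_{b-1}\cup \text{shifted }T_{b-2}$, where the shifted copy of $T_{b-2}$ hangs under the depth-$1$ left node and has every $\alpha$ value raised by one. The root contributes one Step~3 buffer; the depth-$1$ left node ($\bar B=2^{b-1}$, $\alpha=0$) contributes $2^{b-2}-1$ Step~2 buffers; the right subtree contributes $N_{b-1}$ buffers by induction; and the shifted $T_{b-2}$ contributes $N_{b-2}$ plus a correction accounting for the $\alpha$-shift. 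Tracking how each of Steps~1--3 reacts to $\alpha\mapsto\alpha+1$ (using Lemma~\ref{lem:structure} to locate the additional aggregation-tree nodes and left-node outputs that now get buffered), this correction turns out to be $2^{b-3}$. Adding everything yields
\[
N_b=N_{b-1}+N_{b-2}+2^{b-2}+2^{b-3}=N_{b-1}+N_{b-2}+3\cdot 2^{b-3},
\]
and a routine check using $F_{b+3}=F_{b+2}+F_{b+1}$ shows that $2^b+2^{b-1}-F_{b+3}$ satisfies exactly this recurrence, closing the induction.

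The main obstacle is the shift-correction term: one must verify that replacing $T_{b-2}$ by its embedded, $\alpha$-shifted copy changes the Step~1 + Step~2 + Step~3 totals by precisely $2^{b-3}$. This requires a careful accounting of two effects—the additional non-root aggregation-tree gates produced by the larger $\alpha$ (Step~1) and the change in the ``excluded maximum output'' condition for left nodes that switch from $\alpha=0$ to $\alpha\ge 1$ (Step~2)—and showing these combine with the unchanged Step~3 count to give exactly $2^{b-3}$. Everything else is bookkeeping once this is established.
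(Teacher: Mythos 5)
Your plan defers exactly the step that carries all the content of the lemma, so as written it is not a proof. The algebra is the easy part: it is immediate that $2^b+2^{b-1}-F_{b+3}$ satisfies $N_b=N_{b-1}+N_{b-2}+3\cdot 2^{b-3}$, and the $1+(2^{b-2}-1)=2^{b-2}$ buffers contributed by the root and its left child are read off the construction. What you must actually prove is that the circuit's buffer count obeys that recurrence, i.e., that the $\alpha$-shift correction is exactly $2^{b-3}$\dash---and you explicitly label this ``the main obstacle'' and leave it open. Worse, the value does not drop out of your own setup: Steps~2 and~3 are invariant under $\alpha_v\mapsto\alpha_v+1$ (Step~3 inserts one buffer per right node; Step~2 inserts $2^{b-d-1}-1$ buffers at a left node in depth $d$, a number determined by its $2^{b-d}$ outputs, not by its range), so the whole correction is the change in Step-1 buffers. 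Raising each $\alpha_v$ by one adds one level to each aggregation tree, and by Lemma~\ref{lem:structure}(iv) half of the nodes in each level feed a subcircuit, so the per-leaf change is $2^{\alpha_v}$ (or $2^{\alpha_v-1}$ under a stricter reading of which aggregation-tree wires receive buffers); summing over $L_{b-2}$ with Lemma~\ref{lem:leaves} gives $2^{b-2}$ (resp.\ $2^{b-3}$). You never fix which reading applies, and your base cases do not settle it either: for $b=1$ the target formula evaluates to $0$ while Step~3 inserts one buffer at the root, so the claimed ``verification by inspection'' of the base cases does not go through.

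The paper takes a route that avoids the shift correction altogether. It runs the Fibonacci-type recurrence only over the Step-2/Step-3 buffers inside the subcircuits of $T_b$-nodes, $c(b)=2^{b-2}+c(b-1)+c(b-2)$ with $c(0)=0$, $c(1)=1$, giving $c(b)=2^b-F_{b+1}$, and then counts the Step-1 (aggregation-tree) buffers in a single non-recursive stroke via Lemma~\ref{lem:leaves}, using $\sum_{v\in L_b}2^{\alpha_v}=2^b$ and $|L_b|=F_{b+2}$. If you insist on one global recurrence, you must carry the Step-1 total as a separate state variable whose recursion explicitly records the $+1$ shift of $\alpha$ in the left branch\dash---which is precisely the computation your proposal postpones.
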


\begin{proof}
  To count the number $c(b)$ of buffers in subcircuits corresponding to nodes of
  $T_b$, we observe that $c(0)=0$, $c(1)=1$, and for $b\geq 2$ it holds that
  $c(b)=2^{b-2}+c(b-1)+c(b-2)$: $1$ buffer at the root (see top of
  Figure~\ref{subfig:recpatterns}), $2^{b-2}-1$ buffers at its left child (see
  bottom of Figure~\ref{subfig:recpatterns}), $c(b-1)$ buffers for the subtree rooted at
  its right child, and $c(b-2)$ buffers for the subtree rooted at the child of its left
  child. This recurrence relation has the solution $c(b)=2^b-F_{b+1}$.

  To count the number of buffers attached to the aggregation trees, recall that
  from depth $d\neq 0$ of each tree, exactly half of the nodes' output is required
  by a buffer connected to some node in $T_b$ (this follows from
  Lemma~\ref{lem:structure} and the fact that ranges of nodes in the same depth of
  $T_b$ form a partition of $[1,2^b]$). Thus, this number equals $\sum_{v\in
  L_b}2^{\alpha_v-1}-1$. By Lemma~\ref{lem:leaves}, the total number of buffers thus
  equals
  \begin{equation*}
  2^b-F_{b+1}+2^{b-1}-F_{b+2}=2^b+2^{b-1}-F_{b+3}\,.\qedhere
  \end{equation*}
\end{proof}
Similar arguments serve later as well.
The main reason why we will define
the function $a(v)$ in the next section without rounding is to ensure that we
again obtain linear recurrences, which can be solved using standard techniques
from linear algebra. As a downside, this results in slightly overestimating the
size of circuits, as we may ask for more copies of gates from children than are
actually needed.

\subsubsection{Step~2: Bounding Fan-out by $f$}

\begin{figure*}
\begin{center}
\input{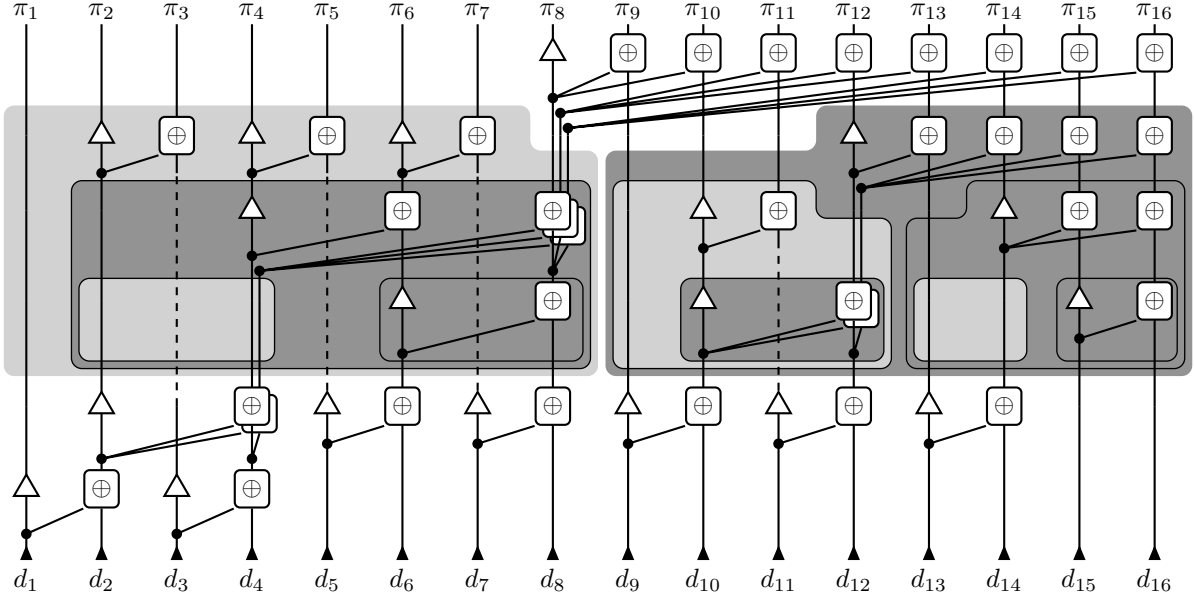}
\end{center}
\caption{$\ppc^{(3)}(C,T_4)$. Right recursion steps $R_r$ are
marked with dark gray, left recursion steps with light gray. The
step at the root (above) and aggregation trees (below) are not marked
explicitly. Duplicated gates are depicted in a layered fashion.
Dashed lines indicate that a wire is not participating in a recursive step.}
\label{fig:treeex}
\end{figure*}

In the second step, we need to resolve the issue of high fan-out of the last
output of each recursively used subcircuit in $\ppc(C,T_b)'$. Our approach is
straightforward. Starting at the root of $T_b$ and progressing downwards, we
label each node $v$ with a value $a(v)$ that specifies a sufficient number of
additional copies of the last output of the subcircuit represented by $v$ to
avoid fan-out larger than $f$. At right nodes, this is achieved by duplicating
the gate computing this output sufficiently often, marked blue in
Figure~\ref{subfig:recpatterns} (top). For left nodes, we simply require the
same number of duplicates to be provided by the subcircuit represented by their
child (i.e., we duplicate the blue wire in the bottom recursive pattern shown in
Figure~\ref{subfig:recpatterns}). Finally, for leaves, we will require a
sufficient number of duplicates of the root of their aggregation tree; this, in
turn, may require to make duplicates of their descendants in the aggregation
tree.

We define $a(v)$ and then utilize it to describe our fan-out $f$ circuit.
Afterwards, we will analyze the increase in size of the circuit compared to
$\ppc(C,T_b)'$.

\begin{definition}[$a(v)$]
Fix $b\in \NN_0$. For $v\in T_b$ in depth $d\in [b+1]$, define
\begin{equation*}
a(v)\coloneqq \begin{cases}
0 & \mbox{if }v\mbox{ is the root}\\
\frac{a(p)+2^{b-d}}{f} & \mbox{if }v\mbox{ is the left child of }p\\
\frac{a(p)}{f} & \mbox{if }v\mbox{ is the right child of right node }p\\
a(p) & \mbox{if }v\mbox{ is the (only) child of left node }p.\\
\end{cases}
\end{equation*}
\end{definition}
\begin{lemma}\label{lem:fanout_Tb}
Suppose that for each leaf $v\in T_b$, there are $\lfloor a(v)\rfloor$
additional copies of the root of the aggregation tree, and for each right node
$v\in T_b$, we add $\lfloor a(v)\rfloor$ gates that compute (copies of) the last
output of their corresponding subcircuit of $\ppc(C,T_b)'$. Then we can wire the
circuit such that all gates that are not in aggregation trees have fan-out at
most $f$, and each output of the circuit is driven by a gate or buffer driving
only this output.
\end{lemma}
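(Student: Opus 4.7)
The plan is to induct on the depth $d\in[b+1]$ of $v\in T_b$, top-down, tracking for each $v$ a quantity $D(v)$ equal to the number of distinct consumers of the signal $\pi^v_{|v|}$ in the circuit produced by carrying out the duplication construction at all strict ancestors of $v$. Inspecting the recursive patterns of Figure~\ref{subfig:recpatterns} together with Property~(iii) of Lemma~\ref{lem:structure}, $D$ obeys the recurrence $D(\text{root})=1$; $D(v)=2^{b-d}+\lfloor a(p)\rfloor+1$ if $v$ is the left child of a right node $p$ (the signal feeds the $2^{b-d}-1$ ``middle'' combining gates of $p$'s $R_r$ pattern, all $1+\lfloor a(p)\rfloor$ copies of $p$'s last combining gate, and the single consumer of $\pi^p_{2^{b-d}}$ in $p$'s parent); $D(v)=1+\lfloor a(p)\rfloor$ if $v$ is the right child of a right node $p$; and $D(v)=D(p)$ if $v$ is the only child of a left node $p$, since then $\pi^v_{|v|}$ is wired directly to $\pi^p_{|p|}$.

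The technical core is to verify $f\cdot(1+\lfloor a(v)\rfloor)\geq D(v)$ at each $v$. For both non-trivial cases, this reduces, by the definition of $a(v)$, to the elementary inequality $f(1+\lfloor x/f\rfloor)\geq \lfloor x\rfloor+1$ for every real $x\geq 0$ and integer $f\geq 1$: writing $\lfloor x\rfloor=fq+r$ with $0\leq r<f$, the left side is at least $f(1+q)=fq+f\geq \lfloor x\rfloor+1$ because $f\geq r+1$. I would instantiate this with $x=a(p)+2^{b-d}$ in the left-child case (using that $2^{b-d}$ is integer, so $\lfloor x\rfloor=\lfloor a(p)\rfloor+2^{b-d}$) and with $x=a(p)$ in the right-child-of-right case; the left-node case is immediate from $a(v)=a(p)$ and $D(v)=D(p)$.

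With the capacity secured, the wiring is straightforward: at each right node $v$, partition the $D(v)$ demand points among the $1+\lfloor a(v)\rfloor$ copies of $v$'s last combining gate, allotting at most $f$ destinations per copy; at each leaf $v$, apply the same partitioning to the $1+\lfloor a(v)\rfloor$ copies of the aggregation tree root; at each left node, no fresh wiring is required since it inherits both $a$ and $D$ from its (right) child. All remaining gates outside the aggregation trees have fan-out at most $2$ by Lemma~\ref{lem:fanout_almost}, so the overall fan-out of $f\geq 3$ is respected. For the dedicated-driver property on circuit outputs, $\pi_{2^b}$ is driven by the unique copy of the root's last gate with $D(\text{root})=1$; every other $\pi_i$ is driven in $\ppc(C,T_b)'$ either by a step~1--3 buffer whose only consumer is $\pi_i$, by an operator gate with the same single-consumer property, or (for the remaining few cases) by inserting at most one extra buffer per output wire, the single additional consumer being absorbed by the slack of at least $1$ furnished by the ``$+1$'' on the right-hand side of the floor inequality.

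The main obstacle will be the tight bookkeeping in the left-child-of-right case: the ``$+1$'' in $D(v)$ for the pass-through role of $\pi^v_{|v|}$ as $\pi^p_{2^{b-d}}$ is exactly the ``$+1$'' absorbed by the floor inequality, so an off-by-one in counting consumers---in particular, forgetting that the duplicated last combining gate of $p$ contributes $1+\lfloor a(p)\rfloor$ (not $\lfloor a(p)\rfloor$) separate demand points on $v$'s last output, or double-counting the pass-through wire---would invalidate the argument. Pinning down this combinatorial accounting across every recursive pattern is the delicate point of the proof.
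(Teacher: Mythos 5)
Your proposal is correct and follows essentially the same route as the paper's proof: an induction over the depth of $T_b$ with a case analysis on node type, verifying $f(1+\lfloor a(v)\rfloor)$ against the consumer count of each node's last output via the definition of $a(v)$ and a floor inequality (your consumer counts $D(v)$ match the paper's exactly, and your fallback of inserting extra output buffers is never needed since Lemma~\ref{lem:fanout_almost} already guarantees dedicated drivers for all circuit outputs).
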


\begin{proof}
We prove the claim by induction on the depth of nodes, starting at the leaves.
The base case holds by assumption, as each leaf is provided with sufficiently
many copies of its (single) input. For the induction step from depth $d+1$ to
$d\in [b]$, fix some $v\in T_b$ in depth $d$. By Lemma~\ref{lem:fanout_almost},
we only need to consider the last output of the subcircuit corresponding to the
node; there are in total $1+\lfloor a(v)\rfloor$ gates providing it. We
distinguish four cases.
\begin{itemize}
  \item $v$ is a left node. Thus, its child $c$ is a right node with
  $1+\lfloor a(c)\rfloor = 1+\lfloor a(v)\rfloor$ gates providing its last
  output. As the parent $p$ of $v$ is a right node, these need to drive
  $1+2^{b-d}+\lfloor a(p)\rfloor$ gates (cf.~Figure~\ref{subfig:recpatterns}).
  We have that
  \begin{align*}
  f(1+\lfloor a(v)\rfloor)&=f\left(1+\left\lfloor
  \frac{a(p)+2^{b-d}}{f}\right\rfloor\right)\\
  &\geq f+a(p)+2^{b-d}-(f-1)\\
  &\geq 1+2^{b-d}+\lfloor a(p)\rfloor\,.
  \end{align*}
  \item $v$ is a right node with left parent $p$. This was already
  covered in the previous case from the viewpoint of~$p$.
  \item $v$ is a right node with right parent $p$. As $p$ is also a right
  node, we need to drive $1+\lfloor a(p)\rfloor$ gates
  (cf.~Figure~\ref{subfig:recpatterns}). We have that
  \begin{align*}
  f(1+\lfloor a(v)\rfloor)&=f\left(1+\left\lfloor
  \frac{a(p)}{f}\right\rfloor\right)\\
  &\geq f+a(p)-(f-1) \geq 1+\lfloor a(p)\rfloor\,.
  \end{align*}
  \item $v$ is the root. Thus, it provides the outputs to the circuit, and the
  claim is immediate from Lemma~\ref{lem:fanout_almost}.\qedhere
\end{itemize}
\end{proof}

It remains to modify the aggregation trees so that sufficiently many copies of
the roots' output values are available.
\begin{lemma}\label{lem:aggregation}
Consider an aggregation tree corresponding to leaf $v\in T_b$ and fix $f\geq 3$.
We can modify it such that the fan-out of all its non-root nodes becomes at most
$f$, there are $\lfloor a(v)\rfloor$ additional gates computing the
same output as the root, and at most $(f a(v))/(f-2)+(2^{\alpha_v-1})/(f-1)$
gates are added.
\end{lemma}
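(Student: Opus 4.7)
The plan is to duplicate gates top-down through the aggregation tree of $v$, level by level. Place the root at level $0$ and the input leaves at level $\alpha_v$, and for $i\in[0,\alpha_v]$ let $k_i$ denote the number of copies maintained of each original gate at level $i$. I set $k_0:=1+\lfloor a(v)\rfloor$ in order to produce the $\lfloor a(v)\rfloor$ additional gates computing the root's output that the lemma demands.

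For $i\geq 1$, an original gate at level $i$ has to feed the $k_{i-1}$ copies of its (single) parent together with at most one buffer towards a subcircuit in $T_b$, a total fan-out demand of at most $k_{i-1}+1$. Since each of its $k_i$ copies has capacity $f$, it suffices to take $k_i:=\lceil(k_{i-1}+1)/f\rceil$: the consumers are then distributed evenly among the copies. By construction every non-root aggregation gate (and every attached buffer) thus has fan-out at most $f$, because $k_i f\geq k_{i-1}+1$. At the input leaves no new gates are introduced, since circuit inputs are not subject to fan-out restrictions and can simply drive all $k_{\alpha_v-1}$ copies of their parents directly.

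To count the gates introduced, note that level $i$ contains $2^i$ original gates, so the total is
\[
\sum_{i=0}^{\alpha_v-1} 2^i(k_i-1).
\]
A straightforward induction using $\lceil (k_{i-1}+1)/f\rceil\leq k_{i-1}/f+1$ yields $k_i\leq k_0/f^i+f/(f-1)$. Splitting the sum accordingly, the $k_0/f^i$ contribution is a geometric series $k_0\sum(2/f)^i\leq fk_0/(f-2)$, matching the $fa(v)/(f-2)$ term of the claimed bound after absorbing the additive $O(1)$ from $k_0=1+\lfloor a(v)\rfloor$. The additive slack contributes a multiple of $\sum 2^i$ bounded on the order of $2^{\alpha_v}/(f-1)$.

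The main obstacle is sharpening this second term from the loose $O(2^{\alpha_v}/(f-1))$ to exactly $2^{\alpha_v-1}/(f-1)$. The required improvement rests on the structural fact (implicit in Lemma~\ref{lem:structure}(iv) and made explicit in the discussion preceding Lemma~\ref{lem:buffers}) that at each inner node of the aggregation tree only one of its two children drives an external buffer to a subcircuit. Hence, across each sibling pair the ``$+1$'' in the fan-out demand $k_{i-1}+1$ applies on only one side, effectively halving the cumulative contribution of the additive slack. Combined with the observation that the recursion terminates as soon as $k_i=1$ (so the deeper levels, closer to the input leaves, contribute nothing further), a careful level-by-level accounting of chosen versus unchosen children brings the factor in front of $2^{\alpha_v}$ down by two, yielding precisely $2^{\alpha_v-1}/(f-1)$ as claimed.
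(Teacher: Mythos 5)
Your overall strategy (top-down duplication governed by a per-level recurrence, summed as a geometric series) is the same as the paper's, but there are two genuine gaps. The first is quantitative: the bound you actually derive is not the one claimed. With uniform per-gate counts $k_i=\lceil(k_{i-1}+1)/f\rceil$ and $k_0=1+\lfloor a(v)\rfloor$, your computation yields roughly $f\bigl(1+\lfloor a(v)\rfloor\bigr)/(f-2)+2^{\alpha_v}/(f-1)$, i.e., twice the claimed second term plus an additive $f/(f-2)$ in the first, and you defer the factor-of-two repair to an unexecuted ``careful accounting.'' That repair is not available inside your framework: if every gate at level $i$ carries the same count $k_i$, you are forced to use the worst-case demand $k_{i-1}+1$ for \emph{both} children of every node, so the structural fact that only one sibling drives a buffer buys you nothing. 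The paper avoids this by tracking the \emph{total}, unrounded number of additional copies $a_d$ per depth (with $a_0=a(v)$ exactly, not $1+\lfloor a(v)\rfloor$), assigning the buffer-driving child $\lfloor(a(p)+1)/f\rfloor$ and its sibling $\lfloor a(p)/f\rfloor$, which after summing over the $2^d$ parents gives $a_{d+1}=(2a_d+2^d)/f$; solving this recurrence in closed form and summing produces precisely $fa(v)/(f-2)+2^{\alpha_v-1}/(f-1)$.

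The second gap is your treatment of the bottom level. The lemma asserts that \emph{all} non-root nodes of the aggregation tree end up with fan-out at most $f$, and the leaves of that tree are circuit inputs; they are not exempt from the fan-out bound (the model only assumes input drivers have no \emph{external} load), and they cannot be duplicated without increasing depth. You therefore must show that each input never has to drive more than its budget, i.e., that the copy counts required near the leaves can be supported starting from single-copy inputs. The paper establishes this by proving $a(v)<2^{\alpha_v}$ for every leaf $v\in T_b$ via a separate recursion $A(b,x)$ over the structure of $T_b$, so that with one fan-out slot per gate reserved for its buffer the available copies at least double per level and reach $1+\lfloor a(v)\rfloor\leq 2^{\alpha_v}$ at the root. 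This component is entirely missing from your argument, and without some bound on $a(v)$ your construction can violate the fan-out constraint at the inputs.
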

\begin{proof}
  We recursively assign a value $a_d$ to each depth $d\in [\alpha_v+1]$ of the
  aggregation tree that bounds the number of additional gates in this
  depth from above. Accordingly, $a_0\coloneqq a(v)$. To determine a suitable
  recurrence, recall that for each node in the tree, one child needs to also drive
  a buffer, while the other does not (Lemma~\ref{lem:structure}). Fix one child
  $c$ in depth $d$ and its parent $p$, and denote by $a(c)$ and $a(p)$ the number
  of additional gates required. If $c$ also needs to drive a buffer, it is
  sufficient that $a(c)\geq \lfloor (a(p)+1)/f\rfloor$, as
  $f(1+\lfloor (a(p)+1)/f\rfloor)\geq 2+a(p)$;
  similarly, $a(c)\geq \lfloor a(p)/f\rfloor$ is sufficient, if the child does not
  need to drive an additional buffer. As aggregation trees are complete balanced
  binary trees, summing over all nodes in depth $d$ we thus get that
  $a_{d+1}=(2a_d+2^d)/f$ for all $d\in [\alpha_v]$ is sufficient. This recurrence
  has solution
  \begin{equation*}
  a_d = \left(\frac{2}{f}\right)^d a(v)
  +\frac{2^{d-1}}{f-1}\left(1-\frac{1}{f^d}\right)\,.
  \end{equation*}
  The total number of additional gates up to depth $\alpha_v-1$ is thus
  bounded by
  \begin{align*}
  \sum_{d=0}^{\alpha_v-1}a_d &=\sum_{d=0}^{\alpha_v-1}\left(\frac{2}{f}\right)^d a(v)
  +\frac{2^{d-1}}{f-1}\left(1-\frac{1}{f^d}\right)\\
  &<\frac{f a(v)}{f-2}+\frac{2^{\alpha_v-1}}{f-1}\,,
  \end{align*}
  matching the claimed bound.

  It remains to show that no gates need to be provided at the leaves of
  the aggregation tree; beside showing the claimed bound on the increase in size
  of the circuit, this is also necessary, because we cannot make copies of inputs
  without increasing the depth of the circuit. As $f\geq 3$ and nodes in
  aggregation trees have fan-out $2$ in $\ppc(C,T_b)'$, we can at least double the
  number of copies of each node with each level of the aggregation tree. As the
  tree at leaf $v$ has depth $\alpha_v$, it is hence sufficient to show that
  $a(v)\leq 2^{\alpha_v}$.

  To bound $a(v)$ from above, we again exploit the recursive structure of the
  construction. Denote by $A(b,x)$ an upper bound on $a(v)/2^{\alpha_v}$ for all
  leaves $v\in T_b$ when defining the values $a(v)$ as usual, except that we set
  $a(r)=x$ for some $0\leq x\leq 2^{b-1}$ at the root $r\in T_b$. For $b\geq 2$,
  we get that
  \begin{align*}
  A(b,x)&\leq \max\left\{
  A\left(b-1,\frac{x}{f}\right),A\left(b-2,\frac{2^{b-1}+x}{2f}\right)
  \right\}\\
  &< \max\left\{
  A\left(b-1,2^{b-2}\right),A\left(b-2,2^{b-2}\right)\right\}.
  \end{align*}
  Moreover, we have that $A(0,x)=x\leq 1/2$ and $A(1,x)=(x+1)/f\leq 2/3$ for
  feasible values of $x$, where we use that $f\geq 3$. Hence, $A(b,x)\leq 2/3$
  for all $b$ and feasible values of $x$. In particular, $A(b,0)\leq 2/3$,
  implying that indeed $a(v)<2^{\alpha_v}$ for all leaves $v\in T_b$.
\end{proof}

Finally, we need to count the total number of gates we add when implementing
these modifications to the circuit.
\begin{lemma}\label{lem:Delta}
For $f\geq 3$, define $\ppc^{(f)}(C,T_b)$ by modifying $\ppc(C,T_b)'$ according
to Lemmas~\ref{lem:fanout_Tb} and~\ref{lem:aggregation}. Then, with
$\lambda_1\coloneqq (1+\sqrt{5})/4$, $|\ppc^{(f)}(C,T_b)|$ is bounded by
\begin{equation*}
|\ppc(C,T_b)'|+2^b\left(\frac{1}{2f-2}+\frac{2}{f-2}+\BO\left(\frac{\lambda_1^b}{f^2}\right)\right)|C|\,.
\end{equation*}
\end{lemma}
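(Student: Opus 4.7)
The plan is to directly count the gates added in going from $\ppc(C,T_b)'$ to $\ppc^{(f)}(C,T_b)$, using the per-node bounds of Lemmas~\ref{lem:fanout_Tb} and~\ref{lem:aggregation}. Dropping floors for an upper bound, the increment in size is at most
\begin{equation*}
\Biggl(\sum_{v\text{ right node of }T_b}\!\!a(v)+\sum_{v\in L_b}\Bigl(\tfrac{f\,a(v)}{f-2}+\tfrac{2^{\alpha_v-1}}{f-1}\Bigr)\Biggr)|C|\,.
\end{equation*}
By Lemma~\ref{lem:leaves}, $\sum_{v\in L_b}2^{\alpha_v-1}/(f-1)=2^{b-1}/(f-1)=2^b/(2f-2)$, which supplies the first summand of the claim.

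For the $a(v)$-weighted sums I would exploit that, by Definition~\ref{def:range}, $a(v)$ is linear in the value assigned to the root of $T_b$. Writing $\alpha_b x+\beta_b$ and $\gamma_b x+\delta_b$ for the sums over right nodes and over leaves of $T_b$ with root value $x$ instead of $0$ (so the quantities we actually need are $\beta_b$ and $\delta_b$), the Fibonacci-tree decomposition of Definition~\ref{def:fibonacci} (the right subtree of the root is $T_{b-1}$ with root value $x/f$, and the $T_{b-2}$ hanging below the left child has root value $(x+2^{b-1})/f$) yields, by matching coefficients,
\begin{align*}
\alpha_b &= 1+\tfrac{\alpha_{b-1}+\alpha_{b-2}}{f}, &
\beta_b  &= \beta_{b-1}+\beta_{b-2}+\alpha_{b-2}\,\tfrac{2^{b-1}}{f},\\
\gamma_b &= \tfrac{\gamma_{b-1}+\gamma_{b-2}}{f}, &
\delta_b &= \delta_{b-1}+\delta_{b-2}+\gamma_{b-2}\,\tfrac{2^{b-1}}{f},
\end{align*}
with base values $\alpha_0=\beta_0=\beta_1=\delta_0=\delta_1=0$, $\alpha_1=\gamma_0=1$, and $\gamma_1=2/f$.

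Solving these inhomogeneous Fibonacci recurrences and tracking the $f$-dependence of the coefficients is the main technical obstacle. The $\alpha$-recurrence has fixed point $f/(f-2)$ with transient decaying at rate $\mu_1=(1+\sqrt{1+4f})/(2f)=\BO(1/\sqrt f)$. Plugging this fixed point into the $\beta$-forcing and using the ansatz $\beta_b^{\ast}=A\cdot 2^b$ gives $A=2/(f-2)$, accounting for the second summand of the claim; the deviation $\beta_b-\beta_b^{\ast}$ solves a Fibonacci recurrence whose characteristic roots $\phi=(1+\sqrt 5)/2$ and $1-\phi$ produce a dominant term of order $\phi^b=(2\lambda_1)^b$. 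For $\delta_b$ the forcing is already geometrically damped, since $\gamma_b=\BO(\mu_1^b)$, so $\delta_b=\BO(\phi^b)$ with a coefficient whose size in $f$ is inherited from $\gamma$'s decay. After multiplying by $f/(f-2)$ and dividing by $2^b$, both Fibonacci contributions take the form $\mathrm{coeff}(f)\cdot\lambda_1^b$. The delicate bookkeeping step is to verify that the resulting coefficient satisfies the $\BO(1/f^2)$ bound; this is done by solving the two-dimensional linear system for the Fibonacci coefficients $A_1,A_1'$ arising from the initial conditions in terms of $\mu_1,\mu_2,\phi$, and then expanding in powers of $1/f$. Adding the three contributions yields the claimed bound.
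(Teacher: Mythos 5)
Your high-level accounting coincides with the paper's: both bound the increment by $\sum_{v\,\mathrm{right}}a(v)+\sum_{v\in L_b}\bigl(fa(v)/(f-2)+2^{\alpha_v-1}/(f-1)\bigr)$ and dispatch the last sum via Lemma~\ref{lem:leaves}. You diverge in how the two $a(v)$-weighted sums are evaluated: the paper indexes by depth and diagonalizes a $4\times 4$ transfer matrix (its eigenvalues $\lambda_1,\lambda_2$ and $(1\pm\sqrt{1+4f})/(2f)$ are exactly your Fibonacci roots and your $\mu_{1,2}$), whereas you exploit the self-similar decomposition of $T_b$ into $T_{b-1}$ and $T_{b-2}$ together with the affine dependence of $a(\cdot)$ on the value assigned to the root. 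That reduction to scalar second-order recurrences is legitimate and arguably cleaner than the matrix route; the identification of the fixed point $f/(f-2)$ and of the particular solution $\beta_b^{\ast}=2^{b+1}/(f-2)$ matches the paper's $\sum_{d=0}^{\infty}x_d=2^{b+1}/(f-2)$.

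There are, however, two genuine problems. First, a base-case error: the left leaf of $T_1$ with root value $x$ receives $a=(x+2^{b-d})/f=(x+1)/f$, so $S^L_1(x)=2x/f+1/f$ and $\delta_1=1/f$, not $0$. (Check at $b=2$ with root value $0$: the three leaves carry $2/f$, $1/f$ and $0$, summing to $3/f$, while your recurrence with $\delta_1=0$ returns $2/f$.) The dropped term propagates through the homogeneous Fibonacci recurrence as $F_b/f$, i.e., at order $\phi^b/f=2^b\lambda_1^b/f$ --- exactly the order whose coefficient the lemma is about. Second, the entire content of the lemma beyond the two explicit terms is the claim that the $\lambda_1^b$-coefficient is $\BO(1/f^2)$, and you defer precisely this step (``solving the two-dimensional linear system \ldots and expanding in powers of $1/f$'') without carrying it out. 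Be aware that it is not routine bookkeeping: with the corrected base cases one finds $\delta_b=c_b/f+\BO(\phi^b/f^2)$ with $c_2,c_3,c_4,\ldots=3,4,7,\ldots$ a Lucas-type sequence, so the leaf contribution $\tfrac{f}{f-2}\,\delta_b$ is of order $\phi^b/f=2^b\lambda_1^b/f$, and extracting the additional factor $1/f$ claimed in the statement (the paper does so at the step $\tfrac{f(x_b+y_b)}{f-2}\in\BO\bigl(\tfrac{x_b+x_{b+1}}{f}\bigr)$, which is itself the delicate point of its proof) requires an argument you have not supplied. As it stands, your plan establishes the $\tfrac{1}{2f-2}$ and $\tfrac{2}{f-2}$ terms but not the error term.
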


\begin{proof}
Denote by $R_b\subset T_b$ the set of right nodes in $T_b$. By
Lemma~\ref{lem:fanout_Tb}, we add at most $\sum_{v\in R_b} a(v)$ gates
to the part of the circuit corresponding to $T_b$. Recall that $L_b\subseteq
T_b$ is the set of leaves of $T_b$. By Lemmas~\ref{lem:leaves}
and~\ref{lem:aggregation}, we add at most
\begin{equation*}
\sum_{v\in L_b}
\frac{fa(v)}{f-2}+\frac{2^{\alpha_v-1}}{f-1}=
\frac{2^{b-1}}{f-1}+\sum_{v\in L_b}\frac{fa(v)}{f-2}
\end{equation*}
gates to the aggregation trees.

To bound these numbers, denote for
$d\in [b]$ by $x_d$ and $y_d$ the sum over right and left nodes' $a(v)$ values
in depth $d$, respectively. Moreover, let $l_d$ and $r_d$ denote $2^{b-d}$
times the number of left and right nodes in depth $d$, respectively. Thus, we
seek to bound $f(x_b+y_b)/(f-2)+\sum_{d=0}^{b-1}x_d$. We have that $x_0=y_0=0$,
$l_1=r_1=2^{b-1}$, and
\begin{align*}
\begin{pmatrix}
x_{d}\\
y_{d}\\
l_{d+1}\\
r_{d+1}
\end{pmatrix}
&=
\begin{pmatrix}
f^{-1}& 1 & 0 & 0 & \\
f^{-1}& 0 & f^{-1} & 0\\
0 & 0 & 0 & 1/2\\
0 & 0 & 1/2 & 1/2
\end{pmatrix}^d
\begin{pmatrix}
x_0\\
y_0\\
l_1\\
r_1
\end{pmatrix}
\end{align*}
for all $d\in [b-1]$. The recurrence matrix has the eigenvalues
\begin{align*}
\lambda_1 = \frac{1}{4}(1+\sqrt{5})\,,\quad
\lambda_3=\frac{1}{2f}(1+\sqrt{1+4f})\,,\\
\lambda_2 = \frac{1}{4}(1-\sqrt{5})\,,\quad
\lambda_4=\frac{1}{2f}(1-\sqrt{1+4f})\,.
\end{align*}
The recurrence has solution $r_d=2^{b-d}F_{d+1}$, $l_d=2^{b-d}F_d$,
\begin{align*}
x_d =\,&
\frac{2^{b+1}}{f^2-10 f+20}\cdot
\bigg(
\frac{f^2-3 f-2}{\sqrt{1+4f}}\left(-\lambda_3^d+\lambda_4^d\right)-\\
&(f-2)\left(\lambda_1^d+\lambda_2^d-\lambda_3^d-\lambda_4^d\right)+\\
&\frac{3 f-10}{\sqrt{5}}\left(\lambda_1^d-\lambda_2^d\right)
\bigg)
\in\BO\left(\frac{2^b\lambda_1^d}{f}\right),
\end{align*}
and $y_d=x_{d+1}-x_d/f$, where the asymptotic bound on $x_d$ uses that for
$f\geq 3$ and all $i\in [1,4]$, $\lambda_1\geq |\lambda_i|$. Therefore,
\begin{align*}
\frac{f(x_b+y_b)}{f-2}+\sum_{d=0}^{b-1}x_d
&\in \BO\left(\frac{x_b+x_{b+1}}{f}\right)+\sum_{d=0}^{\infty}x_d\\
&=\BO\left(\frac{\lambda_1^d x_{d+1}}{f^2}\right)+\sum_{d=0}^{\infty}x_d\,,
\end{align*}
Observe that for $f\geq 3$, $0\neq |\lambda_i|<1$ for all $i$; thus,
$\sum_{d=0}^{\infty}\lambda_i^d=1/(1-\lambda_i)$, yielding with some calculation
that $\sum_{d=0}^{\infty}x_d=2^{b+1}/(f-2)$. Summation of the individual terms
and multiplication by $|C|$ proves claim of the theorem.
\end{proof}

As an example for the overall resulting construction, we show
$\ppc^{(3)}(C,T_4)$ in Figure~\ref{fig:treeex}. We summarize our findings in the
following theorem.
\begin{theorem}\label{thm:ppc_fanout}
Suppose that $C$ implements $\OP$, buffers have size $s$ and depth at most
$d(C)$, and set $\lambda_1\coloneqq (1+\sqrt{5})/4$. Then for all $k\in [b+1]$,
$b\in \NN_0$, and $f\geq 3$, there is a $\ppc_{\OP}(2^b)$ circuit of fan-out
$f$, depth $(b+k)d(C)$, and size at most
\begin{align*}
&\left(2^{b+1}+2^{b-k}\left(2+\frac{5f-6}{2f^2-6f+4}
+\BO\left(\frac{\lambda_1^b}{f^2}\right)\right)\right)|C|\\
&+\left(2^b+2^{b-k-1}\right)s\,.
\end{align*}
\end{theorem}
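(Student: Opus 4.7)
The plan is to follow the recipe of Theorem~\ref{thm:ppc}, but substitute the bounded fan-out PPC circuit from Lemma~\ref{lem:Delta} at the innermost recursive step in place of the balanced construction of Corollary~\ref{cor:rdepth}. Concretely, apply the left-recursion pattern of Lemma~\ref{lem:Rsize} iteratively $k$ times as an outer wrapper, halving the input count at each step, until the innermost subproblem has $2^{b-k}$ inputs, and there instantiate $\ppc^{(f)}(C,T_{b-k})$.

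Correctness follows directly by composition from Lemma~\ref{lem:Rsize} and from the correctness of $\ppc^{(f)}$ established in Lemmas~\ref{lem:rdepth}, \ref{lem:buffers}, and~\ref{lem:Delta}. For depth, Lemma~\ref{lem:Rsize} adds $2\,d(C)$ per wrapping layer and Lemma~\ref{lem:depth} gives depth $(b-k)\,d(C)$ for the inner circuit, summing to $(b+k)\,d(C)$; here one must verify that the buffers inserted by Lemma~\ref{lem:buffers} and the gate duplications of Lemma~\ref{lem:Delta} lie on non-critical wires, by inspecting which paths have slack in Figures~\ref{subfig:pattern1}, \ref{subfig:pattern2}, and~\ref{subfig:recpatterns}.

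Fan-out must be checked at two sites. Inside the wrapping pattern of Figure~\ref{subfig:pattern1}, each wire drives at most two gates---an input feeds its paired-OP gate together with a later output-OP gate, and each $R_c$ output feeds the corresponding even outer output together with the subsequent odd-output OP gate (the last $R_c$ output drives only the outer output)---so fan-out in the wrappers is at most $2\le f$. Inside the inner circuit fan-out is $\le f$ by Lemma~\ref{lem:Delta}, and by Lemma~\ref{lem:fanout_almost} its outputs are each driven by a gate with internal fan-out $\le 1$, so the extra fan-out of $2$ incurred by the surrounding wrapper is absorbed within the $f$ budget.

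The substantive work is the size accounting. By Lemma~\ref{lem:Rsize}, the $k$ outer wrappings add $\sum_{i=0}^{k-1}(2^{b-i}-1)|C| < 2^{b+1}|C|$ gates. By Lemmas~\ref{lem:size}, \ref{lem:buffers}, and~\ref{lem:Delta}, the inner circuit has size at most
\begin{equation*}
\left(2^{b-k+2}+1\right)|C| + \left(2^{b-k}+2^{b-k-1}\right)s + 2^{b-k}\!\left(\tfrac{1}{2f-2}+\tfrac{2}{f-2}+\BO\!\left(\tfrac{\lambda_1^b}{f^2}\right)\right)\!|C|
\end{equation*}
after absorbing Fibonacci subtraction terms into the $\BO$. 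Summing the two contributions and using the algebraic identity $\tfrac{1}{2f-2}+\tfrac{2}{f-2}=\tfrac{5f-6}{2f^2-6f+4}$ yields the claimed $|C|$-coefficient, while the remaining $\Theta(2^b)$ buffers accumulate from interface buffering required across the $k$ wrapper boundaries (summed as a geometric series) to reach the stated $(2^b+2^{b-k-1})s$ bound. The main obstacle I expect is this last step of buffer bookkeeping, requiring a careful case analysis of which wires need buffering at each wrapper-to-wrapper and wrapper-to-inner transition.
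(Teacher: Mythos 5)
Your overall architecture matches the paper's: wrap $\ppc^{(f)}(C,T_{b-k})$ in $k$ applications of the left pattern of Lemma~\ref{lem:Rsize}, and argue correctness and depth exactly as in Theorem~\ref{thm:ppc}. However, there is a genuine gap in your fan-out argument for the wrapper layers. You claim that inside Figure~\ref{subfig:pattern1} ``each $R_c$ output feeds the corresponding even outer output together with the subsequent odd-output OP gate,'' concluding fan-out $2$. This only inspects a \emph{single} layer. The even outer output is a pass-through wire, so when the wrappers are nested ($k\geq 2$) the gate inside $R_c$ that produces $\pi'_j$ ends up driving the odd-output gate at its own level \emph{and} everything that the pass-through wire $\pi_{2j}$ drives at the enclosing level, which again includes an odd-output gate and a further pass-through; the fan-out of that gate therefore grows to roughly $k+1$. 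This is precisely why the paper inserts a buffer before every output of each wrapper level that is not directly driven by a copy of $C$ (a total of $2^b-2^{b-k}$ buffers), restoring the invariant of Lemma~\ref{lem:fanout_almost} that every subcircuit output is driven by an element driving nothing else. You do gesture at ``interface buffering required across the $k$ wrapper boundaries'' in your buffer count, but this sits in tension with your earlier claim that the wrappers already have fan-out $2$, and you never say where the buffers go or re-prove the fan-out bound with them in place. The buffer count $\sum_{i=0}^{k-1}2^{b-1-i}=2^b-2^{b-k}$, added to the $\tfrac{3}{2}\cdot 2^{b-k}$ buffers from Lemma~\ref{lem:buffers}, is exactly how the $(2^b+2^{b-k-1})s$ term arises.

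There is also an arithmetic slip in the size accounting. Bounding the wrapper contribution by $2^{b+1}|C|$ and adding the inner circuit's $(2^{b-k+2}+1)|C|$ gives a coefficient of $2^{b+1}+4\cdot 2^{b-k}$, which overshoots the claimed $2^{b+1}+2\cdot 2^{b-k}$ by $2^{b-k+1}$. You must keep the exact wrapper sum $\sum_{i=0}^{k-1}(2^{b-i}-1)=2^{b+1}-2^{b-k+1}-k$ so that the $-2^{b-k+1}$ cancels half of the inner circuit's $2^{b-k+2}$; the paper sidesteps this by invoking the already-combined bound of Theorem~\ref{thm:ppc} and then adding only the difference $\Delta=|\ppc^{(f)}(C,T_{b-k})|-|\ppc(C,T_{b-k})|$ from Lemmas~\ref{lem:buffers} and~\ref{lem:Delta}. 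Both issues are repairable, but the first one is the substantive content of this theorem beyond Theorem~\ref{thm:ppc}, so as written the proposal does not yet constitute a proof.
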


\begin{proof}
We argue as for Theorem~\ref{thm:ppc}, but replace $\ppc(C,T_b)$ by
$\ppc^{(f)}(C,T_b)$, and need to make sure that we modify the first $k$ steps of
the recursion, where we apply the construction from
Figure~\ref{subfig:pattern1}, such that the fanout is at most $f$. In fact, we
will ensure a fanout of $2$ for this part of the construction. To this end, we
simply add a buffer before each output that is not directly driven by a copy of
$C$, as already indicated in the figure. This guarantees the invariant that all
inputs to and outputs of subcircuits are driven by an element not driving
anything else; for the $\ppc^{(f)}(C,T_b)$ subcircuit, this invariant holds by
Lemma~\ref{lem:fanout_almost}.

This adds in total $2^b-2^{b-k}$ buffers to the circuit: one for each output
wire minus one for each output wire of $\ppc^{(f)}(C,T_b)$. Thus, by
Theorem~\ref{thm:ppc}, the size of the circuit is bounded by
$\Delta+\left(2^b-2^{b-k}\right)s+
\left(2^{b+1}+2^{b-k+1}\right)|C|$,
where $\Delta\coloneqq |\ppc^{(f)}(C,T_b)|-|\ppc(C,T_b)|$. By
Lemmas~\ref{lem:buffers} and~\ref{lem:Delta}, $\Delta$ is bounded by
\begin{equation*}
2^{b-k}\left(\frac{1}{2f-2}+\frac{2}{f-2}
+\BO\left(\frac{\lambda_1^{b-k}}{f^2}\right)\right)|C|
+ \frac{3\cdot 2^{b-k}s}{2}\,.
\end{equation*}
Summation yields the claimed bound on the size of the circuit. The
depth bound and that we indeed get a $\ppc_{\OP}(2^b)$ circuit follow as in
Theorem~\ref{thm:ppc}, as the modifications to the construction affected neither
its depth nor its output.
\end{proof}

\begin{figure}
\centering
\includegraphics[width=\linewidth, clip, trim=4 29 9 35]{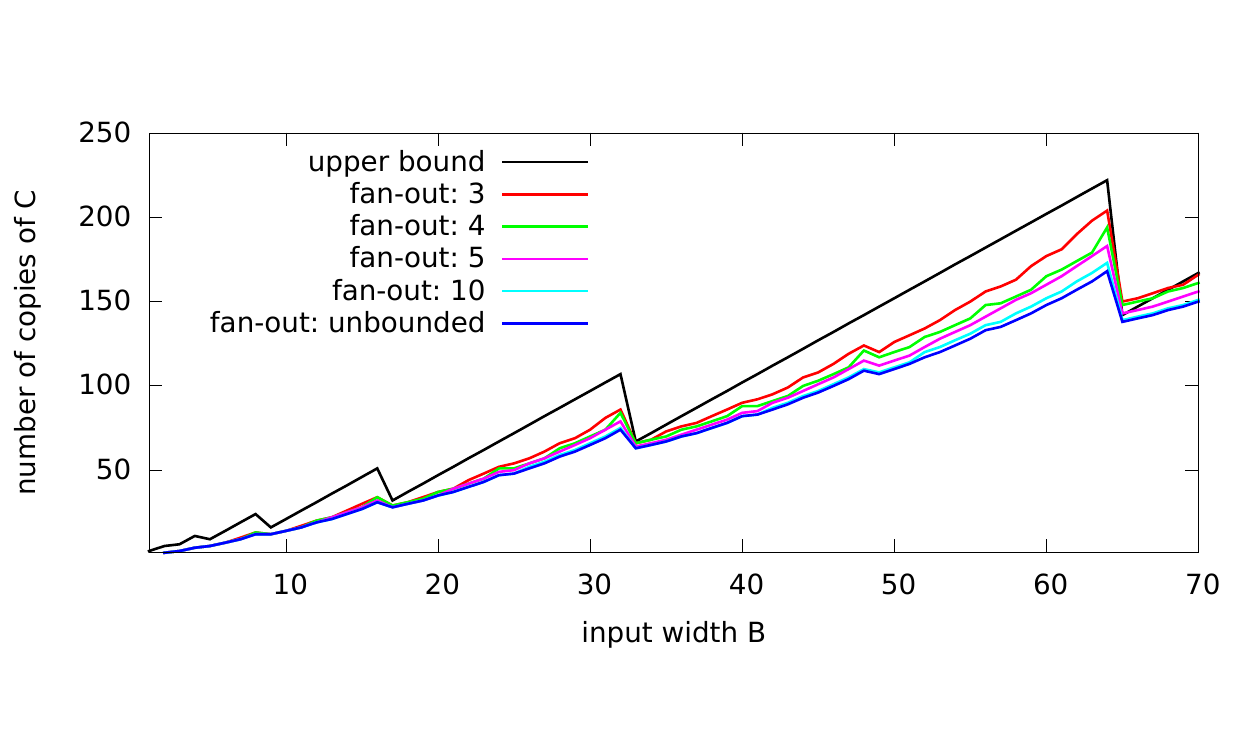}
\caption{Dependence of the size of the modified construction on $f$. For
comparison, the upper bound from Corollary~\ref{cor:rdepth} on the circuit with
unbounded fan-out is shown as well.}
\label{fig:fout}
\end{figure}
We refrain from analyzing the size of the construction for values of $B$ that
are not powers of $2$. However, in Figure~\ref{fig:fout} we plot the exact
bounds (without buffers) for $k=0$ and selected values of $f$ against $B$.

\section{Simulation}\label{sec:simulation}

Separate from and in addition to the proofs from the previous sections, we verify
the correctness of our circuits by VHDL simulation.
To this end, we first need
to specify implementations of the subcircuits computing $\diamond_{\metas}$ and
$\outt_{\metas}$.

\subsection{Gate-Level Implementation of Operators}\label{sec:impop}

From Tables~\ref{tab:diamond} and~\ref{tab:outt}, for $s,b\in \BB{2}$ we can
extract the Boolean formulas
\begin{align*}
(s\diamond b)_1 &= s_1\bar{s}_2 + s_1\bar{b}_1 + \bar{s}_2b_1\\
(s\diamond b)_2 &= \bar{s}_1s_2 + \bar{s}_1b_2 + s_2\bar{b}_2\\
\outt(s,b)_1 &= \bar{s}_1b_2 + \bar{s}_2b_1 + b_1b_2\\
\outt(s,b)_2 &= s_1b_2 + s_2b_1 + b_1b_2\,.
\end{align*}
In general, realizing a Boolean formula $f$ by replacing negation,
multiplication, and addition by inverters, $\ANDD$, and $\ORR$ gates,
respectively, does not result in a circuit implementing
$f_{\metas}$.\footnote{For instance, $(s\diamond b)_1 = s_1\bar{b}_1 +
\bar{s}_2b_1$ as Boolean formula, but the two expressions differ when evaluated
on $s_1=\bar{s}_2=1$ and $b_1=\metas$. The circuits resulting from the different
formulas are implementations of a multiplexer (with select bit $b_1$) and its
closure, respectively.} However, we can easily verify that the above formulas
are disjunctions of all prime implicants of their respective functions. As shown
in~\cite{friedrichs18containing} (see also
\cite{huffman57design}),\footnote{Alternatively, one can manually verify that
these formulas evaluate to the truth tables given in Tables~\ref{tab:diamondM}
and~\ref{tab:outtM}.} in this special case the resulting circuits do implement
the closure\dash---provided the gates behave as in Table~\ref{tab:gates}, which
the implementations given in Figure~\ref{fig:transistorgates} do by
Theorem~\ref{thm:gates_correct}. Using distributive laws (recall that these also
hold in Kleene logic), the above formulas can be rewritten as
\begin{align*}
(s\diamond b)_1 &= s_1(\bar{s}_2+\bar{b}_1) + \bar{s}_2b_1\\
(s\diamond b)_2 &= s_2(\bar{s}_1 + \bar{b}_2) + \bar{s}_1b_2\\
\outt(s,b)_1 &= b_1(b_2+\bar{s}_2) + b_2\bar{s}_1\\
\outt(s,b)_2 &= b_2(b_1+s_1) + b_1s_2\,.
\end{align*}
We see that, in fact, a single circuit with suitably wired (and possibly
negated) inputs can implement all four operations. As for
$\sel_1=\overline{\sel}_2$ the circuit implements a multiplexer with select bit
$\sel_1$, we refer to it as \emph{extended multiplexer,} or $\xmux$ for short.
Its functionality is specified by
\begin{equation*}
\xmux(\sel_1,\sel_2,x,y) \coloneqq y (x+\sel_2)+x \sel_1\,.
\end{equation*}
Figure~\ref{fig:xmux} shows the resulting circuit, and Table~\ref{tab:xmux}
lists how to map inputs to compute $\diamond_{\metas}$ and $\outt_{\metas}$.

We note that this circuit is not a particularly efficient $\xmux$
implementation; a transistor-level implementation would be much smaller.
However, our goal here is to verify correctness and give some initial indication
of the size of the resulting circuits\dash---a fully optimized ASIC circuit is
beyond the scope of this article. In~\cite{date18}, the size of the
implementation is slightly reduced by moving negations. Due to space
limitations, we refrain from detailing this modification here, but note that
Figure~\ref{fig:plot} and~Table~\ref{tab:sorting} take it into account.

\begin{figure}
\centering\small
\begin{tikzpicture}[circuit logic US, every circuit symbol/.style={thick}]
	\node (i1) at (0.7, 2) {$x$};
	\node (i0) at (2.2, 2) {$y$};
	\node (i2) at (0, -.1)  {$\sel_1$};
	\node (i3) at (0, 1.1) {$\sel_2$};
	\node (o0) at (6, .5) {};
	\node[or gate, inputs={nn}] (or1) at (1.5,1) {};
	\node[and gate, inputs={nn}] (and1) at (1.5,0) {};
	\node[and gate, inputs={nn}] (and2) at (3,1.1) {};
	\node[or gate, inputs={nn}] (or2)  at (4.5,.5) {};
        \draw[thick] (i3) -- (or1.input 1);
        \filldraw (.7, .9) circle (1pt);
        \draw[thick] (i1) ++(down:1.1) -| (or1.input 2);
        \draw[thick] (i1) -- ++(down:1.9) -| (and1.input 1);
        \draw[thick] (i2) -- (and1.input 2);
        \draw[thick] (or1.output) -- (and2.input 2);
        \draw[thick] (i0) -- ++(down:.8) -| (and2.input 1);
        \draw[thick] (and2.output) -- ++(right:.3) -- ++(down:.5) -| (or2.input 1);
        \draw[thick] (and1.output) -- ++(right:1.8) -- ++(up:.4) -| (or2.input 2);
        \draw[thick] (or2.output) -- (o0);
\end{tikzpicture}
\caption{$\xmux$ circuit, used to implement $\diamond_{\metas}$ and
$\outt_{\metas}$.}
\label{fig:xmux}
\end{figure}

\begin{table}
\centering
\caption{Wiring an $\xmux$ to compute the various operators.}\label{tab:xmux}
\begin{tabular}{|c |c |c |c ||c |}
	\hline
	$\sel_1$ & $\sel_2$ & $x$ & $y$ & $\xmux(\sel_1,\sel_2,a,b)$\\ \hline \hline
	\rule{0pt}{8pt} $b_1$ & $\bar{b}_1$ & $\bar{s}_2$ & $s_1$ & $(s \diamond_{\metas} b)_1$\\
	$b_2$ & $\bar{b}_2$ & $\bar{s}_1$ & $s_2$ & $(s \diamond_{\metas} b)_2$\\
	$\bar{s}_1$ & $\bar{s}_2$ & $b_2$ & $b_1$ & $\outt_{\metas}(s,b)_1$\\
	$s_2$ & $s_1$ & $b_1$ & $b_2$ & $\outt_{\metas}(s,b)_2$ \\
	\hline
\end{tabular}
\end{table}

\subsection{Putting it All Together}

We now have all the pieces in place to assemble a containing $\twosort(B)$
circuit. By Theorem~\ref{thm:assoc}, $\diamond_{\metas}$ is associative. Thus,
from a given implementation of $\diamond_{\metas}$ (e.g., two copies of
the circuit from Figure~\ref{fig:xmux} with appropriate wiring and negation,
cf.~Table~\ref{tab:xmux}) we can construct $\ppc_{\diamond_{\metas}}(B-1)$
circuits of small depth and size, as shown in Section~\ref{sec:ppc}. We can
combine such a circuit with an $\outt_{\metas}$ implementation (again, two
$\xmux$es with appropriate wiring and negation will do) as shown in
Figure~\ref{fig:sortppc} to obtain our $\twosort(B)$ circuit.

\subsection{Simulation Setup}

We implemented the design given in Figure~\ref{fig:sortppc} on
register-transfer-level using the $\ppc_{\diamond_{\metas}}(B-1)$ circuit given
by Theorem~\ref{thm:ppc} for $k=0$.\footnote{For $k>0$, fan-out becomes an
issue, requiring the more involved constructions provided by
Theorem~\ref{thm:ppc_fanout}. However, the resulting numbers would be
inaccurate, and a detailed comparison based on optimized ASIC implementations is
beyond the scope of this work.} \emph{Quartus} by {Altera} is used for design
entry, which in our case mainly consists of checking correct implementation.
After design entry we use \emph{ModelSim} by {Altera} for behavioral simulation.
Note that we must not simulate the preprocessed Quartus output, because
processing may compromise metastability-containing behavior. Instead, we
simulate pure VHDL. Metastable signals are simulated using VHDL signal $X$,
because its behavior matches the worst-case behavior assumed for $\metas$.

\begin{figure}\centering
\begin{tikzpicture}[scale=0.99]\small
\node (in1) at (0,5) {$g_1 h_1$};
\node (in2) at (2,5) {$g_{B-2}h_{B-2}$};
\node (in3) at (4,5) {$g_{B-1}h_{B-1}$};

\draw[thick] (-.5,2.4) rectangle (4.5,4);
\node at (2,3.2) {$\ppc_{\diamond_{\metas}}(B-1)$};
\node at (0,3.8) {\small$\inp_{1}$};
\node at (1,3.8) {\small$\hdots$};
\node at (2,3.8) {\small$\inp_{B-2}$};
\node at (4,3.8) {\small$\inp_{B-1}$};
\node at (0,2.6) {\small$\outp_{1}$};
\node at (1,2.6) {\small$\hdots$};
\node at (2,2.6) {\small$\outp_{B-2}$};
\node at (4,2.6) {\small$\outp_{B-1}$};

\node (zero) at (-2,2.5) {$00$};

\node (in5) at (-1,2) {$g_1h_1$};
\node (in6) at (1,2) {$g_2h_2$};
\node (in7) at (3,2) {$g_{B-1}h_{B-1}$};
\node (in8) at (5,2) {$g_{B}h_{B}$};

\node[rectangle, draw, rounded corners, thick] (sel1) at (-1,1) {$\outt_{\metas}$};
\node[rectangle, draw, rounded corners, thick] (sel2) at (1,1) {$\outt_{\metas}$};
\node at (2,1) {$\hdots$};
\node[rectangle, draw, rounded corners, thick] (sel3) at (3,1) {$\outt_{\metas}$};
\node[rectangle, draw, rounded corners, thick] (sel4) at (5,1) {$\outt_{\metas}$};

\node (out1) at (-1,0) {$g'_1h'_1$};
\node (out2) at (1,0) {$g'_2h'_2$};
\node (out3) at (3,0) {$g'_{B-1}h'_{B-1}$};
\node (out4) at (5,0) {$g'_{B}h'_{B}$};

\draw[thick] (in1) -- ++(down:1);
\draw[thick] (in2) -- ++(down:1);
\draw[thick] (in3) -- ++(down:1);
\draw[thick] (in5) -- (sel1);
\draw[thick] (in6) -- (sel2);
\draw[thick] (in7) -- (sel3);
\draw[thick] (in8) -- (sel4);
\draw[thick] (zero) -- (-2,1.8) -- (sel1);
\draw[thick] (0,2.4) -- (0,1.8) -- (sel2);
\draw[thick] (2,2.4) -- (2,1.8) -- (sel3);
\draw[thick] (4,2.4) -- (4,1.8) -- (sel4);
\draw[thick] (sel1) -- (out1);
\draw[thick] (sel2) -- (out2);
\draw[thick] (sel3) -- (out3);
\draw[thick] (sel4) -- (out4);
\end{tikzpicture}
\caption{Constructing $\twosort(B)$ from $\ppc_{\diamond_{\metas}}(B-1)$ and
$\outt_{\metas}$.}
\label{fig:sortppc}
\end{figure}

The correctness of this construction follows from Theorems~\ref{thm:statemetasi}
and~\ref{thm:out}, where we can plug in any $\ppc_{\diamond_{\metas}}(B-1)$
circuit, cf.~Section~\ref{sec:ppc}. For the circuits derived by relying on the
$\xmux$ circuit from Figure~\ref{fig:xmux}, we independently confirmed this
via simulation.

\subsection{Results}

\begin{figure}
\begin{center}
\includegraphics[trim={0cm 2cm 4.8cm 0cm}, clip, width=.8\columnwidth]{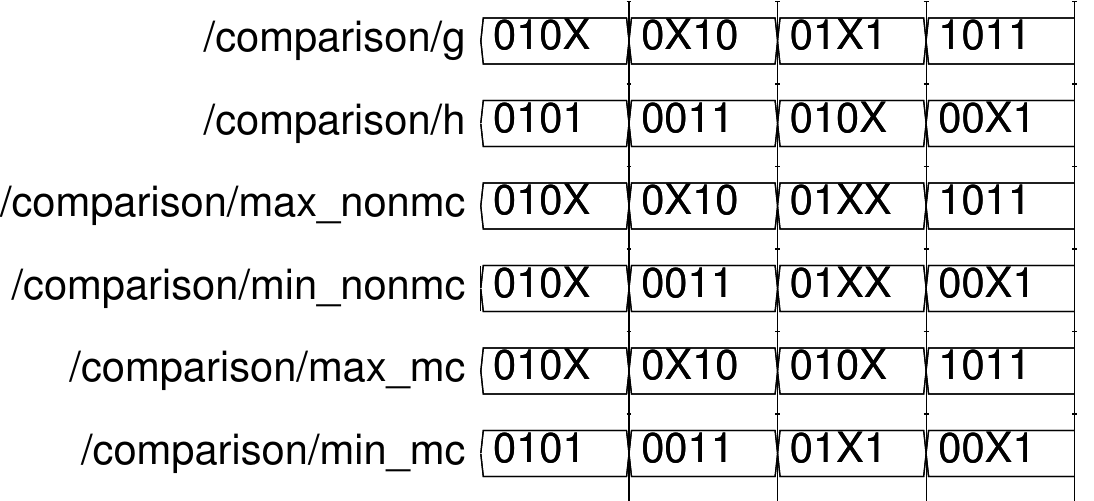}
\caption{Excerpt from a simulation for $4$-bit inputs, where $X=M$. The rows
show (from top  to bottom) the inputs $g$ and $h$, both outputs of the simple
non-containing circuit, and both outputs of our design. As inputs $g$ and $h$ we
randomly generated valid strings. Columns $1$ and $3$ show that the simpler
design fails to implement a $\twosort(4)$ circuit.}
\label{fig:wavereg}
\end{center}
\end{figure}

For the implementation of $\ppc_{\diamond_{\metas}}(B-1)$ we used the circuits
from Theorem~\ref{thm:ppc}, i.e., we did not make use of the extension to
constant fan-out. We exhaustively checked the design from
Figure~\ref{fig:sortppc} for $B$ up to $12$ (and all feasible $k$).
Simulation shows that the design works correctly for several levels of
recursion, e.g., when regarding $B=1$ and $B=2$ as simple base cases, $B=12$ implies $3$
levels of recursion for both patterns. We refrained from simulating the constant
fan-out construction, because it simply replicates intermediate results without
changing functionality.

\subsection{Comparison to Baseline}

After behavioral simulation, we continue with a comparison of our
design and a standard sorting approach $\binary(B)$. As mentioned earlier,
the $\twosort(B)$ implementation given in Figure~\ref{fig:sortppc} is
slightly optimized by pulling out a negation from the operators in every
recursive step~\cite{date18}.

After design entry as described above, we use \emph{Encounter
RTL Compiler} for synthesis and \emph{Encounter} for place and route. Both tools are part
of the Cadence tool set and in both steps we use NanGate $45$\,nm Open Cell Library as
a standard cell library.

Since metastability-containing circuits may include additional gates that are
not required in traditional Boolean logic, Boolean optimization may compromise
metastability-containing properties~\cite{date17}. Accordingly, we were forced
to disable optimization during synthesis of the circuits.

\paragraph*{\textbf{Binary Benchmark $\binary$}}

\begin{figure}
\centering
\includegraphics[width=\linewidth]{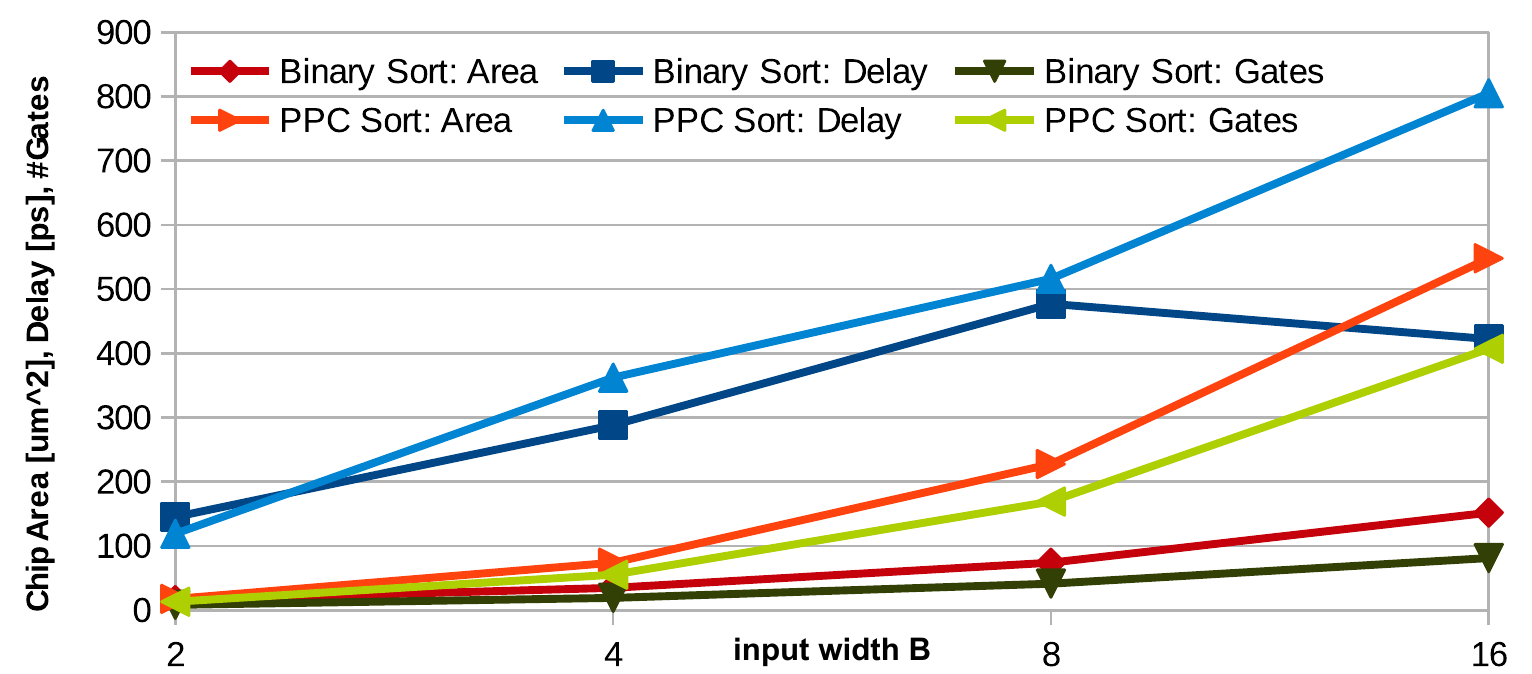}
\caption{Comparison of our solution PPC Sort to a standard non-containing one.
For the latter, the unexpected delay reduction at $B=16$ is the result of
automatic optimization with more powerful gates, which our solution does not
use.}
\label{fig:plot}
\end{figure}

In short, $\binary$ consists of a simple VHDL statement comparing two binary
encoded inputs and outputting the maximum and the minimum, accordingly. It
follows the same design process as $\twosort$, but then undergoes optimization
using a more powerful set of basic gates. For example, the standard cell library
provides prebuild multiplexers. These multiplexers are used by $\binary$, but
not by $\twosort$, as they are not metastability-containing. We stress that
these more powerful gates provide optimized
implementations of multiple Boolean functions, yet each of them is still counted
as a single gate. Thus, comparing our design to the binary design in terms of
gate count, area, and delay disfavors our solution. Moreover, we noticed that
the optimization routine switches to employing more powerful gates when going
from $B=8$ to $B=16$ (cf. Figure~\ref{fig:plot}), resulting in a \emph{decrease}
of the delay of the $\binary$ implementation.

Nonetheless, our design performs comparably to the non-containing binary design
in terms of delay, cf.~Figure~\ref{fig:plot} and Table~\ref{tab:sorting}. This
is quite notable, as further optimization is possible by optimizing our design
on the transistor level, with significant expected gains. The same applies to
gate count and area, where a notable gap remains. Recall, however, that the
$\binary$ design hides complexity by using more advanced gates and does not
contain metastability.

We emphasize that we refrained from optimizing the design by making use of all
available gates or devising transistor-level implementations, as
such an approach is tied to the utilized library or requires design of
standard cells.

\begin{table*}\scriptsize
\begin{center}
\caption{Simulation results for metastability-containing sorting networks with
$n\in \{4,7,10\}$ for $B$-bit inputs. $\tensortc$ optimizes gate count~\cite{codish2014twenty}, $\tensortd$ optimizes depth~\cite{bundala2014optimal}; for $n\in \{4,7\}$, the sorting networks are optimal w.r.t.\
both measures. Simulation results are: (i) number of gates, (ii) postlayout area $[\mu m^2]$ and
(iii) prelayout delay $[ps]$.}
\label{tab:sorting}
\begin{adjustbox}{center}
\begin{tabular}{| c | c ||  c | c | c || c| c |c || c| c | c || c | c | c |}
\hline
\multirow{2}{*}{$B$} & \multirow{2}{*}{Circuit}   & \multicolumn{3}{|c ||}{$\foursort$} & \multicolumn{3}{|c ||}{$\sevensort$} & \multicolumn{3}{|c ||}{$\tensortc$} & \multicolumn{3}{|c |}{$\tensortd$}\\
                 \cline{3-14}
     &           & gates & area & delay & gates & area & delay & gates & area & delay & gates & area & delay
\tabularnewline
\hline
\hline
\multirow{3}{*}{$2$}& our work & 65 & 87.402 & 357 & 208 & 279.741 & 714 & 377 & 506.912 & 912 & 403 & 541.968 & 833
\tabularnewline & $\binary$ & 40 & 77.91 &478 & 128 & 249.326 &953 & 232 & 451.815 &1284 & 248 & 483 &1145
\tabularnewline
\hline
\multirow{3}{*}{$4$}& our work & 275 & 368.641 & 640 & 880 & 1179.528 & 1014 & 1595 & 2137.905 & 1235 & 1705 & 2285.514 & 1133
\tabularnewline & $\binary$ & 95 & 172.935 & 906 & 304 & 553.28 & 1810 & 551 & 1002.848 & 2429& 589 & 1072.099 & 2143
\tabularnewline
\hline
\multirow{3}{*}{$8$}& our work & 845 & 1136.184 & 1396 & 2704 & 3636.08 & 1921 & 4901 & 6590.283 & 2179 & 5239 & 7044.541 & 2059
\tabularnewline & $\binary$ & 205 & 368.641 &1475 & 656 & 1179.528 &2948 & 1189 & 2137.905 &3945& 1271 & 2285.514 &3470
\tabularnewline
\hline
\multirow{3}{*}{$16$}& our work  & 2035 & 2739.961 & 2069  & 6512 & 8767.374 & 3396 & 11803 & 15891.12 & 4030 & 12617 & 16987.194 & 3844
\tabularnewline  & $\binary$ & 405 & 530.67 &1298 & 1296 & 2425.99 &2600 & 2349 & 4397.085 &3474 & 2511 & 4700.304 &3050
\tabularnewline
\hline
\end{tabular}
\end{adjustbox}
\end{center}
\end{table*}

\section{Conclusions}\label{sec:conclusion}
In this work, we demonstrated that efficient metastability-containing sorting
circuits are possible. Our results indicate that optimized implementations can
achieve the same delay as non-containing solutions, without a dramatic increase
in circuit size. This is of high interest to an intended application
motivating us to design MC sorting circuits: fault-tolerant high-frequency clock
synchronization. Sorting is a key step in envisioned implementations
(cf.~\cite{friedrichs18containing,huemer2016synchronization}) of the Lynch-Welch
algorithm~\cite{welch88} with improved precision of synchronization. The
complete elimination of synchronizer delay is possible due to the efficient MC
sorting networks presented in this article; enabling an increment of the rate at which
clock corrections are applied, significantly reducing the negative impact of
phase drift of local clock sources on the precision of the
algorithm~(cf.~\cite{khanchandani18}).

This goal will necessitate to devise optimized ASIC implementations of our
circuits. The novel PPC circuits we devised in Section~\ref{sec:ppc} are an
important contribution towards this end. Note that it is crucial to take
into account both depth and fan-out for devising low-delay circuits. Hence,
follow-up work needs to compare the existing and our novel design based on
suitable metrics that take both into account to reliably predict the achieved
trade-offs between delay, area, and energy consumption of circuits. Note that
this is of relevance beyond the specific application of MC sorting: PPC circuits
lie at the heart of adder designs, implying that even a minor improvement can
have significant impact on the overall performance of computing devices!

\paragraph*{\textbf{MC Control Loops}}
More generally speaking, MC circuits like those presented here are of interest
in mixed-signal control loops whose performance depends on very short response
times. When analog control is not desirable, traditional solutions incur
synchronizer delay before being able to react to any input change. Using MC
logic saves the time for synchronization, while metastability of the output
corresponds to the initial uncertainty of the measurement; thus, the same
quality of the computational result can be achieved in shorter time. Note that
our circuits are purely combinational, so they can be used in both clocked
and asynchronous control logic.

Obvious examples of such control loops are clock synchronization circuits, but
MC has been shown to be useful for adaptive voltage control~\cite{FKLW-async-18}
and fast routing with an acceptable low probability of data
corruption~\cite{TFL-async-17} as well. This type of application suggests to
explore whether efficient circuits exist for a wider range of arithmetic
operations, like e.g.\ addition or (possibly approximate) multiplication.

\paragraph*{\textbf{Redundant Encoding and Addition}}
On the theoretical side, our results are to be contrasted with the exponential
gap between the size of non-containing and MC circuits shown
in~\cite{ikenmeyer18complexity}. This work raised the question for which classes
of functions small MC circuits exist. Given that Ladner and Fischer proved that
the PPC task can be solved efficiently for any constant-sized state
machine~\cite{ladner1980parallel}, it was natural to ask whether this result can
be extended to MC computations. In follow-up work, we show that indeed this
holds true for any constant-sized FSM~\cite{BLM19}. However, when applying this
result to addition, unlike for sorting (where the underlying operations are
$\max$ and $\min$) uncertainty of inputs adds up. This means that Gray code can
support meaningful computations only if the \emph{total} uncertainty of all
addends is at most $1$.

Accordingly, in~\cite{BLM19} we also consider redundant encodings, showing that
using $k$ (roughly) redundant bits, an uncertainty of $\lfloor(k+1)/2\rfloor$
can be tolerated without loss of precision. Combined with the above result on
transducers, this yields a meaningful notion of MC addition that allows for
efficient circuits. As, essentially, the redundant bits are used as a unary
code, it should be straightforward to apply the techniques from this article
to obtain efficient sorting circuits with the encoding from~\cite{BLM19}. We
remark that the encoding from~\cite{BLM19} turns out to be identical to that of
the output of suitable time-to-digital converters~\cite{tdc16}, so relaxing
their output constraints to achieve better average-case performance would
provide valid input for sorting circuits that accept inputs encoded in this
manner.

We believe that these results suggest applicability of our techniques to a wide
range of mixed-signal control loops and call for future work further exploring
to which extend basic arithmetics can be realized by efficient MC circuits.

\paragraph*{\textbf{Acknowledgments}}
We thank Attila Kinali and the anonymous reviewers for valuable input. This
project has received funding from the European Research Council (ERC) under the
European Union's Horizon 2020 research and innovation programme (grant agreement
716562).

\bibliographystyle{plain}
\bibliography{comp_short}

\end{document}